\newcommand{\rl}[2]{\rule{#1}{0pt}\rule{0pt}{#2}}
\newcommand{\rsdef}[3]{#1' #3 #2}
\newcommand{\rsdefa}[2]{\rsdef{#1}{#2}{\!\sim\!}}
\newcommand{\calF}{\mathcal{F}}
\newcommand{\calX}{\mathcal{X}}
\newcommand{\calU}{\mathcal{U}}
\newcommand{\calY}{\mathcal{Y}}
\newcommand{\calN}{\mathcal{N}}
\newcommand{\smsync}[1]{\smash{\:\sync{#1}\:}}
\newcommand{\Assem}{\mathit{Assembler}}
\newcommand{\assemtime}{\mathit{atime}}
\newcommand{\taketime}{\mathit{ttime}}
\newcommand{\examtime}{\mathit{etime}}
\newcommand{\resttime}{\mathit{rtime}}
\newcommand{\Feed}{\mathit{Feed}}
\newcommand{\arr}{\mathit{arrivals}}
\newcommand{\dep}{\mathit{departures}}
\newcommand{\wt}{\mathit{wt}}
\newcommand{\wa}{\mathit{wa}}
\newcommand{\AProc}{\mathit{AProc}}
\newcommand{\Timer}{\mathit{Timer}}
\newcommand{\Machine}{\mathit{Machine}}
\newcommand{\Inspect}{\mathit{Inspect}}
\newtheorem{definition}{Definition}
\newtheorem{proposition}{Proposition}
\newtheorem{theorem}{Theorem}
\newcommand{\ev}[1]{\underline{\mathrm{#1}}}
\newcommand{\sev}[1]{\overline{\mathrm{#1}}}
\newcommand{\ssev}[1]{\mathrm{#1}}
\newcommand{\Ev}{\mathcal{E}}
\newcommand{\ec}{\mathit{ec}}
\newcommand{\EC}{\mathit{EC}}
\newcommand{\iname}[1]{\mathit{#1}}
\newcommand{\const}{\mathit{const}}
\newcommand{\linear}{\mathop{\mathit{linear}}}
\renewcommand{\Ac}{\mathcal{A}}
\newcommand{\iv}{\mathop{\mathit{iv}}}
\newcommand{\ID}{\mathit{ID}}
\newcommand{\IN}{\mathit{IN}}
\newcommand{\IT}{\mathit{IT}}
\newcommand{\D}{\displaystyle}
\newcommand{\lts}[1]{\stackrel{#1}{\longrightarrow}}
\newcommand{\multi}{\mathop{\mathit{mult}}}
\newcommand{\ds}{\mathop{\mathrm{ds}}}
\newcommand{\st}{\mathop{\mathrm{st}}}
\newcommand{\assign}[1]{\llbracket #1 \rrbracket}
\newcommand{\eval}[1]{\llbracket #1 \rrbracket}
\newcommand{\V}{\mathcal{V}}
\newcommand{\ConSys}{\mathit{ConSys}}
\newcommand{\Con}{\mathit{Con}}
\newcommand{\CF}{\mathcal{F}}
\newcommand{\bbR}{\mathbb{R}}
\newcommand{\calH}{\mathcal{H}}
\newcommand{\blang}{\bigl\langle}
\newcommand{\brang}{\bigr\rangle}
\newcommand{\cf}[1]{\langle #1 \rangle}
\newcommand{\St}{\mathcal{S}}
\newcommand{\is}{\mathsf{is}}
\newcommand{\es}{\mathsf{ev}}
\newcommand{\vr}[1]{\mathbf{#1}}
\newcommand{\TC}[1]{\mathcal{TC}_{#1}}
\newcommand{\TD}[1]{\mathcal{TD}_{#1}}
\newcommand{\TS}[1]{\mathcal{TS}_{#1}}
\newcommand{\TDSHA}[1]{\mathcal{T}_{#1}}
\newcommand{\TDSHAtuple}[1]{(Q_{#1},\vr{X_{#1}},\TC{#1},\TD{#1},\TS{#1},\mathrm{init}_{#1},\mathcal{E}_{#1})}
\newcommand{\initial}[1]{\mathrm{init}_{#1}}
\newcommand{\modes}[1]{Q_{#1}}
\newcommand{\vrX}[1]{\mathbf{X_{#1}}}
\newcommand{\events}[1]{\mathcal{E}_{#1}}
\newcommand{\tprod}[1]{\otimes_{#1}}
\newcommand{\mode}[1]{q_{#1}}
\newcommand{\initmode}[1]{q_{#1}^{init}}
\newcommand{\stoich}[1]{\mathbf{s}_{#1}}
\newcommand{\rf}[1]{f_{#1}}
\newcommand{\exit}[1]{q_{1}^{#1}}
\newcommand{\enter}[1]{q_{2}^{#1}}
\newcommand{\priority}[1]{w_{#1}}
\newcommand{\guard}[1]{g_{#1}}
\newcommand{\res}[1]{r_{#1}}
\newcommand{\e}[1]{e_{#1}}
\newcommand{\inp}[1]{\mathrm{inp}_{#1}}
\newcommand{\act}{\mathit{act}}
\newcommand{\rs}{\mathit{res}}
\renewcommand{\TS}[1]{\mathcal{T\hspace*{-2.1pt}S}_{#1}}
\renewcommand{\TC}[1]{\mathcal{T\hspace*{-1.5pt}C}_{#1}}
\renewcommand{\TD}[1]{\mathcal{T\hspace*{-1.1pt}D}_{#1}}
\newcommand{\pc}{{\,:\,}}
\newcommand{\Sys}{\mathit{Sys}}
\newcommand{\W}{\mathcal{W}}
\newcommand{\calC}{\mathcal{C}}
\newcommand{\calS}{\mathcal{S}}
\newcommand{\prs}{\mathsf{pr}}
\newcommand{\ad}{\mathsf{sum}}
\newcommand{\match}{\mathsf{match}}
\newcommand{\out}{\mathsf{out}}
\newcommand{\lms}{\{\hspace*{-2.8pt}|\,}
\newcommand{\rms}{\,|\hspace*{-2.7pt}\}}
\newcommand{\true}{\mathit{true}}
\newcommand{\syncstar}{\smash{\sync{\rule{0pt}{2.8pt}}}\hspace*{-9.9pt}\raisebox{-2.8pt}[0pt][0pt]{$\scriptscriptstyle
*$}\hspace*{8pt}}
\newcommand{\Input}{\mathit{Input}}
\newcommand{\Output}{\mathit{Output}}
\newcommand{\Buffer}{\mathit{Buffer}}
\newcommand{\off}{\mathit{off}}
\newcommand{\ins}{\mathsf{in}}
\newcommand{\evs}{\mathsf{ev}}
\newcommand{\lsxrightarrow}[1]{\raisebox{-2.0pt}[0pt][0pt]{$\xrightarrow{\raisebox{0.2pt}[0pt][0pt]{$\scriptstyle{#1}$}}$}}
\newcommand{\mlsxrightarrow}[1]{\raisebox{-2.0pt}[0pt][0pt]{$\xrightarrow{\raisebox{1.6pt}[0pt][0pt]{$\scriptstyle{#1}$}}$}}
\newcommand{\lsexrightarrow}[1]{\raisebox{-2.0pt}[0pt][0pt]{$\xrightarrow{\raisebox{0.2pt}[0pt][0pt]{$\,\scriptstyle\ssev{#1}\,$}}$}}
\newcommand{\lxrightarrow}[1]{\raisebox{-1.0pt}[0pt][0pt]{$\xrightarrow{\raisebox{-1.5pt}[0pt][0pt]{$#1$}}$}}
\newcommand{\rmdefl}{\smash{\raisebox{-2.0pt}[0pt][0pt]{$\,\rmdef\,$}}}
\title{Stochastic HYPE: Flow-based modelling of stochastic hybrid
systems}
\author{Luca Bortolussi 
\institute{Department of Mathematics and Geosciences\\
University of Trieste}
\email{luca@dmi.units.it}
\and
Vashti Galpin \qquad\qquad Jane Hillston
\institute{Laboratory for Foundations of Computer Science\\
School of Informatics, University of Edinburgh}
\email{Vashti.Galpin@ed.ac.uk \quad Jane.Hillston@ed.ac.uk}
}
\begin{document}

\maketitle

\begin{abstract}
\noindent
Stochastic HYPE is a novel process algebra that models stochastic,
instantaneous and continuous behaviour. It develops the flow-based
approach of the hybrid process algebra HYPE by replacing non-urgent
events with events with exponentially-distributed durations and
also introduces random resets. The random resets allow for general
stochasticity, and in particular allow for the use of event durations
drawn from distributions other than the exponential distribution.
To account for stochasticity, the semantics of stochastic HYPE
target piecewise deterministic Markov processes (PDMPs), via
intermediate transition-driven stochastic hybrid automata (TDSHA)
in contrast to the hybrid automata used as semantic target for HYPE.
Stochastic HYPE models have a specific structure where the controller
of a system is separate from the continuous aspect of this system
providing separation of concerns and supporting reasoning.  A novel
equivalence is defined which captures when two models have the same
stochastic behaviour (as in stochastic bisimulation), instantaneous
behaviour (as in classical bisimulation) and continuous behaviour.
These techniques are illustrated via an assembly line example.

%
\end{abstract}

\section{Introduction}
\label{sec:intro}

In the last decade there has been increasing interest in capturing
stochastic behaviour in models of computational systems.  The
motivations for this work range from being able to capture probabilistic
algorithms, which gain efficiency through non-determinism, to
abstraction over an uncertain environment, where a complex variety of
possible responses may be represented as a probability distribution
rather than a single response.  In this paper we introduce
stochastic behaviour to HYPE, a formalism previously proposed to model
hybrid systems \cite{HYPE-journal}.  Our motivation is to capture uncertain
response events in a quantified manner.

Hybrid systems mix discrete control and continuous evolution, and can
arise in a number of natural and engineering contexts.  A number of
process algebras for modelling hybrid systems have emerged in recent
years including $\smash{\mathit{ACP}_{hs}^{srt}}$ \cite{BergBM:05a},
hybrid $\smash{\Chi}$ \cite{BeekBMRRS:06a}, $\phi$-calculus
\cite{RounRS:03a}
and HyPA \cite{CuijCR:05a}. 
\cite{KhadK:08a} shows
in her thesis that there are
substantial differences in the approaches taken by these process
algebras relating to syntax, semantics, discontinuous behaviour,
flow-determinism, theoretical results and availability of tools.
Nevertheless there are also strong similarities in their approaches to
capturing the continuous evolution of the system: the dynamic behaviour
of each continuous variable must be fully described by ordinary
differential equations (ODEs) given explicitly in the syntax of the
process algebra.  Furthermore, they all adopt a notion of state which
incorporates an evaluation of continuous variables.  

HYPE adopts a different approach which supports a finer-grained view on
the dynamics of the system.  As it is a process algebra, models are
constructed compositionally, but here each subcomponent is itself
constructed from a number of \emph{flows} or influences, represented
abstractly.  Ultimately each flow will correspond to a term in the ODEs
governing the evolution of a continuous variable, but at the process
algebra level the details of the mathematical form of this influence are
left unspecified.  This separation of concerns allows us to consider the
logical structure of the model (the process algebra description) as
distinct from any particular dynamic realisation.    We believe that the
use of flows as the basic elements of model construction has advantages
such as ease and simplification of modelling. This approach assists the
modeller, allowing them to identify smaller or local descriptions of the
model and then to combine these descriptions to obtain the larger
system.

Discrete changes within a system are captured as \emph{events}.
Each event may change the dynamic nature of a flow, possibly also
resetting the current value of continuous variables.  Each event
has a triggering condition which may be dependent on the current
value of the continuous variables.  In HYPE as presented by
\cite{HYPE-journal}, events are either \emph{urgent} or \emph{non-urgent}.
Urgent events are forced to occur as soon as their trigger becomes
true, and if multiple urgent events become true simultaneously, one
of them occurs, and the remaining events may or may not occur after
this event depending on whether their trigger condition remains
true.  In contrast, non-urgent actions may delay indefinitely.

HYPE can be viewed as belonging to the class of quantified process
algebras, including probabilistic, stochastic and timed process algebras
which support reasoning over models to quantitatively analyse the
behaviour of a system, for example considering the relative likelihood
of particular sub-behaviours, or the expected time until a condition is
met.  However, the inclusion of non-urgent events within the model left
some unspecified behaviour that could not be quantified and therefore
limited the extent to which quantified analysis could be applied.  
Thus in this paper we define an extension of HYPE, stochastic HYPE, in
which the non-urgent events of HYPE are replaced by stochastic events.
Such events remain non-urgent, as in not occurring immediately, but now
their delay is governed by an exponential distribution. 
Additionally through the use of random variables in resets, other
forms of stochasticity can be included in models, and timers can
be used to achieve random delays from an arbitrary distribution.  So the
non-urgent events of HYPE are effectively replaced in stochastic
HYPE by random events based on any distribution. The underlying
mathematical model used by stochastic HYPE is Transition Driven
Stochastic Hybrid Automata (TDSHA)
\cite{PA:Bortolussi:2010:HybridDynamicsStochProg:TCS}, based on
piecewise deterministic Markov process (PDMP) \cite{Davis93}.
We work with TDSHA because they are a better than operational match with
stochastic HYPE than PDMPs.

The contributions of this paper include a compositional language
to describe the dynamics of systems with stochastic, instantaneous
and continuous behaviour, in a modular fashion to ensure models are
straightforward to modify. This requires mapping to a mathematical
model of these three behaviour types to TDSHA, related to PDMP.
Furthermore, a new bisimulation is defined to enable formal reasoning
about differences and similarities in model behaviour. These
contributions are illustrated through a non-trivial example of an
assembly line.

The remainder of this paper is structured as follows.  In Section 2 we
illustrate the stochastic HYPE modelling language,
through the description of a simple network example, and the formal
syntax of stochastic HYPE.  The operational
semantics is defined in terms of \emph{configurations}
which capture the influences at play within the system rather than
explicit values of continuous variables, is presented in Section 3.
The
dynamic interpretation of stochastic HYPE models is given in terms of a
class of hybrid automata called Transition-Driven Stochastic Hybrid
Automata (TDSHA) \cite{PA:Bortolussi:2010:HybridDynamicsStochProg:TCS}
which can themselves be mapped to PDMPs
\cite{Davis93}.  The TDSHA formalism is presented in Section 4 and the
mapping from stochastic HYPE to TDSHA is defined in Section 5.  This
section also considers the impact of stochastic events on the notion
of well-behavedness.  A HYPE model is well-behaved if it cannot
execute an infinite number of simultaneous instantaneous events.
Introducing stochastic events to a model
that was previously not well-behaved, can lead to well-behavedness. Section 6 introduces
the main example of the paper and illustrates the expressiveness of
stochastic HYPE through an optimisation problem.
Semantic equivalences play an important
role in reasoning about process algebra models, and a major contribution
of this paper is a suitable bisimulation for stochastic HYPE models.
In Section 7, we show why existing bisimulations are not sufficient and
present a bisimulation
for stochastic HYPE that captures the notions that two models share the
same behaviours for all three behaviour types.
In Section 8, the use 
of this bisimulation and associated results are demonstrated on the
example from Section 6.
Then Section 9 presents related work and we conclude in Section 10.
Online appendices contain proofs and supplementary material.

Preliminary work on stochastic HYPE has been published by
\cite{BortBGH:11a}; here we define a more elegant mapping to TDSHA, add
random resets
and present new results on bisimulation equivalences.

\section{Stochastic HYPE Definition}
\label{sec:HYPE}

\begin{figure}[!t]
{\renewcommand{\arraystretch}{1.00}
$$\begin{array}{rcl}
\Input & \rmdef & 
 \sev{on}_{in}\pc(in,r_{in},\const).\Input +
 \sev{off}_{in}\pc(in,0,\const).\Input +\\
 & & \ev{full}\pc(in,0,\const).\Input +
 \ev{init}\pc(in,0,\const).\Input               \\
\Output & \rmdef &   
 \sev{on}_{out}\pc(out,-r_{out},\const).\Output +
 \sev{off}_{out}\pc(out,0,\const).\Output +\\
 & &  \ev{empty}\pc(out,0,\const).\Output +
 \ev{init}\pc(out,0,\const).\Output \\
\mathit{Drop} & \rmdef & \ev{init}\pc(f,0,\const).\mathit{Drop} +
                    \ev{fail}\pc(f,0,\const).\mathit{Drop} \\
\Timer & \rmdef & \ev{init}\pc(t,1,\const).\Timer \\
\\
\Con_{in} & \rmdef & \sev{on}_{in}.\Con_{in}' \qquad \Con_{in}' \:\rmdef\:
\sev{off}_{in}.\Con_{in} + \ev{full}.\Con_{in}\\
\Con_{out} & \rmdef & \sev{on}_{out}.\Con_{out}' \quad\, \Con_{out}' \:\rmdef\:
\sev{off}_{out}.\Con_{out} + \ev{empty}.\Con_{out}\\
\Con_\mathit{fail} & \rmdef & \ev{fail}.\Con_\mathit{fail} \\
& & \\
\Buffer & \rmdef & (\Input \syncstar \Output \syncstar \mathit{Drop}
\syncstar \Timer) \syncstar \ev{init}.(\Con_{in} \parallel
\Con_{out}
\parallel \Con_\mathit{fail}) \\
\end{array}$$

{\renewcommand{\arraystretch}{1.2}
$$\begin{array}{rclrcl}
\V & = & 
\multicolumn{4}{c}{\{B,T,C,D\} \quad
\iv(in) \: = \: B \quad \iv(out) \: = \: B \quad
\iv(f) \: = \: B \quad \iv(t) \: = \: T} \\
\\
\ec(\ev{init}) & = & \multicolumn{4}{l}{(true,\;\rsdefa{B}{b_0} \:\wedge\:
\rsdefa{T}{0} \:\wedge\: \rsdefa{C}{0}
\:\wedge\: \rsdefa{D}{\mathrm{ln\ }\calN(\Delta,\xi)})} \\
\ec(\ev{fail}) & = & \multicolumn{4}{l}{(T=C\!+\!D,
\;\rsdefa{C}{T} \:\wedge\: \rsdefa{D}{\mathrm{ln\ }\calN(\Delta,\xi)})
\:\wedge\: \rsdefa{B}{B\!-\!\calU(0,B)})}\\
\ec(\ev{full}) & = & (B = max_B,true) \qquad \qquad & 
\ec(\ev{empty}) & = & (B = 0,true) \\
\ec(\sev{on}_{in}) & = & (k_{in}^{on},\;true) \quad & \ec(\sev{off}_{in})
& = & (k_{in}^{\off},\;true) \\
\ec(\sev{on}_{out}) & = & (k_{out}^{on},\;true) \quad &
\ec(\sev{off}_{out}) & = & (k_{out}^{\off},\;true) \\
\end{array}$$}
}
\caption{Simple network node model in stochastic HYPE.}\label{simpleNetwork}
\end{figure}

In this section we present the definition  of stochastic HYPE by
means of a small example.  More details about the process algebra
HYPE can be found in an earlier paper \cite{HYPE-journal}.  We
consider a basic model of a network node with a single buffer, which
can either receive packets from an input channel or send packets
to an output channel.  We assume that the number of packets that
travel through the node and that are stored in the buffer is large,
hence we describe them as a fluid quantity.  Received packets are
stored in the buffer, waiting to be sent. We allow reception and
sending of packets to happen concurrently, but enforcing a mutually
exclusive send/receive policy can be easily done as well.  We also
assume that uplinks and downlinks are not always working, but they
are activated and deactivated depending on the availability of a
connection.  These events are described as stochastic, with activation
and deactivation times governed by exponential distributions.
Incoming traffic has to be stopped if the buffer becomes full and
outgoing traffic has to be stopped when the buffer is empty. The
buffer also shows intermittent error behaviour in that after the
passage of a random time period, (drawn from a log normal distribution
with fixed mean and variance), it drops some of its packets (this
quantity is determined uniformly over the number of packets).

HYPE modelling is centred around the notion of \emph{flow}, which
represents some sort of influence continuously modifying one quantity
of the system, described by a variable taking continuous values.
Both the strength and form of a flow can be changed by the happening
of \emph{events}.  In our example, there are two flows modifying
the buffer level, modelled by the continuous variable $B$, namely
reception and sending of packets.  The inflow of packets is modelled
by the $\Input$ \emph{subcomponent} shown in Figure~\ref{simpleNetwork}.
Each subcomponent is a summation of prefixes consisting of an
\emph{event} $\ssev{a}$ followed by an \emph{activity} $\alpha$.
\emph{Events} ($\ssev{a} \in \Ev_d\cup\Ev_s$) are actions which
trigger discrete changes.  They can be caused by a controller, which
triggers them depending on the global state of the system, specifically
on values of variables (instantaneous events $\ev{a}\in\Ev_d$), or
happen at exponentially distributed time instants (stochastic events
$\sev{a}\in\Ev_s$). In the description of the example, non-exponential
durations are mentioned but these are constructed explicitly with
timers. We show later in the paper how we use a syntactic shorthand
for events that have a duration from any distribution.

The $\Input$ subcomponent reacts to three
different events: two stochastic ones, activation $\sev{on}_{in}$
and deactivation $\sev{off}_{in}$ of the uplink, and one instantaneous
event, $\ev{full}$, triggered when the buffer level is at its maximum
capacity. The event $\ev{init}$ serves to initialize the system.

\emph{Activities} ($\alpha \in \Ac$) are \emph{influences} on the
evolution of the continuous part of the system and define flows.
An activity is defined as a triple and can be parameterised by a
set of variables, $\alpha(\W) = (\iota,r,I(\W))$.  This triple
consists of an \emph{influence name} $\iota$, a rate of change (or
\emph{influence strength}) $r$ and an \emph{influence type name}
$I(\W)$ which describes how that rate is to be applied to the
variables involved, or the actual form of the flow\footnote{For
convenience, we will use $I$ for $I(\W)$ when $\W$ can be inferred.}.
In $\Input$, there are two distinct activities: $(in,r_{in},\const)$
and $(in,0,\const)$.  The first one gives the effect of the incoming
link being active on the buffer level $B$: the influence name is
$in$, which uniquely identifies the effect of the input link on
$B$, $r_{in}$ is the strength of the inflow of packets (here it can
be seen as the amount of received data per time unit), which is
associated with the function $\const$.  The second activity captures
the effect of the link being down. It again affects influence $in$,
but has strength $0$ and the form it takes is again $\const$.

The interpretation of influence types will be specified separately,
in order to experiment with different functional forms of the packet
inflow without modifying the subcomponent. In this case, we will
obviously interpret $\const$ as the constant function 1.  Hence,
in HYPE we separate the description of the logical structure of
flows from their mathematical interpretation.

A second subcomponent affecting buffer level is the output component
$\Output$ in Figure~\ref{simpleNetwork}, modelling the sending of
packets, which is defined similarly to $\Input$.  The third
subcomponent $\mathit{Drop}$ has no effect on the contents of the
buffer, since its influence strength is always 0, but it does
introduce the event $\smash{\ev{fail}}$ that corresponds to the
dropping of packets. This event will be described below.  Finally,
since the dropping of packets occurs with a frequency, a time
subcomponent and a time variable are required to describe the passing
of time.  It is not affected by any event apart from $\ev{init}$.
These subcomponents can be combined to give the overall uncontrolled
system
\begin{eqnarray*}
\Input \syncstar \Output \syncstar \mathit{Drop}
\syncstar \Timer
\end{eqnarray*}
Here {$\syncstar$} represents parallel cooperation where all shared
events must be synchronised (we use $\!\smsync{L}\!$ to specify that the
events in $L$ are to be synchronised on and ${\parallel}$ when no events are to
be synchronised on).  This cooperation of subcomponents is
called the \emph{uncontrolled system} because it only specifies how
flows react to events, without imposing any causal or temporal
constraints on them.

Causality on events, reflecting natural constraints or design
choices, is specified separately in the controller $\Con$. For the
system at hand, the controller is defined as shown in
Figure~\ref{simpleNetwork}.
The subcontroller
$\Con_{in}$ models the fact that the reception of packets can be
turned off only after being turned on. Furthermore, it describes
termination of the input if the buffer becomes full, but only if
the uplink is active. $\Con_{out}$ is similar.
The $\smash{\ev{fail}}$ event happens without reference to other
events, and is determined by a timer as will be shown below.  However,
a simple controller is added for this event so that the same events
appear both in the controller and the uncontrolled system.

The uncontrolled system and the controller are then combined together
to define the \emph{controlled system} $\Buffer$.

\emph{Controllers} have only event prefixes, and we need to specify
when events are activated. This is achieved by assigning to each
event a set of event conditions, which differ between
\emph{stochastic} and \emph{deterministic} events.

Deterministic events $\ev{a}\in \Ev_d$ happen when certain conditions
are met by the system.  These \emph{event conditions} are specified
by a function $ec$, assigning to each event a \emph{guard} or
\emph{activation condition} (a boolean combination of predicates
depending on system variables, stating when a transition can fire)
and a \emph{reset} (specifying how variables are modified by the
event).  For example, $\ec(\ev{full}) = (B = max_B,true)$ states
that the uplink is shut down when the buffer reaches its maximum
capacity $max_B$, and no variable is modified. An event condition
that involves resets is that of the event $\ev{fail}$\footnote{In
the original definition of HYPE, resets were deterministic and were
defined using standard functional notation. This notation must be
modified to allow for random resets and is defined formally later
in this section.}.
\begin{eqnarray*}
\ec(\ev{fail}) & = & {(T=C\!+\!D,
\;\rsdefa{C}{T} \:\wedge\: \rsdefa{D}{\mathrm{ln\ }\calN(\Delta,\xi)})
\:\wedge\: \rsdefa{B}{B\!-\!\calU(0,B)})}
\end{eqnarray*}
The activation condition requires the time variable to be the sum
of $C$ and $D$, and when this is true, $C$ is assigned the value
of the current time, and a new random duration $D$ is drawn from a
log normal distribution with mean $\Delta$ time units and variance $\xi$
thus determining how long until the next failure.  The value of $B$
is also decreased by a random variable drawn from the uniform
distribution for all non-negative values less than or equal to $B$
and this modification of variable $B$ represents the dropping of
packets.

The symbol $\sim$ indicates that there may be random values in the
reset. When no distribution is indicated, such as $\rsdefa{C}{T}$,
this is equivalent to $C'=T$. Multiple resets are combined using
$\wedge$ indicating that all resets must occur.  The notation $V'$
is used to denote the value of $V$ after the reset occurred.
Deterministic events in HYPE are \emph{urgent}, meaning that they
fire as soon as their guard becomes true.

Stochastic events $\sev{a}\in \Ev_s$ have an event condition composed
of a stochastic rate (replacing the guard of deterministic events)
and a reset the same as above. For instance, $\ec(\sev{on}_{in}) =
(k_{in}^{on},true)$ states that the duration of packet reception
is a stochastic event happening at times exponentially distributed
with constant rate $k_{in}^{on}$.  Rates define exponential
distributions and can be functions of the variables of the system.
We choose to restrict delays for stochastic events to those that
are exponential distributed to avoid the issue of residual clocks
when there is no explicit timer associated with the delays. The
event condition for $\ev{fail}$ shows how delays with other distributions
can be modelled with the use of timers, and we generalise this
approach later.

We also need to link each influence with an actual variable. This
is done using the function $\iv$. For the example,
$\iv(in)\!=\!\iv(out)\!=\!\iv(f)\!=\!B$, where $B$ is the buffer
level.  The same variable can be modified by many influences. Note
also that in order to model situations in which an event modifies
the flow of more than one variable (in the example, the activation
of a link might modify, for instance, battery consumption), we can
simply define more subcomponents and combine them into structured
components, synchronizing them on shared events.

Finally, we need to interpret influence types, mapping them to
proper functions. In the example, we set $\assign{\const} = 1$, so
that $\const$ defines a constant flow. However, influence types can
be mapped to linear and non-linear functions as well, as shown in
other HYPE examples \cite{GHB08a,HYPE-journal}.

In the preceding informal discussion, we have introduced the main
constituents of a stochastic HYPE model including the combination
of flow components with a controller component, variables, association
between influences and variables, conditions that specify when
events occur, and definitions for the influence type functions.  To
understand the dynamics of this system, we need to derive ODEs to
describe how the variables change over time, and to further specify
how the discrete and the continuous dynamics interact.  We will do
this by defining an operational semantics, which will specify
qualitatively the behaviour of a controlled system, and then mapping
the so-obtained labelled transition system into a special class of
Stochastic Hybrid Automata.  Before that we give the formal definition
of stochastic HYPE.  In the rest of the paper, $\V$ is a set or
tuple of variables with $\W \subseteq \V$ denoting an arbitrary
subset of $\V$.

\begin{definition}
\label{def:sHYPE}
A \emph{stochastic HYPE model} is a tuple 
$(\ConSys,\V,\IN,\IT,\Ev_d,\Ev_s,\Ac,\ec,\iv,\EC,\ID)$ where
\begin{itemize}
\item $\ConSys$ is a well-defined controlled system as defined below.
\item $\V$ is a finite set of variables.
\item $\IN$ is a set of influence names and $\IT$ is a set of influence
type names.
\item $\Ev_d$ is the set of instantaneous events of the form $\ev{a}$ and $\ev{a}_i$.
\item $\Ev_s$ is the set of stochastic events of the form  $\sev{a}$ and
$\sev{a}_i$.
\item $\Ev = \Ev_d \cup \Ev_s$ is the set of all events for which the 
notation
$\ssev{a}$ and $\ssev{a}_i$ is used.
\item $\Ac$ is a set of activities of the form
$\alpha(\W) = (\iota,r,I(\W)) \in (\IN \times
\mathbb{R} \times \IT)$ where $\W \subseteq \V$.
\item $\ec:\Ev \rightarrow \EC$ maps events to event conditions. Event
conditions are pairs of activation conditions and resets. 
\begin{itemize}
\item Activation
conditions for the instantaneous events in $\Ev_d$ are formulae  with
free variables in $\V$ and for the stochastic
events in $\Ev_s$, they are functions
$f:\bbR^{|\V|}\rightarrow\bbR^+$.
\item Resets are conjunctions of formulae of the form 
$\rsdefa{V}{\theta(\V)}$ where
$\theta(\V)$ is an expression that may also include random
variables which are described by $\calX(p_1,\ldots,p_m)$ where $\calX$
is a distribution and the $p_i$ are the parameters for the distribution
which can themselves be expressions involving variables and random variables.
\end{itemize}
\item $\iv: \IN \rightarrow \V$ maps influence names to
variable names.
\item $\EC$ is a set of event conditions.
\item $\ID$ is a collection of definitions consisting
of a real-valued function for each influence type name
$\assign{I(\W)} = f(\W)$ where the variables in
$\W$ are from $\V$.
\item $\Ev$, $\Ac$, $\IN$ and $\IT$ are pairwise disjoint.
\end{itemize}
\end{definition}

In \cite{HYPE-journal}, \emph{well-defined HYPE models} are introduced
and the rationale behind the definition is to enforce a policy in
the way models are specified, forcing the modeller to separate the
description of the logical blocks constituting a HYPE model.
Subcomponents are ``flat'' (i.e. they always call themselves
recursively) and there is a one-to-one correspondence between
influences and subcomponents: each subcomponent describes how a
specific influence is modified by events. Furthermore, synchronization
must involve all shared events.  This notion extends straightforwardly
to stochastic HYPE models, and here we choose to define stochastic
HYPE systems in their well-defined form immediately rather than
having separate definitions.

\begin{definition}
A \emph{well-defined controlled system} has the form $\Sigma
\smash{\syncstar} \ev{init}.\Con$ where $\Sigma$ is a well-defined
uncontrolled system and $\Con$ is a well-defined controller, and
the synchronisation is on all shared events, in particular $\ev{init}$,
the initial event which must occur first (and has $\true$ as its
activation condition).  Furthermore, all events that appear in the
controller must appear in the uncontrolled system and \emph{vice versa}.
The set of well-defined controlled systems is denoted by $\calC$.

A \emph{well-defined controller} is defined by the two-level grammar
$M ::= \ssev{a}.M\ |\ 0 \ |\ M + M$ and $\Con ::= M\ |\ \Con
\smash{\sync{L}}\Con$ with $\ssev{a} \in \Ev$
and with $L \subseteq \Ev$.

The \emph{well-defined uncontrolled system} consists of well-defined
subcomponents in cooperation over shared events where each subcomponent
appears at most once and $\W_i \subseteq \V$.
\begin{eqnarray*}
\Sigma & \rmdef & S_1(\W_1) \syncstar \ldots \syncstar S_s(\W_s)
\end{eqnarray*}

\emph{Well-defined subcomponents} represent the uncontrolled
capabilities of the system and each is a choice over events
such that $\ssev{a}_j \neq \ssev{a}_k$ for $j \neq k$,
$\ssev{a}_j \neq \ev{init}$ for all $j$, and 
$\W \subseteq \V$.
\begin{eqnarray*} S(\W)
& \rmdef & {\sum}_{j=1}^n \ssev{a}_j{:}(\iota,r_j,I_j(\W)).S(\W) +
\ev{init}{:}(\iota,r,I(\W)).S(\W)
\end{eqnarray*}
\end{definition}

A subcomponent is ready to react whenever any of its events'
activation condition becomes true or completes, after which the
influence associated with that event comes into force, replacing
any previous influence. By considering all the influences mapped
to a particular variable for a particular configuration of the
system, an ODE can be constructed using the definitions in $\ID$
to describe the evolution of that variable whenever the system is
in that configuration. We assume well-defined stochastic HYPE models in
the rest of this paper. With the syntax of stochastic HYPE models
defined, the next three sections show how the dynamic behaviour of
stochastic HYPE models can be defined before illustrating this dynamic
behaviour with an substantial example of an assembly line.

\section{Operational Semantics}
\label{sec:SOS}

We now introduce the behaviour of stochastic HYPE models by defining
an appropriate semantics.  We will proceed in a different way to
that presented in the preliminary research \cite{BortBGH:11a}:
instead of mapping each subcomponent and each controller to a
stochastic hybrid automaton, and then composing them together with
a product construction, we will construct a labelled (multi)transition
system (LTS), similarly to \cite{Hill96}, and then map the LTS to
a stochastic hybrid automaton.

To construct the LTS, we need a notion of state. Proceeding in the
same manner as HYPE \cite{HYPE-journal}, states will record for
each influence its current strength and influence type. States
essentially capture the continuous behaviour, while the structure
of the LTS describes the discrete behaviour, both instantaneous and
stochastic.

\begin{definition}
An \emph{operational state} is a function $\sigma:\IN
\rightarrow (\mathbb{R} \times \IT)$. The set of all operational
states is $\St$.  A \emph{configuration} consists of a controlled
system together with an operational state $\blang \ConSys, \sigma
\brang$ and the set of configurations is $\CF$.
\end{definition}

In the following, we will usually refer to `operational states' as
`states'.  States can also be viewed as a set of triples
$(\iota,r,I(\W))$, where there is at most one triple containing
$\iota$. In this form, it is evident that states are simply collections
of enabled activities.  We stress that states describe flows, not
the values of continuous variables. As such, the LTS semantics of
stochastic HYPE is different to that of stochastic hybrid automata,
similarly to the way that the LTS semantics of HYPE differs from
those of hybrid automata \cite{HYPE-journal}.

The operational semantics give a labelled multitransition system
over configurations $(\CF,\Ev,{\rightarrow} \subseteq \CF \times
\Ev \times \CF)$. As customary, we  write $E\;
\lxrightarrow{\mathrm{\scriptstyle \:a\:}}\; F$, for $E, F \in \CF$.
The rules are given in Figure~\ref{opersem} and are essentially
those of non-stochastic HYPE\@. The only formal difference is that
we consider in the rules both instantaneous and stochastic events,
treating them in the same way. The fact that the operational semantic
rules are essentially the same is because the LTS describes the
syntactic structure of the target hybrid automaton.  The proper
dynamics will be attached to a (stochastic) HYPE model by converting
its LTS to a (stochastic) hybrid automaton. This second step of the
semantics construction differs significantly from that used for
HYPE \cite{HYPE-journal}.

\begin{figure}[t]
\begin{center}
{\renewcommand{\arraystretch}{0.5}
$\begin{array}{ll}
\textbf{Prefix with influence:} & \textbf{Prefix without influence:} \\ \\
\D \frac{}{\blang \ssev{a}:(\iota,r,I).P,
           \sigma \brang
           \lsexrightarrow{a} \blang P, \sigma[\iota \mapsto
           (r,I)] \brang}\;(\ssev{a} \in \Ev)  &
\D \frac{}{\blang \ssev{a}.P, \sigma \brang
   \lsexrightarrow{a} \blang P, \sigma \brang}\;(\ssev{a} \in \Ev) \\
\\
\\
\textbf{Choice:} & \textbf{Constant:} \\
\\
\D \frac{\blang P, \sigma \brang \lsexrightarrow{a} \blang P', \sigma' \brang}
  {\blang P + Q, \sigma \brang \lsexrightarrow{a} \blang P', \sigma' \brang}
\qquad
\D \frac{\blang Q, \sigma \brang \lsexrightarrow{a} \blang Q', \sigma' \brang}
  {\blang P + Q, \sigma \brang \lsexrightarrow{a} \blang Q', \sigma' \brang}
\qquad \qquad
&
\D \frac{\blang P, \sigma \brang \lsexrightarrow{a} \blang P',\sigma' \brang}
      {\blang A, \sigma \brang \lsexrightarrow{a} \blang P', \sigma' \brang}      (A \rmdefl P) \\
\\
\\
\multicolumn{2}{l}{
\hspace*{-0.16cm}
\begin{array}{l}
\textbf{Cooperation without synchronisation:} \\
\\
\D \frac{\blang P, \sigma \brang \lsexrightarrow{a} \blang P', \sigma'
\brang}
      {\blang P \sync{L} Q, \sigma \brang \lsexrightarrow{a}
       \blang P' \sync{L} Q, \sigma' \brang} (\ssev{a} \not\in L) \qquad
\D \frac{\blang Q, \sigma \brang \lsexrightarrow{a} \blang Q', \sigma' \brang}
      {\blang P \sync{L} Q, \sigma \brang \lsexrightarrow{a}
       \blang P \sync{L} Q', \sigma' \brang} (\ssev{a} \not\in L) \\
\\ \\
\textbf{Cooperation with synchronisation:} \\
\\
\D \frac{\blang P, \sigma \brang \lsexrightarrow{a} \blang P', \tau \brang
 \quad \blang Q, \sigma \brang \lsexrightarrow{a} \blang Q', \tau' \brang}
 {\blang P \sync{L} Q, {\sigma} \brang \lsexrightarrow{a}
 \blang P' \sync{L} Q', \Gamma(\sigma,\tau,\tau') \brang}
 (\ssev{a} \in L, \Gamma \text{ defined})
\end{array}}
\end{array}$}
\end{center}

\caption{Operational semantics for stochastic HYPE}
\label{opersem}
\end{figure}

Inspecting the rules of Figure \ref{opersem}, we can see that the
only rules updating the state are  Prefix with influence and
Cooperation with synchronisation.

In this latter case, in particular, we need to enforce consistency
in the way influences are updated by the cooperating components.
This is done by the function $\Gamma$.  The updating function
$\sigma[\iota\mapsto(r,I)]$ is defined as
\[ \sigma[\iota\mapsto(r,I)](x)= \begin{cases}
(r,I) & \text{if\ } x=\iota \\
\sigma(x) & \text{otherwise.}
\end{cases} \]
The notation $\sigma[u]$ will also be used for an update, with
$\sigma[u_1\ldots u_n]$ denoting $(\ldots((\sigma[u_1])[u_2])\ldots)[u_n]$.  The
partial function $\Gamma:\St \times \St \times \St \rightarrow \St$
is defined as follows.
\begin{eqnarray*}
{(\Gamma(\sigma,\tau,\tau'))}(\iota) & = &
\begin{cases}
\tau(\iota)  & \text{if $\sigma(\iota)=\tau'(\iota)$},\\
\tau'(\iota) & \text{if $\sigma(\iota)=\tau(\iota)$},\\
\text{undefined} & \text{otherwise}.
\end{cases}
\end{eqnarray*}
$\Gamma$ is undefined if both agents in a cooperation try to update
the same influence. See \cite{HYPE-journal} for further details.
In the following, we will also need some additional notions.

\begin{definition}
Given a controlled system $P$, define structurally the following sets:
\begin{itemize}
\item the \emph{set of events}, $\evs(P)$:
$\evs(\ssev{a}\pc\alpha.S) = \{ \ssev{a} \}$,  $\evs(\ssev{a}.S) = \{
\ssev{a} \}$, $\evs(S_1+S_2) = \evs(S_1) \cup \evs(S_2)$,\linebreak  $\evs(P_1
\smash{\sync{L}} P_2) = \evs(P_1) \cup \evs(P_2)$;
\item the \emph{set of influences}, $\ins(P)$:
$\ins(\ssev{a}\pc(\iota,r,I(\W)).S) = \{ \iota \}$ $\ins(\ssev{a}.S)
= \emptyset$, $\ins(S_1+S_2) = \ins(S_1) \cup \ins(S_2)$, $\ins(P_1
\smash{\sync{L}} P_2) = \ins(P_1) \cup \ins(P_2)$; \item the \emph{set
of prefixes}: $\prs(\ssev{a}\pc\alpha.S) = \{ \ssev{a}\pc\alpha
\}$, $\prs(\ssev{a}.S) = \emptyset$, $\prs(S_1+S_2) = \prs(S_1)
\cup \prs(S_2)$, $\prs(P_1 \smash{\sync{L}} P_2) = \prs(P_1) \cup
\prs(P_2)$.  \end{itemize} \end{definition}

\begin{definition}
The \emph{derivative
set} of a controlled system $P$, $\ds(P)$, is defined as the
smallest set satisfying
\begin{itemize}
\item if $\cf{P,\sigma} \lsexrightarrow{init} \cf{P',\sigma'}$ then
$\cf{P',\sigma'} \in \ds(P)$
\item if $\cf{P',\sigma'} \in \ds(P)$ and $\cf{P',\sigma'}
\lsexrightarrow{a}
\cf{P'',\sigma''}$ then $\cf{P'',\sigma''} \in \ds(P)$.
\end{itemize}
Furthermore, we indicate with $\st(P) = \{ \sigma \mid \cf{Q,\sigma}
\in \ds(P) \}$ the set of \emph{states} of the derivative set of a
controlled system $P$.
\end{definition}

In the $\Buffer$ model, there are four states in the LTS, corresponding
to the four possible combinations of activation of input and output
channels. The influences $f$ and $t$ are the same in all states.
For instance, the state in which the input link is active
and the output is not active is
\begin{eqnarray*}
\sigma & = & \{ in \mapsto (r_{in},\const), \:\: 
                     out \mapsto (0,\const), \:\:
                     f \mapsto (0,\const), \:\: 
                     t \mapsto (1,\const)\} 
\end{eqnarray*}

Propositions 5.1 to 5.5 proved for HYPE \cite{HYPE-journal}, hold
for well-defined stochastic HYPE models. In particular, the function
$\Gamma$ used in the Cooperation with synchronisation rule, is
always defined for well-defined stochastic HYPE systems. The $\Buffer$
system is well-defined, and hence $\Gamma$ is always defined.

\section{Transition-Driven Stochastic Hybrid Automata}
\label{sec:TDSHA}

Because stochastic HYPE considers three distinct types of behaviour,
whereas HYPE considers only two, we need a different target for our
semantics. The semantics for HYPE are provided by hybrid automata which
covers continuous and instantaneous behaviour but not stochastic
behaviour.
We  now present Transition-Driven Stochastic Hybrid Automata,
introduced by Bortolussi and Policriti
\cite{PA:Bortolussi:2010:HybridDynamicsStochProg:TCS,BorP11},
a formalization of stochastic hybrid automata putting emphasis on
transitions, which can be either discrete (corresponding to
instantaneous or stochastic jumps) or continuous (representing flows
acting on system variables).  This formalism can be seen as an
intermediate layer in defining the stochastic hybrid semantics of
stochastic HYPE\@. In fact, TDSHA are themselves mapped to Piecewise Deterministic
Markov Processes~\cite{Davis93}, so that their dynamics can be
formally specified in terms of the latter.  Due to space constraints,
we will not provide a formal treatment of this construction, and
refer the reader papers by Bortolussi and Policriti
\cite{PA:Bortolussi:2010:HybridDynamicsStochProg:TCS,BorP11}
and \cite{BortB:12} for further details. We choose to work with TDSHA
rather than PDMPs because their definition is more operational in
nature, and hence this leads to a more straightforward mapping.


\begin{definition}\label{def:TSHS}
A Transition-Driven Stochastic Hybrid Automaton (TDSHA) is a tuple
$\TDSHA{} =
(Q,\vr{X},\TC,\TD,$ $\TS,\mathrm{init},\mathcal{E})$, where
\begin{itemize}
\item $\modes{}$ is a finite set of \emph{control modes}.

\item $\vrX{} = \{X_1,\ldots,X_n\}$ is a set of real valued
\emph{system variables} with the time derivative of $X_j$ denoted
by $\dot{X_j}$ and the value of $X_j$ after a change of mode denoted
by $X_j'$.

\item $\TC{}$ is the \emph{multiset} of \emph{continuous transitions
or flows}, whose elements $\tau$ are triples
$(\mode{\tau},\stoich{\tau},\rf{\tau})$, where $\mode{\tau}\in
\modes{}$ is a mode, $\stoich{\tau}$ is a vector of size $|\vrX{}|$,
and $\rf{\tau}:\bbR^n\rightarrow \bbR$ is a \emph{Lipschitz continuous}
function.

\item $\TD{}$ is the \emph{set} of \emph{instantaneous transitions},
whose elements $\delta$ are tuples of the form
$(\exit{\delta},\enter{\delta},\guard{\delta},\res{\delta},$
$\priority{\delta},\e{\delta})$.  The transition goes from mode
$\exit{\delta}$ to mode $\enter{\delta}$.

\begin{itemize}
\item The \emph{guard} $\guard{\delta}$ is a first-order formula
with free variables from $\vr{X}$, representing the \emph{closed
set} $G_{\delta} = \{\vr{x}\in\bbR^n~|~\guard{}[\vr{x}]\}$.  \item
The \emph{reset} $\res{\delta}$ is a conjunction of atoms of the
form $X'=r(\mathbf{X},\mathbf{W})$ where
$r:\mathbb{R}^n\times\mathbb{R}^h\rightarrow\mathbb{R}$ is the reset
function of $X$ dependent on the variables $\mathbf{X}$ as well as
a random vector $\mathbf{W}$. Variables not appearing in the reset
remain unchanged and a reset with value $\true$ is the identity
function on each variable.

\item The \emph{weight (priority)} of the edge is $\priority{\delta}\in
\bbR^+$ and is used to solve non-determinism among two or more
active transitions.

\item The \emph{label} of the edge is $\e{\delta}\in\events{}$.
\end{itemize}

\item $\TS{}$ is the \emph{multiset} of \emph{stochastic transitions},
whose elements $\eta$ are tuples of the form $\eta =
(\exit{\eta},\enter{\eta},$ $\guard{\eta},\res{\eta},\linebreak\rf{\eta},\e{\eta})$,
where $\exit{\eta}$, $\enter{\eta}$, $\guard{\eta}$, $\e{\eta}$,
and $\res{\eta}$ are as for transitions in $\TD{}$.
\begin{itemize}
\item The \emph{rate} of the edge is 
$\rf{\eta}:\bbR^n\rightarrow \bbR^+$, a rate
function giving the instantaneous probability of taking transition
$\eta$. It is locally \emph{integrable} along any continuous differentiable
curve.  Additionally, transitions labelled by the same event are
required to have 
\emph{consistent} rates: if $\e{\eta_1} = \e{\eta_2}$, then
$\rf{\eta_1}=\rf{\eta_2}$.  
\end{itemize}

\item $\events{}$ is a finite set of event names, labelling discrete
transitions.  $\events{}$ can be partitioned into
$\events{d}\cup\events{s}$, such that all events labelling instantaneous
transitions belong to $\events{d}$, while all events labelling
stochastic transitions are from $\events{s}$.  

\item $\initial{}$ is a pair $(\initmode{},\mathbf{W})$, with
$\initmode{}\in\modes{}$ and $\mathbf{W}$ is a random vector of $n$
variables representing a point in
$\bbR^n$ and providing the initial values for $\vrX{}$.
$\initial{}$ describes the initial state of the system.

\end{itemize}
\end{definition}

Note that in the previous definition, we consider sets of instantaneous
transitions and multisets of stochastic and continuous transitions.
This is justified by the fact that multiplicity plays a relevant
role only in quantified behaviours (i.e. flows and stochastic
events), but it is not really relevant for instantaneous events,
which are triggered by a boolean condition, whose truth is insensitive
to the presence of multiple transitions of the same kind.

In order to formally define the dynamical evolution of TDSHA, we
can map them into a well-studied model of Stochastic Hybrid Automata,
namely Piecewise Deterministic Markov Processes~\cite{Davis93}.
Here we sketch basic ideas about the dynamical behaviour of TDSHA.
\begin{itemize}
\item Within each discrete mode $\mode{}\in \modes{}$, the system
follows the solution of a set of ODEs, constructed combining the
effects of the continuous transitions $\tau$ acting on mode $q$.
The function $\rf{\tau}(\vr{X})$ is multiplied by the vector
$\stoich{\tau}$ to determine its effect on each variable and then
all such functions are added together, so that the ODEs in mode $q$
are $\dot{\vr{X}} = \sum_{\tau~|~\mode{\tau} = q}  \stoich{\tau}\cdot
\rf{\tau}(\vr{X})$.

\item Two kinds of discrete jumps are possible. Stochastic transitions
are fired according to their rate, similarly to standard Markovian
Jump Processes. Instantaneous transitions, by contrast, are fired
as soon as their guard becomes true.  In both cases, the state of
the system is reset according to the specified reset policy which
is not necessarily deterministic.  Choice among several active
stochastic or instantaneous transitions is performed probabilistically
proportionally to their rate or weight.

\item A trace of the system is therefore a sequence of instantaneous
and random jumps interleaved by periods of continuous evolution.
\end{itemize}

Previously \cite{BortBGH:11a}, we defined a synchronization product
between TDSHA which is presented in Appendix~\ref{sec:SOSmapping} where it
is used to define the mapping used in our previous work on
stochastic HYPE \cite{BortBGH:11a}.

\section{Mapping stochastic HYPE to TDSHA}
\label{sec:HYPEtoTDSHA}

We present now a mapping from stochastic HYPE to TDSHA which is
based on the LTS obtained by the operational semantics. As mentioned
before, this construction is different from the one defined 
previously in \cite{BortBGH:11a}, in which we converted each subcomponent
and each controller of stochastic HYPE to TDHSA, and then combined these TDSHA
with a synchronization product. Essentially, the operational semantics
takes care of this synchronization, and avoids the complications
of the TDSHA product construction.

The mapping from LTS to TDSHA is quite straightforward: modes are
given by the derivative set of the controlled system $\ConSys$,
namely by the set of reachable configurations, continuous transitions
are extracted from the state in each configuration, and instantaneous
and stochastic transitions correspond to the LTS transitions.
Multiplicities of transitions labelled with stochastic events
have to be respected to properly
account for the quantitative behaviour of stochastic events, but as
mentioned previously, we ignore multiplicity of instantaneous
transitions, choosing to have a single transition with weight one.
We assume that
multiplicity of elements in a multiset is always respected in the
definitions below.
 
Consider a stochastic HYPE model
$(\ConSys,\V,\IN,\IT,\Ev_d,\Ev_s,\Ac,\ec,\iv,\EC\!,\ID)$, and let
$\blang P_0, \sigma_0 \brang$ be the configuration in the LTS
obtained from $\ConSys$ after the execution of the initial event,
$\ev{init}$,
namely
$\blang \ConSys, \sigma \brang \mlsxrightarrow{\ev{init}}
\blang P_0, \sigma_0 \brang$. In the case of a well-defined
HYPE model $\sigma_0$ does not depend on $\sigma$, as shown by
\cite{HYPE-journal} and this applies to stochastic HYPE models as well.  We also let $\vr{v_0}$  be the value of
variables $\V$ after initialisation.

In the following, we will refer to the activation condition and the
reset of an event $\ssev{a}\in\Ev$ by $\act(\ssev{a})$ and
$\rs(\ssev{a})$, respectively; hence $\ec(\ssev{a}) =
(\act(\ssev{a}),\rs(\ssev{a}))$. Recall that, for an instantaneous
event $\ev{a}$, $\act(\ev{a})$ is  a boolean formula  while, for a
stochastic event $\sev{a}$, $\act(\sev{a})$ is a function taking
values in the positive reals.  Given a state $\sigma = \{
\iota_i\mapsto (r_i,I_i)~|~i=1,\ldots k\}$, we indicate with
$\vr{1}_{\iv(\iota_i)}$ the $k$-vector equal to 1 in the coordinate
corresponding to variable $\iv(\iota_i)$ and zero elsewhere.

The reset $\rs(\ssev{a})$ is defined as the conjunction of atoms
of the form $V \sim \theta(\V)$. To obtain a reset in the TDHSA,
each distribution $\calX_i$ mentioned in $\theta(\V)$ must be
associated with a distinct variable $W_i$, then the reset function for
a transition labelled with $\ssev{a}$ is $R_{\ssev{a}}(\V,\mathbf{W})$
such that each random variables $W_i$ is drawn from its associated
distribution.  

The mapping now defined differs from the approach we took for
standard HYPE, since it requires stochastic transitions.

\newpage

\begin{definition}
\label{def:shypetotdsha}
Let $\calH = (\ConSys,\V,\IN,\IT,\Ev_d,\Ev_s,\Ac,\ec,\iv,\EC\!,\ID)$
be a stochastic HYPE model. The TDSHA $\TDSHA{}(\calH) = \TDSHAtuple{}$
associated with $\calH$ is defined by:
\begin{itemize}
\item the set of discrete modes is the derivative set of $\ConSys$: 
$Q = ds(\ConSys)$;  
\item the set of continuous variables is $\vr{X} = \V$; 
\item the initial state is 
$\mathrm{init} = (\blang P_0, \sigma_0 \brang,R_{\ev{init}})$;
\item the set of events is $\mathcal{E} = \Ev_d\cup\Ev_s$;
\item the multiset of continuous transitions $\TC{}$ is constructed as
follows: \newline with each $\blang P, \sigma \brang \in Q$, $\sigma = \{
\iota_i\mapsto (r_i,I_i)~|~i=1,\ldots k\}$, we associate the
continuous transitions $(\blang P, \sigma
\brang,\vr{1}_{\iv(\iota_i)},r_i.\assign{I_i})$, $i=1,\ldots k$;
\item the set of instantaneous transitions $\TD{}$ is obtained as
follows: \newline with each $\blang P_1, \sigma_1 \brang \mlsxrightarrow{\ev{a}}
\blang P_2, \sigma_2 \brang$ we associate the instantaneous transition
\newline $(\blang P_1,
\sigma_1 \brang,\blang P_2, \sigma_2
\brang,\act(\ev{a}),R_{\ev{a}},p,\ev{a})$ with weight
$p=1$. We ignore multiplicity in
the LTS.
\item the multiset of stochastic transitions $\TS{}$ is constructed as
follows: \newline with each $\blang P_1, \sigma_1 \brang \mlsxrightarrow{\sev{a}}
\blang P_2, \sigma_2 \brang$ 
taking into account multiplicity in the LTS
we associate the stochastic transition \newline $(\blang P_1,
\sigma_1 \brang,\blang P_2, \sigma_2
\brang,true,R_{\sev{a}},\act(\sev{a}),\sev{a})$ with 
guard $\true$ and rate function $\act(\sev{a})$ and thus preserve
multiplicity.
\end{itemize}
\end{definition}

This definition gives the same TDSHA as that in the earlier work
on stochastic HYPE \cite{BortBGH:11a} which used a product construction
(the compositional mapping);
however, we prefer to work first with a labelled transition system
before mapping to a TDHSA as it allows us to define bisimulation
at the level of stochastic HYPE models. 
We can prove that the semantics defined above and those presented
in \cite{BortBGH:11a} map a stochastic HYPE model to the same TDSHA
and this theorem and its proof can be found in
Appendix~\ref{sec:SOSmapping}.

\begin{figure}
\begin{center}
\includegraphics[width=12.5cm]{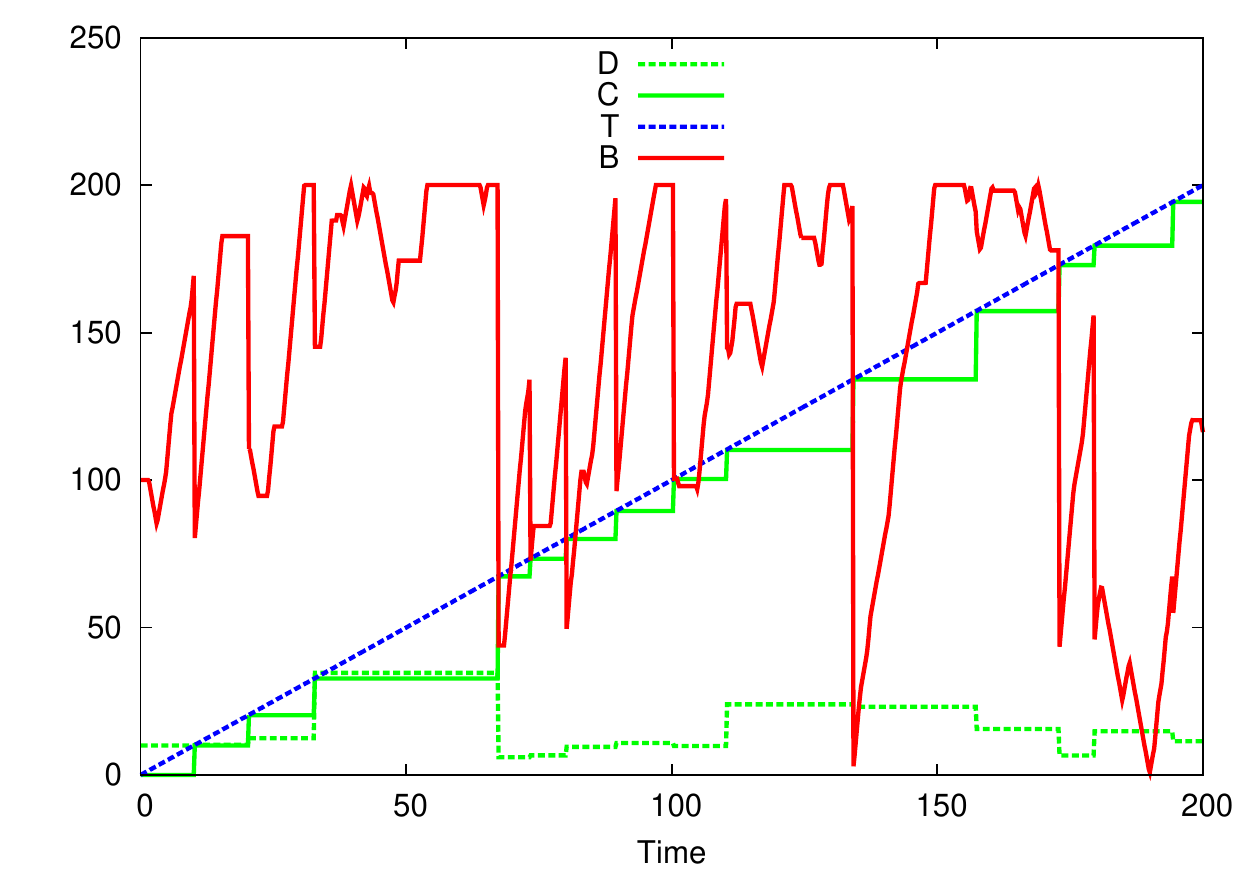}
\end{center}
\caption{A single trace for the network node with
$\mathit{max}_B=200$, $k_\mathit{in}^\mathit{on}=0.4$, 
$k_\mathit{in}^\mathit{off}=k_\mathit{out}^\mathit{off}=
k_\mathit{out}^\mathit{on}=0.2$, $r_\mathit{in}=20$, 
$r_\mathit{out}=10$, $\Delta=2.5$, $\xi=0.5$ and $b_0=100$.
The variable $B$ represents the quantity of packets in the buffer,
$T$ the time, $C$ the variable that records the time, and $D$ the
duration.}
\label{fig:node}
\end{figure}

A single trajectory of the node example from Section~\ref{sec:HYPE}
is given in Figure~\ref{fig:node}. The proportion of time that the
input connection is up and the rate of input are both higher than
that of the output node, and this can be seen in the fact that the
slopes for input  are steeper, and the buffer is hitting its maximum
capacity frequently. The failure that causes packets to be dropped
does, paradoxically, enable the buffer to accept more packets when
there is a connection.

We next consider some technical issues of model behaviour, as well a
syntactic abbreviation for events with non-exponential durations.
After this, the example of an assemble line shows the versatility of
stochastic HYPE and this is followed by the definition of an equivalence
over stochastic HYPE models that provides reasoning about when two
models have the same behaviour.

\subsection{Well-behaved models}

In the paper that defines HYPE \cite{HYPE-journal}, a notion of
well-behavedness is introduced. This notion ensures that the model
can never execute an infinite number of simultaneous instantaneous
events (called instantaneous Zeno behaviour), hence ensuring that
models can be simulated. We want stochastic HYPE models to be 
similarly well-behaved, so that models can be simulated and additionally
so that the TDHSA that a stochastic HYPE model is mapped to, can
be interpreted as piecewise deterministic Markov process (PDMP)
\cite{Davis93}.  The definition of PDMPs require that an instantaneous
event is not immediately followed by another one, and if the behaviour
of the stochastic HYPE model is such that in its TDHSA there are
only finite sequences of simultaneous events, then each finite
sequence can be mapped to a single event and thus satisfy the PDMP
definition. 

\begin{definition}
\label{def:well-behaved}
A stochastic HYPE model $P$ is well-behaved if it has a finite
number of finite sequences of simultaneous instantaneous events and
these sequences are independent of the initial state of the system.
\end{definition}

To ensure well-behavedness for a HYPE model (as defined in
\cite{HYPE-journal}, so without stochastic events), it is sufficient to
show that the instantaneous activation graph, or I-graph, of a HYPE
model is acyclic. This graph is constructed by considering the
instantaneous transitions of the labelled transition systems obtained
from a HYPE model. The I-graph gives an overapproximation of a HYPE
model's behaviour. If there are no cycles then the model is
well-behaved, but the I-graph of a well-behaved model is not necessarily
acyclic.

Considering stochastic HYPE models,
the addition of random resets has a minor effect on the construction
on the I-graph
in that for each event the possible set of values after a reset
must take into account the support of any distribution involved in
the reset, which can lead to additional overapproximation.  The
extension to stochastic events does not alter this construction and
all results from \cite{HYPE-journal} hold for stochastic HYPE models.
The addition of a stochastic event between two instantaneous events,
or alternatively, the modification of an instantaneous event to a
stochastic one may break the cyclicity of an I-graph thus leading
to a well-behaved model. Hence, both Theorem 6.1 and Theorem 6.2
proved by \cite{HYPE-journal} hold for stochastic HYPE, as well
as Propositions 6.1 to 6.4.  which describe specific conditions on
controllers that lead to well-behaved models. A new proposition can
be proved in the stochastic setting.

\begin{proposition}
\label{prop:well-behavedness}
Let $P$ be a stochastic HYPE model with $\Con \rmdef
\ssev{a}_1\ldots\ssev{a}_n.Con$. If there exists $i$ such that
$\sev{a}_i$ is a stochastic event then $P$ is well-behaved.
\end{proposition}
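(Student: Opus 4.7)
The plan is to use the sequential cyclic structure of the controller together with the fact that a stochastic event cannot participate in a burst of simultaneous instantaneous events, because its delay is drawn from an exponential distribution and is therefore almost surely strictly positive. This gives an a priori bound on the length of any sequence of simultaneous instantaneous events, which is exactly what is needed for well-behavedness in the sense of Definition~\ref{def:well-behaved}.

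First, I would unpack the controller. Because $\Con \rmdef \ssev{a}_1.\ssev{a}_2.\ldots.\ssev{a}_n.\Con$ is purely sequential with no choice and no parallel composition, and because the uncontrolled system must synchronise on every event appearing in $\Con$, the global order of event firings in any execution of $P$ is forced to be the cyclic sequence $\ssev{a}_1,\ssev{a}_2,\ldots,\ssev{a}_n,\ssev{a}_1,\ldots$, starting from $\ev{init}$ (which is the first event and whose activation condition is $\true$). In particular, every maximal block of simultaneous instantaneous events must be a contiguous subword of this cyclic sequence.

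Next, I would isolate the role of $\sev{a}_i$. By the operational semantics and the target TDSHA, a stochastic event $\sev{a}_i$ labels a stochastic transition whose firing time, conditional on reaching a mode enabling it, is exponentially distributed with rate $\act(\sev{a}_i)$. Hence $\sev{a}_i$ cannot occur at the same time instant as the event preceding it in the cyclic ordering: a strictly positive amount of continuous evolution must elapse between the previous event and the firing of $\sev{a}_i$. Consequently, $\sev{a}_i$ acts as a temporal barrier that cannot lie inside a simultaneous block of instantaneous events.

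Combining these two observations, any maximal block of simultaneous instantaneous events is a contiguous subword of the cyclic sequence that does not contain index $i$, and so has length at most $n-1$. This bound is uniform in the state of the system and in particular independent of the initial state. Therefore $P$ admits only finitely many, finite sequences of simultaneous instantaneous events, which is exactly Definition~\ref{def:well-behaved}. The main obstacle is the careful justification that the strict ordering imposed by a sequential controller really does carry over through synchronisation to the controlled system and its derived LTS, so that no reordering or interleaving can smuggle a chain of instantaneous events past $\sev{a}_i$; this follows from the well-definedness of $P$ and the fact that the uncontrolled system synchronises on all events in $\Con$, as established in Section~\ref{sec:HYPE}.
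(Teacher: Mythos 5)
Your proof is correct and follows essentially the same route as the paper's own (one-sentence) proof sketch: the cyclic sequential controller forces the global event order, and the almost-surely positive exponential delay of $\sev{a}_i$ breaks any chain of simultaneous instantaneous events, bounding such chains by $n-1$. You simply spell out in more detail what the paper states tersely.
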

\begin{proof}
Since $\Con$ cycles through $n$ events, if one of these events is
not instantaneous and therefore has a duration then it is not possible 
for there to be an infinite sequence of instantaneous events.
\end{proof}

This proposition cannot be applied to the $\Buffer$ example, because
the controllers are not cyclical, but a similar argument can be
made. For the one instantaneous event in each controller ($\ev{full}$
or $\ev{empty}$), a stochastic event must occur before the instantaneous
event can reoccur, hence preventing the unwanted behaviour. Considering
the two controllers in cooperation, they have disjoint events and
hence the overall controller consists of interleavings of these
events. It is not possible for a sequence to occur consisting only
of $\ev{full}$ and $\ev{empty}$ since there must be interleaving
stochastic events. Hence $\Buffer$ is well-behaved. This can also
be proved using Proposition 6.4 from \cite{HYPE-journal} since
neither $\ev{full}$ nor $\ev{empty}$ activate each other, so one
cannot immediately proceed the other. Considering the controller
$\Con_\mathit{fail}$, the only way in which multiple $\ev{fail}$
events can happen simultaneously is if the random value for the
next duration is repeatedly zero. The probability of this is zero
and hence the controller is well-behaved. Proposition 6.4 can be
used again to show that the composition of all three controllers
is well-behaved.

\subsection{Non-exponential durations}

We also introduce some syntax that will allow us to write more compact
models. Currently the duration of stochastic events is specified by
exponential distributions because this is a good match with TDSHA and
PDMPs and more particularly because it avoids the need for residual
clocks. However, we can allow a notation whereby any expression involving
random variables can appear as the first element of an event condition.
This provides a way to express any random duration directly in an
event condition.  This will then be expanded to two events, and
requires the introduction of two variables, one to record the current
time and one to record the duration of the event. This introduces a
specific timer to track how much time is left of a duration.
To illustrate this, consider the following event.
\begin{eqnarray*}
\ec(\ev{fail}) & = & {(T=C\!+\!D,
\;\rsdefa{C}{T} \:\wedge\: \rsdefa{D}{\calN(\Delta,\xi)})
\:\wedge\: \rsdefa{B}{B\!-\!\calU(0,B)})}
\end{eqnarray*}
This can be written as 
\begin{eqnarray*}
\ec(\sev{fail}) & = & {(\calN(\Delta,\xi),
\rsdefa{B}{B\!-\!\calU(0,B)})}
\end{eqnarray*}
If there is no Timer
subcomponent with an influence affecting a time variable $T$, then these
must also be added. Since an expression that contains no random
variables is interpreted as the rate of an exponential distribution, the
notation $\delta(p)$ where $p$ contains no random variables will be used
to denote a fixed-time duration of $p$ time units.

We have now defined the dynamics of stochastic HYPE system as well
as highlight how models can have desirable behaviour. We assume
well-behavedness in the rest of the paper.  We proceed with an
example after which we consider how we can formally compare two
systems in terms of their behaviour.

\section{Example: a manufacturing system}
\label{sec:assemblyline}

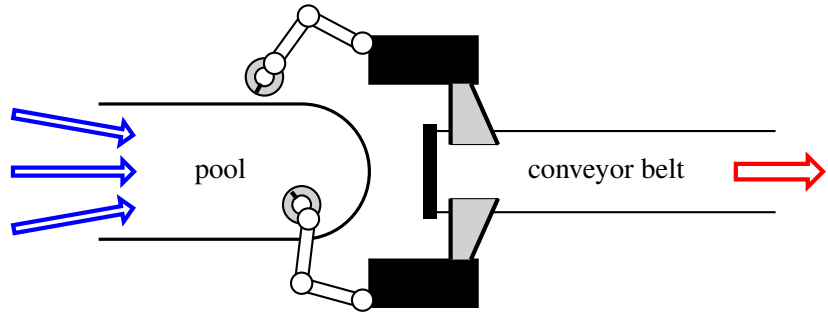
\begin{figure}
\vspace*{-0.3cm}
\begin{center}
\begin{tikzpicture}[>=latex,thick,xscale=0.9,yscale=0.9,inner sep=0]

\path[very thick,draw] (0,1) -- (3,1) arc (90:0:1cm) arc (360:270:1cm);
\path[very thick,draw] (0,-1) -- (3,-1);

\node[] at (1.8,0) {pool};

\path[thick,draw] (10,0.6) -- (4.9,0.6) -- (4.9,-0.6) -- (10.0,-0.6);
\path[fill=black] (5.0,0.7) -- (5.0,-0.7) -- (4.8,-0.7) -- (4.8,0.7);
\node[] at (7.5,0) {conveyor belt};

\node[single arrow,draw,red,rotate=0,ultra thick,
      single arrow head extend=0.1cm]
     at (10,0.0) {\rl{1.0cm}{0.15cm}};

\node[single arrow,draw,blue,rotate=0,ultra thick,
      single arrow head extend=0.1cm]
     at (-0.4,0.0) {\rl{1.5cm}{0.05cm}};

\node[single arrow,draw,blue,rotate=-10,ultra thick,
      single arrow head extend=0.1cm]
     at (-0.4,0.7) {\rl{1.5cm}{0.05cm}};

\node[single arrow,draw,blue,rotate=10,ultra thick,
      single arrow head extend=0.1cm]
     at (-0.4,-0.7) {\rl{1.5cm}{0.05cm}};

\path[draw,fill=black] (4,1.3) -- (5.6,1.3) -- (5.6,2) -- (4,2) -- cycle;
\path[thin,draw,fill=black!18] (5.2,1.3) -- (5.2,0.4) -- (5.9,0.4) -- (5.5,1.3);
\path[ultra thick,draw] (5.2,1.3) -- (5.2,0.4);
\path[ultra thick,draw] (5.9,0.4) -- (5.5,1.3);

\path[draw,fill=black] (4,-1.3) -- (5.6,-1.3) -- (5.6,-2) -- (4,-2) -- cycle;
\path[thin,draw,fill=black!18] (5.2,-1.3) -- (5.2,-0.4) -- (5.9,-0.4) -- (5.5,-1.3);
\path[ultra thick,draw] (5.2,-1.3) -- (5.2,-0.4);
\path[ultra thick,draw] (5.9,-0.4) -- (5.5,-1.3);

\path[thick,draw,double distance=4pt] (3.85,1.9) -- (3.21,2.24);
\path[thick,draw,double distance=4pt] (3.05,2.24) -- (2.55,1.55);
\node[circle,draw,fill=white] at (3.9,1.9) {\rl{0.2cm}{0.2cm}};
\node[circle,draw,fill=white] at (3.1,2.3) {\rl{0.2cm}{0.2cm}};
\node[circle,draw,fill=black!18] at (2.45,1.4) {\rl{0.4cm}{0.30cm}};
\node[circle,draw,fill=white] at (2.45,1.4) {\rl{0.2cm}{0.15cm}};
\node[circle,draw,fill=white] at (2.6,1.6) {\rl{0.2cm}{0.2cm}};
\path[draw,ultra thick] (2.4,1.3) -- (2.33,1.17);

\path[thick,draw,double distance=4pt] (3.85,-1.9) -- (3.01,-1.67);
\path[thick,draw,double distance=4pt] (2.99,-1.67) -- (3.10,-0.75);
\node[circle,draw,fill=white] at (3.9,-1.9) {\rl{0.2cm}{0.2cm}};
\node[circle,draw,fill=white] at (3.0,-1.65) {\rl{0.2cm}{0.2cm}};
\node[circle,draw,fill=black!18] at (3.00,-0.49) {\rl{0.4cm}{0.30cm}};
\node[circle,draw,fill=white] at (3.00,-0.49) {\rl{0.2cm}{0.15cm}};
\node[circle,draw,fill=white] at (3.1,-0.7) {\rl{0.2cm}{0.2cm}};
\path[draw,ultra thick] (2.85,-0.30) -- (2.93,-0.40);




   {\small\begin{rotate}{90}10 seconds\end{rotate}};

   {\small\begin{rotate}{270}20 seconds\end{rotate}};

\end{tikzpicture}
\end{center}
\vspace*{-0.3cm}
\caption{Schematic of the assembly system}
\label{fig:assembler}
\end{figure}

\begin{figure}
{\renewcommand{\arraystretch}{1.10}
$$\begin{array}{rcl}
\\
\\
\Assem & \rmdef & Sys \syncstar \ev{init}.\Con_j \\
& & \\
Sys & \rmdef & (\Feed_1 \syncstar \Feed_2 \syncstar Feed_3) \syncstar 
               \Inspect \syncstar \\
    &        & (\Timer_1 \syncstar \Machine_1(W_1)) \syncstar \\
    &        & (\Timer_2 \syncstar \Machine_2(W_2)) \\
& & \\
\Feed_i    & \rmdef & \ev{init}\pc(p_i,\arr_i,\const).\Feed_i\:+\\
           &        & \ev{overflow}\pc(p_i,0,\const).\Feed_i\\
\\
\Inspect   & \rmdef & \ev{init}\pc(b,-\dep,\const).\Inspect\:+\\
           &        & \sev{scan}\pc(b,-\dep,\const).\Inspect \\
           &        & \sev{resume}\pc(b,0,\const).\Inspect \\
           &        & \ev{overflow}\pc(b,0,\const).\Inspect \\
\\
\Machine_i(W_i) & \rmdef &
    \ev{init}\pc(w_i,\wa_i,\linear(W_i)).\Machine_i(W_i)\:+\\
& & \ev{check}_i\pc(w_i,0,\const).\Machine_i(W_i)\:+\\
& & \sev{remove}_i\pc(w_i,\wt_i,\linear(W_i)).\Machine_i(W_i)\:+ \\
& & \ev{assem}_i\pc(w_i,\wa_i,\linear(W_i)).\Machine_i(W_i) \\
& & \ev{overflow}_i\pc(w_i,0,\linear(W_i)).\Machine_i(W_i) \\
\\
\Timer_i   & \rmdef & \ev{init}\pc(t_1,0,\const).\Timer_i\:+ \\
           &        & \ev{remove}_i\pc(t_1,1,\const).Timer_i\:+ \\
           &        & \ev{assem}_i\pc(t_1,0,\const).\Timer_i\\
\\
\iv(p_i)   & = & P \quad \iv(t_i) \: = \: T_i \quad \iv(w_i) \: = \: W_i
\quad \iv(b) \: = \: B \\
\\
\ec(\ev{init}) & = & (true,\;\rsdefa{P}{P_0} \:\wedge\: \rsdefa{T_1}{0}
\:\wedge\: \rsdefa{T_2}{0} \:\wedge\: \rsdefa{B}{B_0}) \\
\ec(\ev{overflow}) & = & (B \geq B_f,\;\true) \\
\ec(\ev{check}_i) & = & (P\geq n_i,\;\true) \\
\ec(\ev{assem}_i) & = & (T_i \geq \assemtime_i,\; \rsdefa{B}{B\!+\!m_i}) \\
\\
\ec(\sev{remove}_i) & = & (\taketime_i,\;\rsdefa{P}{P\!-\!n_i} \:\wedge\:
\rsdefa{T_i}{0}) \\
\ec(\sev{scan}) & = &
(\examtime,\;\rsdefa{B}{B-min(B,\Gamma(S_c,S_h))}) \\
\ec(\sev{resume}) & = & (\calF(\resttime),\;\true) \\
\\
\eval{const} & = & 1 \qquad \eval{linear(X)} \: = \: X \\
\end{array}$$
}
\caption{Model for assembler with two machines (controller omitted)}
\label{fig:assemuncontrolled}
\end{figure}

\begin{figure}
{\renewcommand{\arraystretch}{1.20}
$$\begin{array}{rcl}
\\
\Con & \rmdef & ((C_1 \parallel C_2) \syncstar C_m) \parallel 
C_e \parallel C_f \\
\\
C_i & \rmdef & \ev{check}_i.C'_i \\
C'_i & \rmdef & \sev{remove}_i.C''_i \\
C''_i & \rmdef & \ev{assem}_i.C_i \\
\\
C_m & \rmdef & \ev{check}_1.C'_{m} +
               \ev{check}_2.C''_{m} \\
C'_m & \rmdef & \sev{remove}_1.C_m \\
C''_m & \rmdef & \sev{remove}_2.C_m \\
\\
C_e & \rmdef & \sev{scan}.\sev{resume}.C_e \\
\\
C_f & \rmdef & \ev{overflow}.0 
\end{array}$$}

\caption{Controller for the assembly system}
\label{fig:assemcontroller}
\end{figure}

The example considers automated machines for assembling together
groups of identical items, for example, putting matches into
matchboxes. A schematic of the system is presented in
Figure~\ref{fig:assembler}. Each machine determines if there are
sufficient items in the pool ($\ev{check}_i$), then it takes these
$n_i$ items. Since this action can vary in duration it is modelled
as a stochastic event ($\sev{remove}_i$) with a exponentially
distributed duration.  The next step is assembly and the event
$\ev{assem}_i$ indicates the end of this fixed duration process
when the machine places $m_i$ finished items onto an output conveyor
belt. Only one machine is allowed to take items from the pool at a
time and this is enforced by a controller.

There are three lines of items that feed into the pool. Completed
items are removed from the conveyer belt. At the conveyer belt, there
is an agent that stops the belt, inspects the items near to it,
removes incorrectly assembled items and restarts the belt. If the
beginning of the conveyer belt becomes congested and starts to
overflow, the system moves to a failsafe state where everything
stops.

Here, we treat both the input feed into the pool ($P$) and the output
belt ($B$) as continuous.
We also track the power consumption of each machine ($W_i$).
The uncontrolled system is defined in Figure~\ref{fig:assemuncontrolled}
and its controller are presented in Figure~\ref{fig:assemcontroller}.
$\Con$ consists of five controllers: one for each
machine, a controller that controls access to the pool,
a controller for inspection of assembled items, and 
a controller that determines the congestedness of the
output belt and shuts down the whole system if necessary.  
Figure~\ref{fig:parrun} provides a single run of the system.

We now consider a more complex scenario to show the potential of
stochastic HYPE. This example involves the following cost optimisation
problem. The manufacturer receives an order for $K$ products that
have to be produced within a certain deadline.  Failure to meet the
deadline will result in a penalty proportional to the delay. We
assume that the assembly machines can be tuned by changing the batch
size of a single assembly: the machine can take more items from the
pool and put more assembled items on the belt. This has a cost in
terms of assembly time and energy consumption. However, we assume
that the production time increases as the square root of the batch
size, while the energy increases quadratically.  This models the
fact that increasing the batch size reduces the production time but
at an increased energy cost. Energy itself contributes to the total
cost at a given price per unit. The goal of the manufacturing is
to find the batch size that minimises the average cost, defined as
the energy cost plus the penalty to miss the deadline. This design
problem can be solved by exploring parameters in a feasible range
and looking for the minimum value of the average cost, and results
are reported in Figure ~\ref{fig:intrun}  (taking the average over
1000 runs per point), suggesting an optimal batch size of 2. The model
for this optimisation requires minor changes to the model in
Figure~\ref{fig:assemuncontrolled} and these are easy to make because
of the structured form of the model. Assembly times and the flows
describing energy consumption are modified, and for the purposes
of the example, the area where the manufactured items are placed
is made much larger because the focus is on production, rather than
problems related to overflow. No changes are required for the
controllers.

\begin{figure}
\begin{center}
\begin{tabular}{cc}
\includegraphics[width=12.5cm]{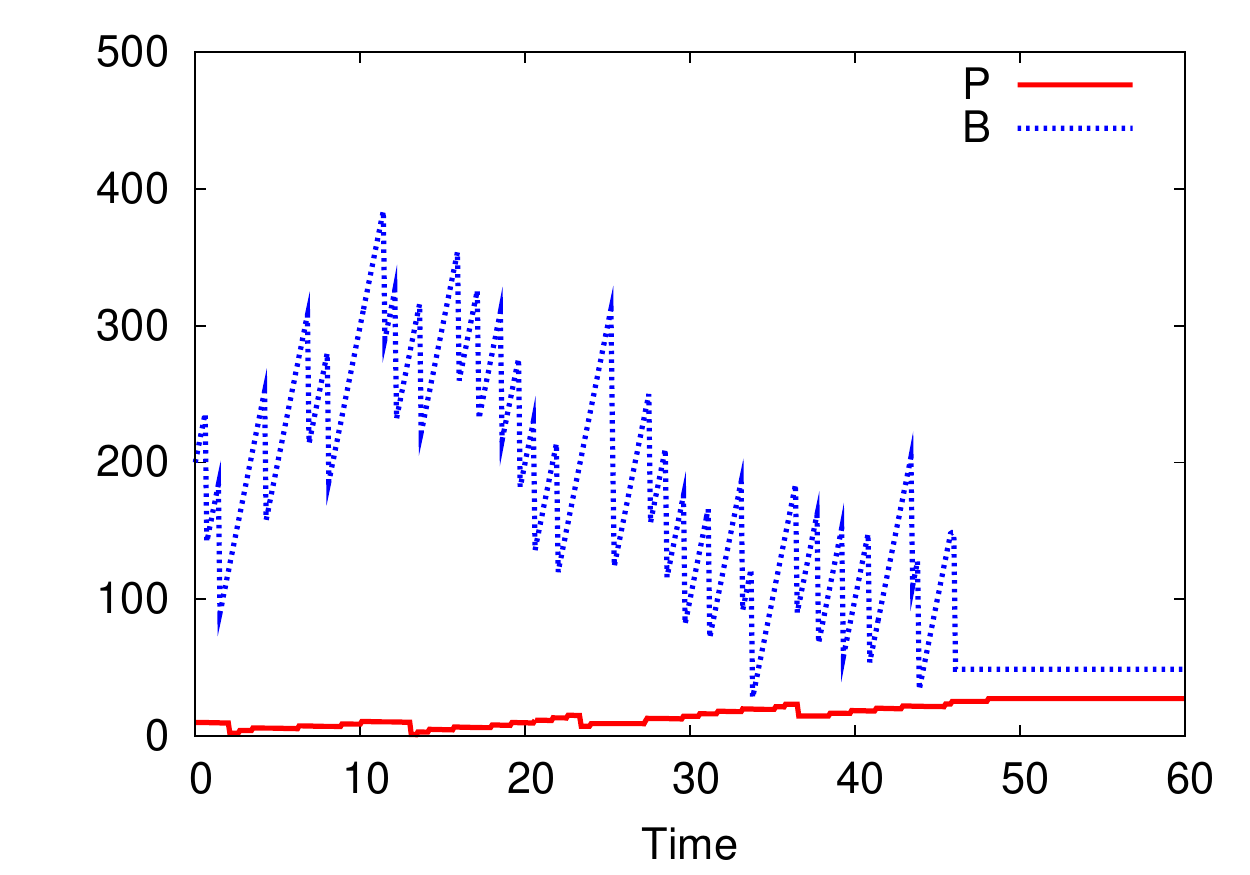} \\
\includegraphics[width=12.5cm]{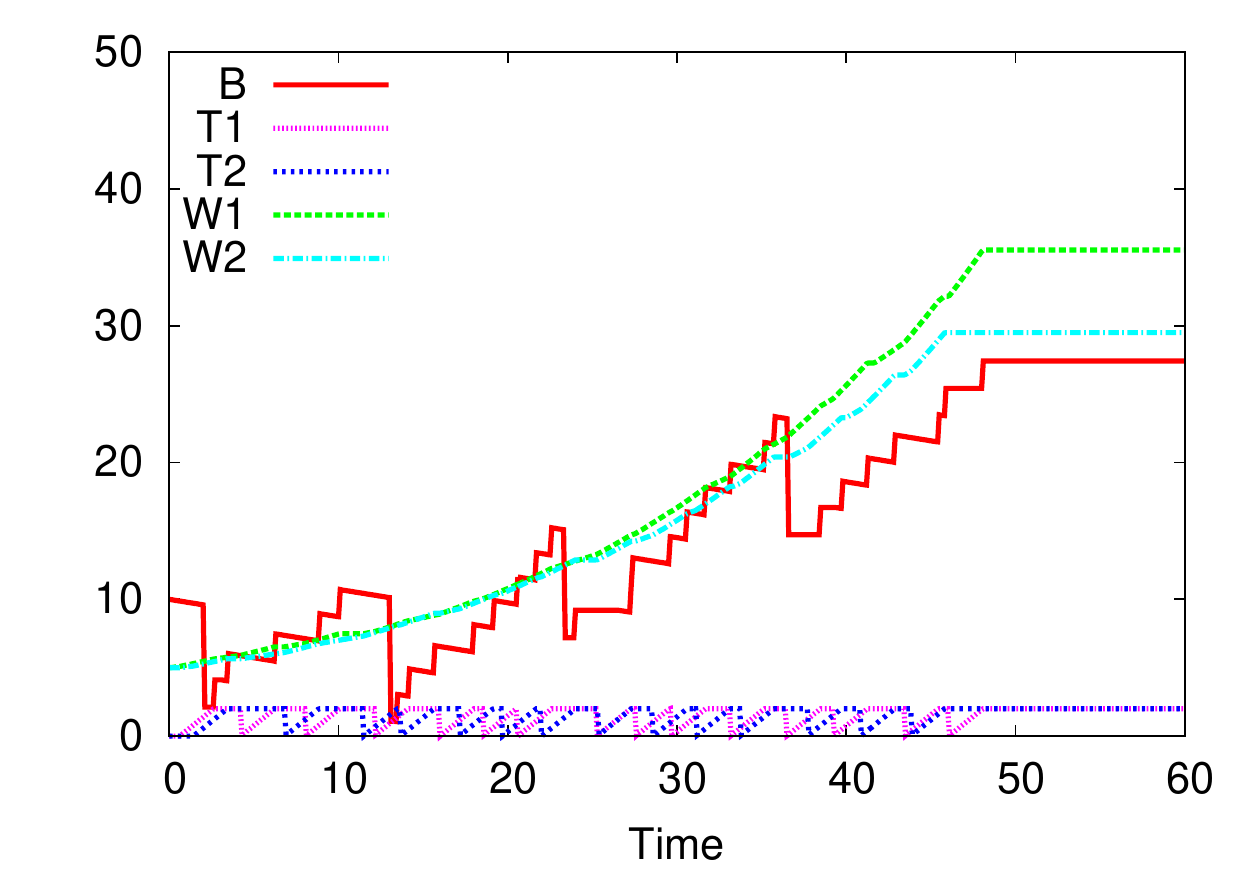}
\end{tabular}
\end{center}
\caption{A single trace for the assembly system showing
variables $P$ and $B$ (top) and variables $B$, $T_i$ and $W_i$
(bottom) using the parameters $\arr_i=20$, $\dep=0.2$, $\assemtime_i=2$,
$\taketime=0.8$, $n_i=100$, $m_i=2$, $\wt_i=0.03$,
$\wa_i=0.05$, $\examtime=2$, $\resttime=20$, $S_c=4$ $S_h=0.5$ and
$B_f = 25$}
\label{fig:parrun}
\end{figure}

\begin{figure}
\begin{center}
\begin{tabular}{cc}
\includegraphics[width=12.5cm]{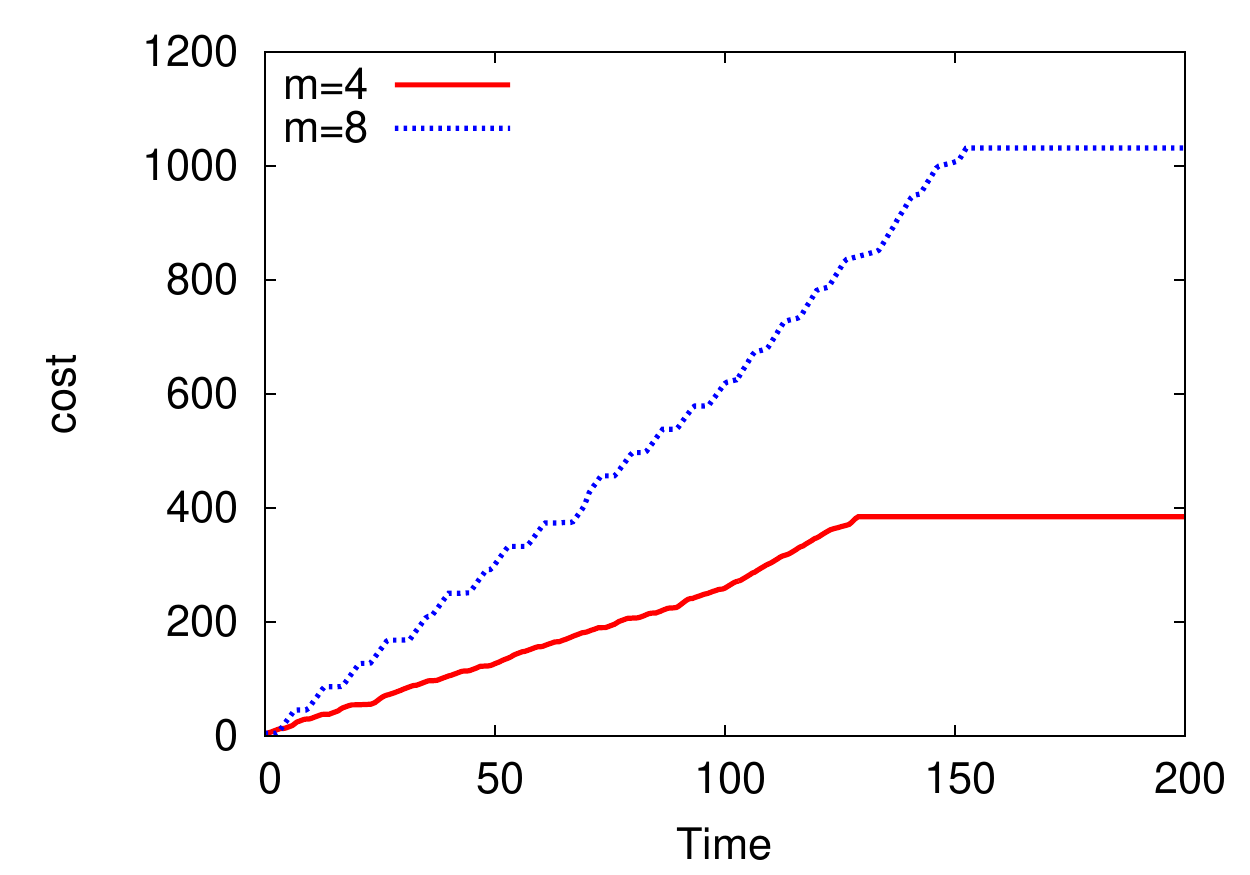} \\
\includegraphics[width=12.5cm]{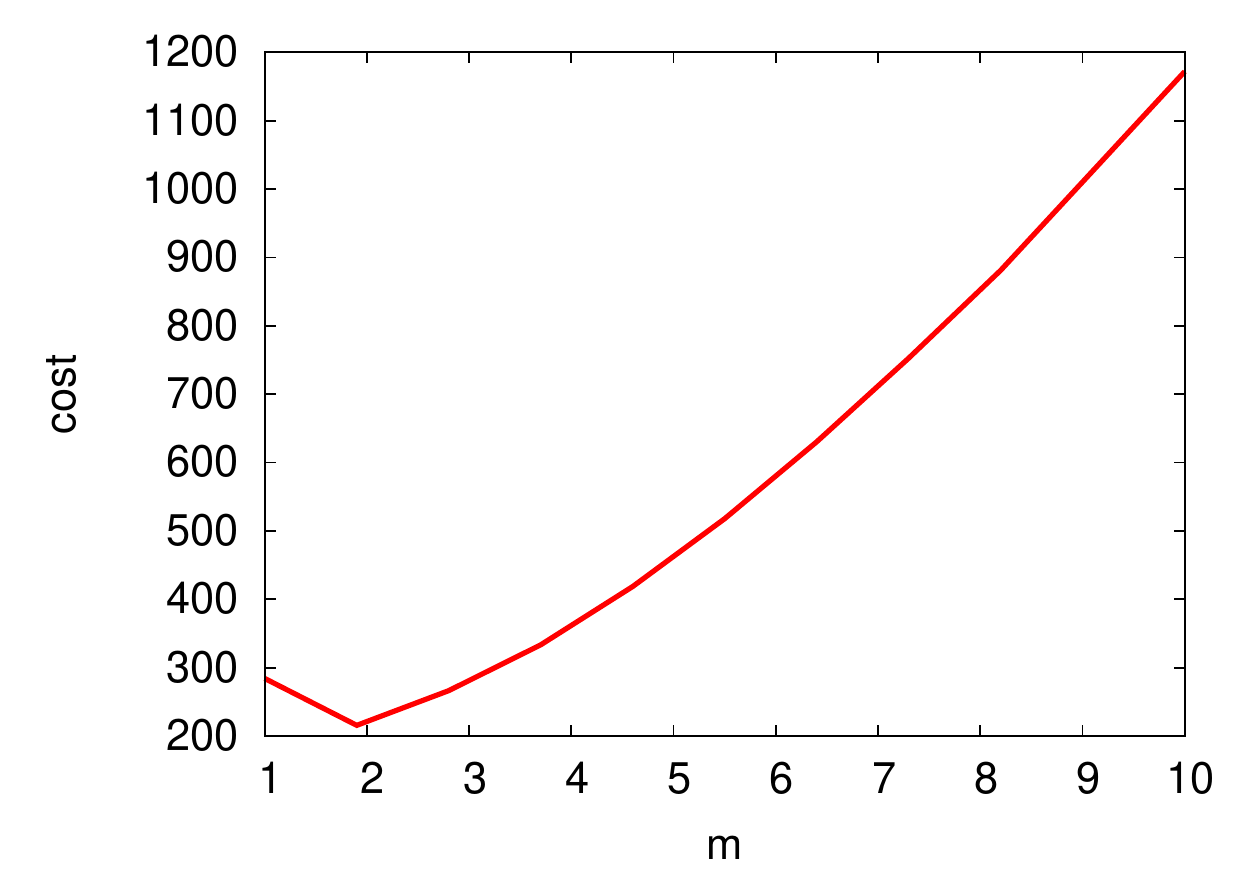}
\end{tabular}
\end{center}
\caption{
Energy plus penalty cost for various values of batch size $m$,
assuming both machines are configured to the same value. Two
trajectories for different values of $m$ (top) and average cost for a range of
values for $m$ over 1000 trajectories (bottom).
Energy cost per
unit of energy is 0.5, while penalty cost is 2 per time unit.  Energy
rate consumption in the assembly phase is equal to $1/3*m^2 + 2/3$,
while the duration of the assembly equals $\alpha*(\sqrt{m}-1) + 1.5$,
for $\alpha = 1/(2*sqrt(2) - 2)$. The number of items taken from the
pool is $50 m$.  We assume the order is of 100 items with a
deadline of 100 time units.}
\label{fig:intrun}
\end{figure}

\section{Bisimulations for stochastic HYPE}
\label{sec:bisimHYPE}

We now consider behavioural equivalences for well-behaved stochastic
HYPE models. First, we
show that how the natural extension of the bisimulation defined for HYPE
is not a useful definition when stochastic hybrid models are considered.
Next, we define an equivalence that is more
suited to capture the notion of same behaviour both stochastically
and instantaneously.

\subsection{System bisimulation}

We have previously defined system bisimulation for HYPE \cite{HYPE-journal}
and shown that it is the same as ic-bisimulation
\cite{CuijCR:03a,BergBM:05a}. The definition of system bisimulation
along these lines for stochastic HYPE only requires the modification
of the labels on the transitions so that they can either be stochastic
or instantaneous.

\begin{definition}
A relation $B \subseteq \calC \times \calC$
is a \emph{system bisimulation} if for all $(P,Q) \in B$, for all
$\ssev{a}
\in \Ev_d \cup \Ev_s$, for all $\sigma \in \calS$ whenever
\begin{enumerate}
\item $\cf{P,\sigma} \xrightarrow{\ssev{a}} \cf{P',\sigma'}$, there exists
$\cf{Q',\sigma'}$ with $\cf{Q,\sigma} \xrightarrow{\ssev{a}}
\cf{Q',\sigma'}$,
$(P'\!,Q') \in B$.
\item $\cf{Q,\sigma} \xrightarrow{\ssev{a}} \cf{Q',\sigma'}$, there exists
$\cf{P',\sigma'}$ with $\cf{P,\sigma} \xrightarrow{\ssev{a}}
\cf{P'\!,\sigma'}$,
$(P',Q') \in B$.
\end{enumerate}
$P$ and $Q$ are \emph{system bisimilar},
$P \sim_s Q$ if they are in a system
bisimulation.
\end{definition}

The results that held for HYPE also hold for stochastic HYPE, as
dealing with stochastic events in the proofs is straightforward.
Hence, system bisimulation is a congruence for Prefix, Choice and
Parallel, and if two uncontrolled systems have the
same set of prefixes, and are put in cooperation with the same
controller, then the two controlled systems are bisimilar. System
bisimulation can be lifted to the model level and congruence can
be shown for stochastic HYPE model product.  
As these are all straightforward
modifications of existing proofs, we omit presenting them formally
for reasons of space. The reader is referred to \cite{HYPE-journal}
for further details.

System bisimulation is a static bisimulation in the sense that it
does not consider the detailed behaviour of the model. It considers
which events can occur, matches on them and also matches strictly
on state. However, for stochastic events it requires exact matching
of rates, rather than of the overall rate to processes with the
same behaviour.  In the next section, we consider a less strict and
more useful equivalence.

\subsection{An equivalence based on stochastic bisimulation}

We now consider a bisimulation that takes conditions on dynamics
into account and relaxes the strict matching on states.  Furthermore,
we wish to relax the requirements for matching of stochastic
transitions, and move to a definition that is similar to stochastic
bisimulation \cite{Hill96,BuchB95a} where the combined rate to an equivalence
class for each action is considered. For this, we require additional
definitions. The sum in the definition is taken over a multiset.

\begin{definition}
For the transitions $\blang P_1, \sigma_1 \brang
\lsxrightarrow{\:\ssev{a}\:}
\blang P_2, \sigma_2 \brang$ let $\multi(\blang P_1, \sigma_1
\brang \lsxrightarrow{\:\ssev{a}\:} \blang P_2, \sigma_2 \brang)$ be the
number of such transitions.
\end{definition}

\begin{definition}
Given a stochastic HYPE controlled system $P = \Sigma \syncstar \Con$,
the function $r: \calF \times \Ev_s \times \calC \rightarrow \bbR^+$ is defined
as
\[ r(\cf{P,\sigma},\sev{a},\Sigma \syncstar \Con') = \begin{cases}
      \act(\sev{a})\cdot\multi(\blang P, \sigma
\brang \mlsxrightarrow{\sev{a}} \blang \Sigma \syncstar \Con', \sigma' \brang)
 & \text{if\ } \Con \xrightarrow{\sev{a}} \Con' \\
      0             & \text{otherwise}
      \end{cases}
\]
Furthermore, this function can be extended to sets $C \subseteq \calC$,
$r(\cf{P,\sigma},\sev{a},C) = \sum \{ r(\cf{P,\sigma},\sev{a},Q) \mid Q
\in C \}$.
\end{definition}

\begin{definition}
Given $\equiv$, an equivalence relation over states,
an equivalence relation $B \subseteq \calC \times \calC$
is a \emph{stochastic system bisimulation with respect to $\equiv$} if 
for all $(P,Q) \in B$, for all
$\sigma \equiv \tau$, for all $C \in (\CF/B)/\equiv$,
\begin{enumerate}
\item for all $\ev{a} \in \Ev_d$, 
whenever
\begin{enumerate}
\item $\cf{P,\sigma} \xrightarrow{\ev{a}} \cf{P',\sigma'} \in C$, there exists
$\cf{Q',\tau'} \in C$ such that $\cf{Q,\tau} \xrightarrow{\ev{a}}
\cf{Q',\tau'}$. 
\item $\cf{Q,\tau} \xrightarrow{\ev{a}} \cf{Q',\tau'} \in C$, there exists
$\cf{P',\sigma'} \in C$ such that $\cf{P,\sigma} \xrightarrow{\ev{a}}
\cf{P'\!,\sigma'}$.
\end{enumerate}
\item for all $\sev{a} \in \Ev_s$, 
$r(\cf{P,\sigma},\sev{a},C) = r(\cf{Q,\tau},\sev{a},C)$. 
\end{enumerate}
$P$ and $Q$ are \emph{stochastic system bisimilar with respect to $\equiv$},
$P \sim_s^\equiv Q$ if they are in a stochastic system
bisimulation with respect to $\equiv$.
\end{definition}

Note that even in the case that $\equiv$ is equality, stochastic system
bisimulation with respect to $\equiv$ is less
strict than isomorphism over the instantaneous transitions.  Next,
we consider under which conditions $\sim_s^\equiv$ is a congruence.
We need two definitions to characterise the interaction of $\equiv$
and functions used in the operational semantics.

\begin{definition}
An equivalence $\equiv$ over states \emph{is preserved by
updates} if $\sigma \equiv \tau$ implies that
$\sigma[\iota \mapsto (r,I)]\equiv\tau[\iota \mapsto (r,I)]$.
\end{definition}

\begin{definition}
An equivalence relation $\equiv$ over states \emph{is preserved by $\Gamma$}
if $\sigma_i \equiv \tau_i$ for $i=1,2,3$ implies that
$\Gamma(\sigma_1,\sigma_2,\sigma_3) \equiv
\Gamma(\tau_1,\tau_2,\tau_3)$.  
\end{definition}

\begin{theorem}
\label{congruence}
$\sim_s^\equiv$ is a congruence for Influence, Choice and 
Cooperation, if $\equiv$ is preserved by updates and $\Gamma$.
\end{theorem}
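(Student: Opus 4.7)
The strategy is standard: for each operator $f$ in \{Influence prefix, Choice, Cooperation\}, I exhibit a candidate equivalence $B_f$ on $\calC$ containing both $\sim_s^\equiv$ and all pairs obtained by applying $f$ to $\sim_s^\equiv$-related arguments, and verify that $B_f$ is a stochastic system bisimulation with respect to $\equiv$. Since $\sim_s^\equiv$ is the largest such relation, this gives $B_f \subseteq {\sim_s^\equiv}$ and hence congruence for $f$. The two checks to carry out for each $B_f$ and each pair $(P,Q)\in B_f$ with $\sigma \equiv \tau$ are: (i) for each instantaneous $\ev{a}$, matching of $\ev{a}$-transitions up to $B_f$ on continuations and $\equiv$ on the resulting state; (ii) for each stochastic $\sev{a}$ and each $B_f/{\equiv}$-class $C$, equality of $r(\cf{P,\sigma},\sev{a},C)$ and $r(\cf{Q,\tau},\sev{a},C)$.

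For Influence prefix I take the equivalence closure of $\{(\ssev{a}\pc\alpha.P,\ssev{a}\pc\alpha.Q)\mid P\sim_s^\equiv Q\}\cup{\sim_s^\equiv}$. The only outgoing transition applies the Prefix-with-influence rule, mapping $\sigma$ to $\sigma[\iota\mapsto(r,I)]$; preservation of $\equiv$ by updates yields $\sigma[\iota\mapsto(r,I)]\equiv\tau[\iota\mapsto(r,I)]$, and the continuation pair $(P,Q)$ already lies in $\sim_s^\equiv\subseteq B_I$. The stochastic clause is immediate since both sides carry multiplicity one and share the rate $\act(\sev{a})$; the Prefix-without-influence case is identical without the update. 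For Choice I take the equivalence closure of $\{(P+R,Q+R)\mid P\sim_s^\equiv Q\}\cup{\sim_s^\equiv}$. Every outgoing move of $P+R$ is inherited from exactly one summand via the Choice rule; $R$-moves match themselves, and $P$-moves match by $P\sim_s^\equiv Q$. The multisets of transitions from the two summands are disjoint, so the combined rate into any $B_C/{\equiv}$-class $C$ splits as $r(\cf{P,\sigma},\sev{a},C)+r(\cf{R,\sigma},\sev{a},C)$, and both summands match across the two sides.

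For Cooperation I take the equivalence closure of $\{(P\sync{L}R,Q\sync{L}R)\mid P\sim_s^\equiv Q\}\cup{\sim_s^\equiv}$, and split on whether $\ssev{a}\in L$. The asynchronous case is analogous to Choice, since the two components again contribute disjointly to the multiset of transitions. In the synchronised case, every $\ssev{a}$-move of $P\sync{L}R$ pairs $\cf{P,\sigma}\mlsxrightarrow{a}\cf{P',\tau_P}$ with $\cf{R,\sigma}\mlsxrightarrow{a}\cf{R',\tau_R}$ and yields the state $\Gamma(\sigma,\tau_P,\tau_R)$. Using the $\sim_s^\equiv$-match $\cf{Q,\sigma}\mlsxrightarrow{a}\cf{Q',\tau_Q}$ with $\tau_P\equiv\tau_Q$ and the same $R$-move on the opposite side, the matched transition ends in $\Gamma(\sigma,\tau_Q,\tau_R)$; the hypothesis that $\equiv$ is preserved by $\Gamma$ then yields $\Gamma(\sigma,\tau_P,\tau_R)\equiv\Gamma(\sigma,\tau_Q,\tau_R)$, which is what the second clause of the bisimulation demands. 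Stochastic rate matching follows by partitioning the multiset of synchronised $\sev{a}$-transitions according to the $B_P/{\equiv}$-class of their target configuration and summing, applying the $\sim_s^\equiv$-matching condition over each fixed synchronised $R$-move.

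The main obstacle is this last step of the Cooperation argument: one has to decompose the synchronised $\sev{a}$-multiset in a way that respects \emph{both} the congruence closure $B_P$ and the state equivalence $\equiv$, and then argue that summing $\act(\sev{a})$-weighted multiplicities over each class yields equal totals on the two sides. Preservation of $\equiv$ by $\Gamma$ is precisely what lets the state components of synchronised targets be pooled into a single $\equiv$-class irrespective of which $P$- and $R$-matches were used; well-definedness of $\Gamma$ on the stochastic HYPE systems considered here ensures each synchronised pair really does give rise to a transition, so that the multiplicities on the two sides are in bijection via the bisimulation.
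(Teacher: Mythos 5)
Your proposal is correct and follows essentially the same route as the paper's proof: a case analysis over Prefix, Choice and Cooperation, invoking preservation by updates for the prefix case, additive decomposition of the rate function for Choice and unsynchronised Cooperation, and preservation by $\Gamma$ together with multiplicity bookkeeping for synchronised Cooperation. The only presentational difference is that you make the candidate bisimulations explicit as equivalence closures and decompose the synchronised stochastic multiset by fixing the $R$-move, whereas the paper writes the synchronised rate directly as $\act(\sev{a})$ times a product of component multiplicities; both amount to the same counting argument.
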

\begin{proof} Please refer to Appendix~\ref{sec:congruence}.
\end{proof}

This theorem describes the conditions on the equivalence over states
for congruence. Both preservation by updates and by $\Gamma$ are
strong conditions, but as we will see later in this section, not
always necessary.

Because of the specific form of well-defined stochastic HYPE models,
congruence with respect to some operators is less important.  It
is not possible to obtain a well-defined stochastic HYPE model by
applying Prefix with Influence or Choice to an existing well-defined
stochastic HYPE model. However Cooperation and Prefix are used to
construct a controlled system from an uncontrolled system and
controller, hence congruence can be used to prove further results
as in the next theorem.

\begin{theorem}
\label{thm:equivcon}
Given an uncontrolled system $\Sigma$ and two controllers
such that $\Con_1 \sim_s^{\equiv} \Con_2$ and let $\equiv$
be preserved by $\Gamma$ then $\Sigma \syncstar \ev{init}.Con_1
\sim_s^{\equiv} \Sigma \syncstar \ev{init}.Con_2$ \end{theorem}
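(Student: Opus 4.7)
The plan is to reduce this theorem to two applications of the congruence result Theorem~\ref{congruence}. First I would establish $\ev{init}.\Con_1 \sim_s^\equiv \ev{init}.\Con_2$ using the Prefix clause of that theorem, and then combine it with the reflexive fact $\Sigma \sim_s^\equiv \Sigma$ via the Cooperation clause to conclude $\Sigma \syncstar \ev{init}.\Con_1 \sim_s^\equiv \Sigma \syncstar \ev{init}.\Con_2$.

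The first step is essentially immediate: in a well-defined stochastic HYPE model controllers consist purely of event prefixes with no influence, so $\ev{init}.\Con_i$ falls under the Prefix without Influence operational rule. The unique transition from $\blang \ev{init}.\Con_i, \sigma \brang$ is $\lsexrightarrow{init}$ into $\blang \Con_i, \sigma \brang$, leaving $\sigma$ unchanged. Verifying the stochastic bisimulation clauses at any $\sigma \equiv \tau$ is direct: the $\ev{init}$-transitions match, the post-states remain $\sigma$ and $\tau$ (still $\equiv$-related), the derivatives $\Con_1, \Con_2$ are $\sim_s^\equiv$-related by hypothesis, and the rate clause is vacuous since there are no stochastic prefixes. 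Because no state update occurs, preservation by updates is not invoked at this stage.

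The Cooperation step is where the main subtlety lies: Theorem~\ref{congruence} names both preservation by updates and by $\Gamma$ as hypotheses, whereas the present theorem assumes only the latter. I would leverage the structure of well-defined controlled systems to close this gap. Well-definedness forces every event to be shared, so every transition of $\Sigma \syncstar \ev{init}.\Con_i$ arises from the Cooperation with synchronisation rule; moreover, since controllers perform only prefix-without-influence transitions, the controller side contributes no update and the third argument of $\Gamma$ coincides with the ambient state. Taking the candidate bisimulation $B = \{(\Sigma' \syncstar R_1,\, \Sigma' \syncstar R_2) \mid \Sigma' \in \ds(\Sigma),\, R_1 \sim_s^\equiv R_2\}$ together with the initial pair, the matching conditions at any $\sigma \equiv \tau$ reduce to comparing post-states of the form $\Gamma(\sigma,\sigma[\iota\mapsto(r,I)],\sigma)$ and $\Gamma(\tau,\tau[\iota\mapsto(r,I)],\tau)$, with the same $(\iota,r,I)$ on both sides because $\Sigma$ is shared. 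I would then invoke $\Gamma$-preservation to conclude these are $\equiv$-related, while the rate matching to equivalence classes follows from $R_1 \sim_s^\equiv R_2$ together with the consistency of $\Sigma$'s activity on a given event. The delicate point I expect to scrutinise is whether $\Gamma$-preservation alone suffices to relate $\Gamma(\sigma,\sigma[u],\sigma)$ and $\Gamma(\tau,\tau[u],\tau)$ given only $\sigma \equiv \tau$; if it does not in general, the theorem implicitly requires the additional hypothesis of preservation by updates, which should be surfaced explicitly.
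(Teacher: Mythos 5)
Your overall route is the same as the paper's: the paper's entire proof of this theorem is the phrase ``By congruence'', i.e.\ an appeal to Theorem~\ref{congruence}, which is exactly the reduction you set up (prefix step for $\ev{init}$, then cooperation). Where you go beyond the paper is in noticing that the hypotheses do not line up: Theorem~\ref{congruence} requires $\equiv$ to be preserved both by updates and by $\Gamma$, while the present theorem assumes only preservation by $\Gamma$. Your instinct about the ``delicate point'' is right, and the honest answer is the second branch of your alternative: $\Gamma$-preservation alone does not suffice. As you observe, since every event of a well-defined controlled system is shared and the controller contributes no influence, the post-state of a transition of $\Sigma' \syncstar R_i$ collapses to $\Gamma(\sigma,\sigma[u_1\ldots u_n],\sigma)=\sigma[u_1\ldots u_n]$, so matching the two sides amounts to deriving $\sigma[u_1\ldots u_n]\equiv\tau[u_1\ldots u_n]$ from $\sigma\equiv\tau$; invoking preservation by $\Gamma$ here is circular, because the middle arguments of the two $\Gamma$-expressions are precisely the states one is trying to relate, and there is no non-circular way to manufacture the needed premise. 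Hence the theorem as stated implicitly requires preservation by updates --- indeed, for this particular statement that is the hypothesis doing the real work, since the only $\Gamma$-applications that arise are the degenerate ones above. Your first step (the $\ev{init}$ prefix leaves the state untouched, so no update-preservation is needed there) is fine as written. In short: same decomposition as the paper, but your more careful execution surfaces a hypothesis that the paper's one-line proof silently assumes, and your proposed repair cannot close that gap --- it should be stated as an additional assumption instead.
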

\begin{proof} By congruence.  \end{proof}

Next, we introduce a specific equivalence over states, and prove
that it gives the same ODEs for models that are stochastically
system bisimilar with respect to it.

\begin{definition} Two states are equivalent, $\sigma \doteq \tau$,
when for all $V \in \V$ and $f(\W)$, $\ad(\sigma,V,f(\W)) =
\ad(\tau,V,f(\W))$ where \[\ad(\sigma,V,f(\W)) = \sum \lms r \mid
\iv(\iota)=V, \sigma(\iota)=(r,I(\W)), f(\W)=\assign{I(\W)}\rms\]
\end{definition}

This equivalence abstracts from individual influences by requiring
that the sum of strengths for each variable and influence type is
preserved. It is not preserved by updates or $\Gamma$; however, as
will be seen in the example section, it still provides a useful
equivalence.  This is because, for certain models, it is the case
that $\sigma \doteq \tau$ implies $\Gamma(\sigma,\sigma',\sigma'')
\doteq \Gamma(\tau,\tau',\tau'')$ even though $\sigma' \not\doteq
\tau'$ and $\sigma'' \not\doteq \tau''$. This can be achieved by
imposing additive conditions on the rates in a model for specific
events.  To illustrate this, consider the subcomponents
\begin{eqnarray*}
A_i & \rmdef & \ev{a}\pc(k_i,r_i,I).A_i + \ev{init}\pc(k_i,0,I).A_i \\
B_i & \rmdef & \ev{a}\pc(j_i,s_i,I).B_i + \ev{init}\pc(j_i,t,I).B_i
\end{eqnarray*}
with $\iv(k_i) = X = \iv(j_i)$. After the $\ev{init}$ event, we
have states $\sigma_i = \{ k_i \mapsto (0,I), j_i \mapsto (t,I)\}$
and therefore $\ad(\sigma_i,X,I) = t$. Clearly, $\sigma_1 \doteq \sigma_2$.
However, after an $\ev{a}$ event, we have states $\tau_i = \{ k_i
\mapsto (r_i,I), j_i \mapsto (s_i,I)\}$ and $\ad(\tau_i,A_i \syncstar
B_i,I) = r_i + s_i$.  Hence $\tau_1 \doteq \tau_2$ if and only if
$r_1 + s_1 = r_2 + s_2$. This does not require that $r_1 = r_2$ and
$s_1 = s_2$ which are the conditions required for equivalent states
for $A_1$ and $B_1$, and $A_2$ and $B_2$.

Next, we wish to lift stochastic system bisimulation with respect to an
equivalence,  from controlled system level to model level, both to
consider congruence of model product and to impose conditions on
the elements of the tuples.

\begin{definition}
\label{def:systemBisim}
Let $(P_i,\,\V,\,\IN,\,\IT,\Ev_d,\Ev_s,\,\Ac,
\ec,\,\iv,\,\EC,\,\ID)$ for $i=1,2$ 
be two stochastic HYPE models. They are \emph{stochastic system bisimilar
with respect to an equivalance $\equiv$ over states}
(denoted $P_1 \thicksim^{\equiv}_\mathbf{sm} P_2$) if $P_1 \sim^{\equiv}_s P_2$.
\end{definition}

Let the notation $P_\sigma$ denote the collection of ODEs for model
$P$ in state $\sigma$. The next results shows that models that are
stochastic system bisimilar with respect to $\doteq$ have the same ODEs.

\begin{theorem}
\label{thm:doteqODES}
Given two stochastic HYPE models
$(P,\,\V,\,\IN,\,\IT,\Ev_d,\Ev_s,\,\Ac,
\ec,\,\iv,$ $\EC,\,\ID)$ and $(Q,\,\V,\,\IN,\linebreak\IT,\Ev_d,\Ev_s,\,\Ac,
\ec,\,\iv,\,\EC,\,\ID)$, if
$P \thicksim_\mathbf{sm}^{\doteq} Q$, then
for every configuration $\cf{P',\sigma_1} \in \ds(P)$ and
$\cf{Q',\sigma_2} \in \ds(Q)$ such that $P' \thicksim_\mathbf{sm}^{\doteq} 
Q'$, $P_{\sigma_1} = Q_{\sigma_2}$.
\end{theorem}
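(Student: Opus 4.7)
The plan is to prove the theorem in two stages: first showing that the ODE $P_\sigma$ depends on the state $\sigma$ only through its $\doteq$-equivalence class (so that any two $\doteq$-equivalent states give identical ODEs across the two models, since they share the same $\ID$), and second that $P' \thicksim_\mathbf{sm}^{\doteq} Q'$ together with reachability forces $\sigma_1 \doteq \sigma_2$.

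For the first stage, I would unpack Definition~\ref{def:shypetotdsha}: in mode $\cf{P',\sigma}$ with $\sigma=\{\iota_i\mapsto(r_i,I_i)\mid i=1,\ldots,k\}$, the continuous transitions are exactly $(\cf{P',\sigma},\vr{1}_{\iv(\iota_i)},r_i\cdot\assign{I_i})$ for $i=1,\ldots,k$. The ODE for variable $V$ in this mode is therefore
\[
\dot{V} \;=\; \sum_{i\,:\,\iv(\iota_i)=V} r_i\cdot\assign{I_i(\W_i)}
\;=\; \sum_{f(\W)} f(\W)\cdot\ad(\sigma,V,f(\W)),
\]
where the second equality groups the contributing influences by their interpreted function $f(\W)=\assign{I(\W)}$. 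Since both models share the same $\ID$ (hence the same functions $f(\W)$) and the same $\V$ and $\iv$, whenever $\sigma_1\doteq\sigma_2$ we obtain $\ad(\sigma_1,V,f(\W))=\ad(\sigma_2,V,f(\W))$ for every $V$ and every $f(\W)$, and so $P_{\sigma_1}=Q_{\sigma_2}$.

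For the second stage, I would show by induction on the length of a trace from the (pre-$\ev{init}$) initial configuration that any reachable pair $\cf{P',\sigma_1}\in\ds(P)$ and $\cf{Q',\sigma_2}\in\ds(Q)$ with $P' \sim_s^{\doteq} Q'$ in fact satisfies $\sigma_1\doteq\sigma_2$. The base case uses the bisimulation condition applied to the initial configurations (whose states are identical, hence trivially $\doteq$-related): by the matching clause for the $\ev{init}$ transition, the resulting configurations $\cf{P_0,\sigma_{0,P}}$ and $\cf{Q_0,\sigma_{0,Q}}$ land in the same class $C\in(\CF/B)/\doteq$, yielding $\sigma_{0,P}\doteq\sigma_{0,Q}$. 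For the inductive step, suppose $\cf{P'',\sigma}$ and $\cf{Q'',\tau}$ with $P''\mathrel{B}Q''$ and $\sigma\doteq\tau$ have been reached, and an event (instantaneous or stochastic) takes $P''$ to $\cf{P''',\sigma'}$. In the instantaneous case the matching clause directly gives $\cf{Q''',\tau'}$ with $\sigma'\doteq\tau'$ and $P'''\mathrel{B}Q'''$; in the stochastic case the rate-matching condition $r(\cf{P'',\sigma},\sev{a},C)=r(\cf{Q'',\tau},\sev{a},C)$ implies that whenever the $P''$-side has a transition into a class $C$, the $Q''$-side does too, again providing a matching successor in the same $\doteq/B$-class.

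Combining the two stages yields $P_{\sigma_1}=Q_{\sigma_2}$. The main obstacle is the second stage: although the structural form of the argument is routine induction, some care is needed in handling the stochastic case, where the bisimulation only equates \emph{summed} rates into each equivalence class rather than individual transitions, so one must argue existence of a matching successor in the class rather than picking out a single edge; and in handling the interaction of $\doteq$ with the updates produced by the Prefix-with-influence and Cooperation-with-synchronisation rules, where $\doteq$ is not preserved by updates or $\Gamma$ in general, so the preservation of $\doteq$ along the trace must be derived from the bisimulation's external matching condition rather than from a congruence argument.
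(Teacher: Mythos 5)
Your proposal is correct and follows essentially the same route as the paper: the heart of both arguments is regrouping the mode's continuous transitions by interpreted influence type so that the ODE for each variable $V$ becomes $\sum_{f(\W)} \assign{I(\W)}\cdot\ad(\sigma,V,f(\W))$, which depends on $\sigma$ only through its $\doteq$-class. The paper simply asserts the preliminary fact that $P'\thicksim_\mathbf{sm}^{\doteq}Q'$ with both configurations reachable forces $\sigma_1\doteq\sigma_2$, whereas you supply an inductive justification along matched traces; that is an elaboration of the paper's sketch rather than a divergence.
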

\begin{proof} Please refer to Appendix~\ref{sec:equiv}.
\end{proof}

We can also define bisimulation at the TDSHA level and relate the
stochastic HYPE bisimulation to that of the bisimulation over TDSHA.
In Appendix \ref{sec:bisimforTDSHA}, we show that two stochastic HYPE models
that are stochastic system bisimilar with respect to ${\doteq}$
have TDSHAs that are bisimilar but the converse does not necessarily
hold.

%
%
%

\section{Example revisited: equivalence}

\begin{figure}[t]
{\renewcommand{\arraystretch}{1.20}
$$\begin{array}{rcl}
\Con_D & \rmdef & D \parallel C_e \parallel C_f \\
\\
D       & \rmdef & \ev{check}_1.D_{1,1} + \ev{check}_2.D_{1,2} \\
D_{1,i} & \rmdef & \sev{remove}_i.D_{2,i} \\
D_{2,i} & \rmdef & \ev{assem}_i.D + \ev{check}_i.D_{3,i} \\
D_{3,i} & \rmdef & \ev{assem}_i.D_{1,i+1} + \sev{remove}_i.D_{4} \\
D_{4}   & \rmdef & \ev{assem}_1.D_{2,2} + \ev{assem}_2.D_{2,1} \\
\end{array}$$}

\caption{Controller for the assembly system with a single controller
for the two assembly machines (addition is modulo 2)}
\label{fig:Dcontroller}
\end{figure}

We can now consider equivalence in the context of the manufacturing
system. We can define
$\Con_D$, a composite controller for two machines and access to the pool
as given in Figure \ref{fig:Dcontroller}, to replace the controllers $C_1$,
$C_2$ and $C_m$.

First, we show that ${\Sys \syncstar \ev{init}.\Con} \sim_s^{\equiv}
{\Sys \syncstar \ev{init}.\Con_D}$ for a suitable $\equiv$. This
can be done by Theorem \ref{thm:equivcon} and requires that a
suitable equivalence relation be identified over the two controllers.
We can ignore states inf configurations since they are not affected
by the events in the controller. The relation $B$ that follows is
a stochastic system bisimulation with respect to $=$, and illustrates that
therefore $(C_1 \parallel C_2) \syncstar C_m)
\sim_s^{=} D$ and by congruence $\Con \sim_s^{=} Con_D$,
since equality preserves $\Gamma$.
Hence ${\Sys \syncstar
\ev{init}.\Con} \sim_s^{=} {\Sys \syncstar \ev{init}.\Con_D}$.
\begin{eqnarray*}
B & = & \bigl\{ 
((C_1 \parallel C_2) \syncstar C_m, D), 
    ((C''_1 \parallel C''_2) \syncstar C_m, D_{4}), \\
& & ((C'_1 \parallel C_2) \syncstar C'_m, D_{1,1}), 
    ((C_1 \parallel C'_2) \syncstar C''_m, D_{1,2}), \\
& & ((C''_1 \parallel C_2) \syncstar C_m, D_{2,1}), 
    ((C_1 \parallel C''_2) \syncstar C_m, D_{2,2}), \\
& & ((C''_1 \parallel C'_2) \syncstar C_m, D_{3,1}), 
    ((C'_1 \parallel C''_2) \syncstar C_m, D_{3,2})
\bigr\}
\end{eqnarray*}
Figure~\ref{fig:SimCon2} shows the average for these two assembly systems
over 5000 simulations. The similarity between these averages suggest that
our definition of bisimulation has captured the similarities between the
two systems.

We note here that the stochastic HYPE system $\Assem$ (which uses the
five
subcontrollers) is well-behaved,
according to Definition \ref{def:well-behaved}. This holds because
each component of the controller is of the form required by Proposition
\ref{prop:well-behavedness}, and contains a stochastic event
($\sev{remove}_i$). To show that $\Assem_D$ (which uses three
subcontrollers one of which is 
$\Con_D$) is well-behaved, we can
simply invoke bisimilarity of the two systems, noticing that
well-behavedness, being a condition on sequences of events, is
preserved by bisimulation.

As mentioned above, the subcontrollers $(C_1 \parallel C_2) \syncstar C_m$
and $D$ both ensure that only one machine has access to the pool at a
time. Another approach is to use only the controller $(C_1 \parallel
C_2)$ and modify the event conditions for each machine. We add a new
variable $M$ and redefine some event conditions.

\begin{eqnarray*}
\ec(\ev{init}) & = & (true,\;\rsdefa{P}{P_0} \:\wedge\: \rsdefa{T_1}{0}
\:\wedge\: \rsdefa{T_2}{0} \:\wedge\: \rsdefa{B}{B_0} \:\wedge\:
\rsdefa{M}{0}) \\
\ec(\ev{check}_i) & = & (P\geq n_i \:\wedge\: M=0,\; \rsdefa{M}{1}) \\
\ec(\sev{remove}_i) & = & (\taketime_i,\;\rsdefa{P}{P\!-\!n_i} \:\wedge\:
\rsdefa{T_i}{0} \:\wedge\: \rsdefa{M}{0}) \\
\ec(\ev{assem}_i) & = & (T_i \geq \assemtime_i,\; \rsdefa{B}{B\!+\!m_i}) 
\end{eqnarray*}
This modifies the system from one where mutual exclusion is determined
by explicit sequencing of actions in the controller to one where a
semaphore is used. Since the definition of stochastic system
bisimulation has a requirement for events conditions to be equal for
each event, we can no longer directly use this. Instead we can reason
about the behaviour of the controllers from each system and show that
the beahviour are the same, taking into account the different event
conditions. We can then use Theorem~\ref{thm:equivcon} to argue for
the stochastic system bisimilarity of the two controlled systems.

We wish to show that $C_1 \parallel C_2$ with these event conditions
has the same behaviour as $(C_1 \parallel C_2) \syncstar C_m$. The
labelled transition system of $C_1 \parallel C_2$ has nine derivatives
(including itself).
Eight of these derivatives have the same derivatives in the labelled
transition systems of $(C_1 \parallel C_2) \syncstar C_m$, if we
drop the contribution by derivatives of $C_m$. The derivative that
does not appear is $C'_1 \parallel C'_2$ where each controller has
performed an $\ev{check}_i$ action and can then perform a $\ev{remove}_i$
action.  If we can show that the value of the variable $M$ ensures
that this derivative cannot occur, then it is possible to contruct an
isomorphism between the two LTSs and also to conclude that the two
systems (one using $(C_1 \parallel C_2) \syncstar C_m$ as the
controller, and one using $C_1 \parallel C_2$ with additional
modification to the new variable $M$) are stochastic system bisimilar
with respect to equality. To see that the derivative $C'_1 \parallel
C'_2$ is not reachable from the initial state, consider that if the
first machine has performed $\ev{check}_1$ then $M$ now has value
1, and it is not possible for the second machine to perform
$\ev{check}_1$ because it has the guard that requires $M$ to be
zero. $M$ is only reset to zero when the event $\sev{remove}_1$
ends. Hence it is not possible for $C_2$ to become $C'_2$ until
$C'_1$ has become $C''_1$. This ensures $C'_1 \parallel C'_2$
cannot happen. A similar argument applies if the second machine
executes $\ev{check}_2$ first. Thus the bisimilarity is established.


Next, we consider the use of the bisimulation $\sim_s^{\doteq}$
where by using $\doteq$ we require that flows in state have a weaker
form of equivalence than equality. We illustrate this through
allowing different arrival rates for the feeds into the pool.  Let
$Sys_{a_1,a_2,a_3}$ be the system such that \begin{eqnarray*} \Feed_1
& \rmdef & \ev{init}\pc(p_1,a_1,\const).\Feed_1 +
		 \ev{full}\pc(p_1,0,\const).\Feed_1 \\
\Feed_2 & \rmdef & \ev{init}\pc(p_2,a_2,\const).\Feed_2 +
		 \ev{full}\pc(p_2,0,\const).\Feed_2  \\
\Feed_3 & \rmdef & \ev{init}\pc(p_3,a_3,\const).\Feed_3 +
		 \ev{full}\pc(p_3,0,\const).\Feed_3
\end{eqnarray*} Then ${\Sys_{a_1,a_2,a_3} \syncstar \ev{init}.\Con_1}
\sim_s^{\doteq} {\Sys_{b_1,b_2,b_2} \syncstar \ev{init}.\Con_1}$
whenever $a_1 + a_2 + a_3 = b_1 + b_2 + b_3$. The ODE that describes
the change in the amount of items in the pool is $dP/dt =  a_1 +
a_2 + a_3 = k$ and $\ad(\sigma,P,\const) = \ad(\tau,P,\const)$.
Hence as long as the systems being compared have the same value for
$k$, the behaviour will be bisimilar. Replacing the $Feed_i$ by a
single subcomponent \begin{eqnarray*} \Feed & \rmdef &
\ev{init}\pc(p,a,\const).\Feed +
		 \ev{full}\pc(p,0,\const).\Feed
\end{eqnarray*} with $\iv(p)=P$, also provides a bisimilar model
with respect to $\doteq$ as long as $a=k$.


\begin{figure}
\begin{center}
\begin{tabular}{cc}
\hspace*{-0.9cm}
\includegraphics[width=8.5cm]{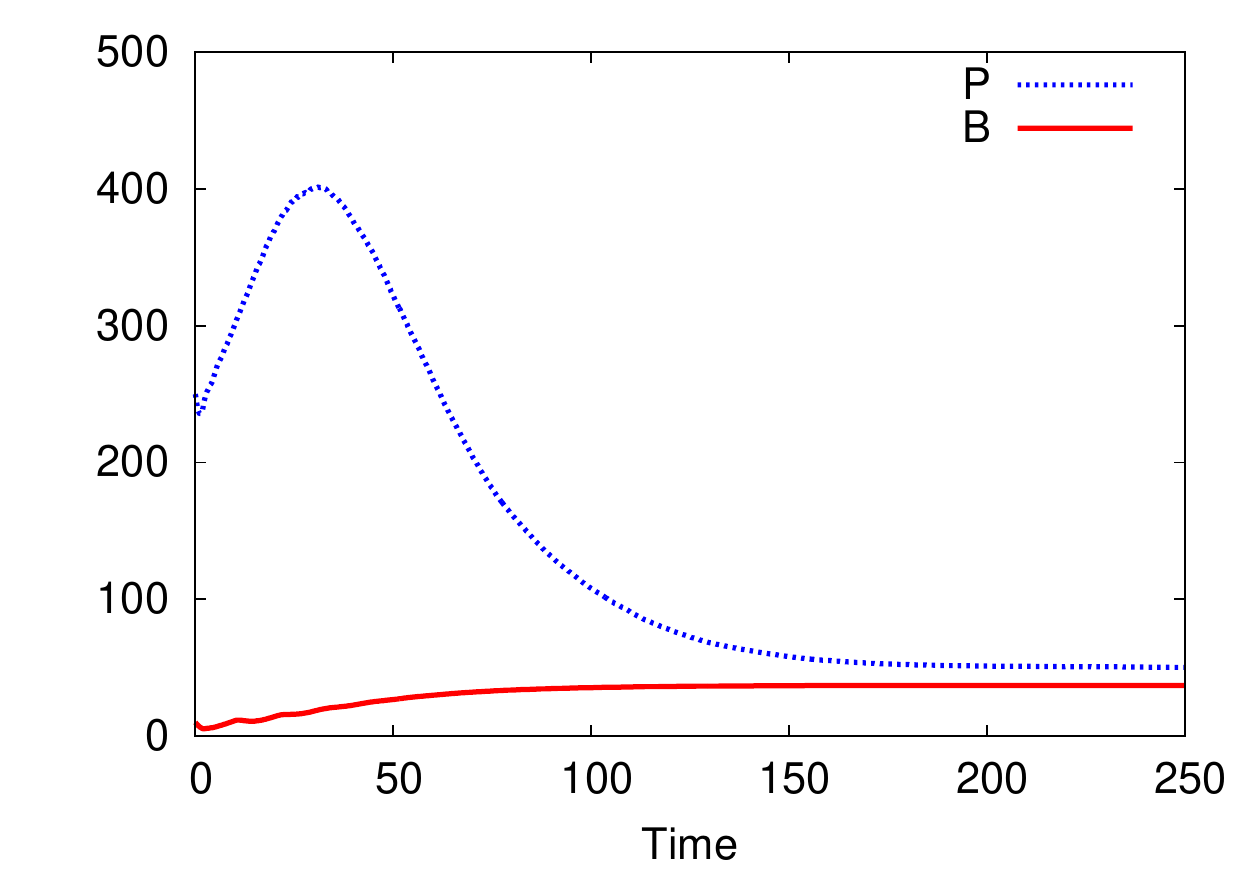}
&
\hspace*{-0.75cm}
\includegraphics[width=8.5cm]{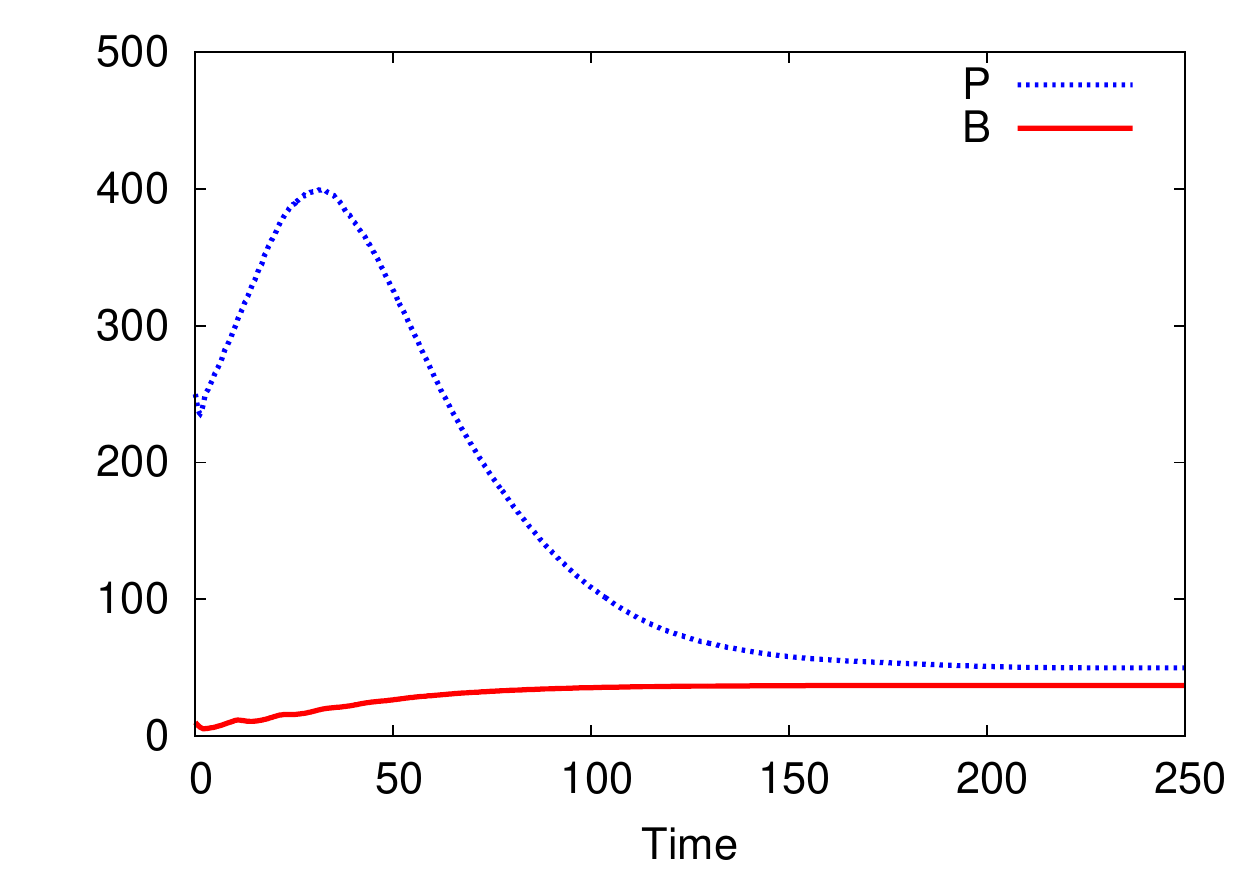}
\end{tabular}
\end{center}
\caption{Average values for $P$ and $B$ for $\Assem$ over 10000
simulations
(left) and average values for $P$ and $B$ for $\Assem_D$ over 10000
simulations
(right) using the same parameters as in Figure~\ref{fig:parrun}.}
\label{fig:SimCon2}
\end{figure}

\section{Related work}

As described previously \cite{HYPE-journal}, HYPE takes a finer
grained, less monolithic approach than the other process algebras
for hybrid systems \cite{BergBM:05a,BeekBMRRS:06a,RounRS:03a,CuijCR:05a}
because it enables the modelling of individual flows. \cite{KhadK:06a}
compares these other process algebras based on the train gate
controller example and for each of these process algebras, complete
ODEs appear in the syntactic description of the system.  Likewise,
hybrid automata \cite{HenzH:96a} require explicit definition of
ODEs, and are less compositional than HYPE since the product of
hybrid automata, requires disjoint variable sets. By contrast, HYPE
product \cite{HYPE-journal} does not require this and hence product
construction allows shared variables in each component and hence
richer behaviour through interaction. Other compositional hybrid
formalisms such as \textsc{Charon} \cite{AlurAGLS:06b}, \textsc{Shift}
\cite{DeshDGV:96a}, and HyCharts \cite{GrosGS:02a} do not map
directly to hybrid automata, compared to HYPE.  Hybrid action systems
\cite{RonkRRS:03a}, Hybrid Interacting Maude \cite{FadlFOA:11a} and
bond graphs \cite{Paynter61,Cuijpers08} are other approaches that
take a less monolithic approach.

Stochastic HYPE retains the fine grained approach to modelling
flows, and the more expressive product construction, as well as
adding the ability to model stochastic events. To the best of our
knowledge, it is the only process algebra with these modelling
capabilities.

Other formalisms for stochastic hybrid modelling include TDSHA
\cite{PA:Bortolussi:2010:HybridDynamicsStochProg:TCS,BorP11}, PDMP
\cite{Davis93}, stochastic hybrid automata (SHA) \cite{BujoBL:04a}
and the recent stochastic extension of UPPAAL \cite{DaviDDLLMPS:12a}.
Stochastic HYPE provides a language for reasoning about some of these
formalisms since these are the semantic objects described by the
stochastic HYPE syntax. TDSHA were developed to provide a
transition-focussed approach to PDMP and hence providing a more
consistent manner of treating the three different types of behaviour
considered: continuous, stochastic and instantaneous behaviour.

Research into making stochastic hybrid systems modelling compositional
has been considered \cite{StruSJS:03a}, 
where Communicating Piecewise Deterministic Markov Processes (CPDP)
are introduced.  This is an automata-based formalism which models
a system as interacting automata.  Their chosen level of abstraction
is somewhat lower level than ours, comparable with TDSHA. In CPDP,
as in stochastic HYPE, instantaneous transitions may be triggered
either by conditions of the continuous variables (boundary-hit
transitions) or by the expiration of a stochastically determined delay
(Markov transitions).  Interaction between automata is based on
\emph{one-way} synchronisation: in each interaction one partner is
active while the other is passive. In stochastic HYPE, instead, all
components may be regarded as active with respect to each transition
in which they participate, as activation conditions are specified
uniquely in the model. Components participating in a discrete
transition are determined by the construction of the stochastic
HYPE model, where the synchronisation set $L$ in $\!\!{\smsync{L}}\!\!$
specifies which actions must be shared.

The synchronization mechanics of CPDP has been extended by
\cite{StruSS:05b}, introducing an  operator which  exploits all
possible interactions of active and passive actions. \cite{StruSS:05a}
define a notion of bisimulation for both PDMPs and CPDPs and show
that if CPDPs are bisimilar then they give rise to bisimilar PDMPs.
Furthermore the equivalence relation is a congruence with respect
to the composition operator of CPDPs. The definition of bisimilarity
presented in Appendix~\ref{sec:bisimforTDSHA} is based on these
definitions for PDMPs and CPDPs.

Stochastic HYPE has been used to model various systems, including
an orbiter \cite{BortBGH:11a}, a stochastic version of the train
gate \cite{HYPE-journal} and opportunistic networks \cite{BortBGH:12a}.
More recently, a large scale model of ZebraNet \cite{JuanJOWMPR:02a},
a wildlife monitoring project where individual zebra are nodes in
an opportunistic network, has been developed which includes
2-dimensional animal movement, animal behaviour, and different
opportunistic network protocols \cite{FengF:12a}. Stochastic HYPE
has also been used to provide hybrid semantics for the biological
process algebra Bio-PEPA \cite{GalpG:13a} where reactions are treated
stochastically or deterministically depending on species quantities
or reaction rates.

\section{Conclusion}

In this paper we have presented an extension of the hybrid process
algebra HYPE, in which non-urgent events fire at exponentially
distributed random times.  Although syntactically the modification
that this entails is minimal with respect to the original version
of HYPE, the semantics of the language is considerably enriched and
quantified analysis of the modelled behaviour becomes available
purely based on the model.  The stochastic hybrid systems obtained
from stochastic HYPE models fall in the class of Piecewise Deterministic
Markov Processes (PDMP) \cite{Davis93}.  Here we have shown how
such a semantics can be derived through the intermediary of
Transition-Driven Stochastic Hybrid Automata (TDSHA)
\cite{PA:Bortolussi:2010:HybridDynamicsStochProg:TCS}.  The mapping
that we present from stochastic HYPE to TDSHA differs from that
originally presented in \cite{BortBGH:11a}, as we work at the level
of the labelled transition system generated by the operational
semantics.  Nevertheless we show that the obtained TDSHA in each
case is equivalent (in Appendix~\ref{sec:SOSmapping}).  We have discussed a
number of bisumulation equivalences for stochastic HYPE with a
particular focus on notions that are pragmatic and coincide with
intuitive ideas of when behaviours coincide.  Furthermore we have
illustrated these with a case study based on an assembly line.

\paragraph{Acknowledgments:}
This work is partially supported by the EU project QUANTICOL, 600708.

\bibliographystyle{../eptcs}
\bibliography{../shype-tr}

\begin{appendix}

\newpage

\section{Congruence}
\label{sec:congruence}

\begin{theorem}
\label{congruence}
$\sim_s^\equiv$ is a congruence for Influence, Choice and
Cooperation, if $\equiv$ is preserved by updates and $\Gamma$.
\end{theorem}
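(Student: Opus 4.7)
The plan is to exhibit, for each operator, a candidate relation built from $\sim_s^\equiv$ together with the contextually generated pairs, and verify the two bisimulation clauses: matching of instantaneous transitions into common $(\CF/B)/\equiv$-classes, and equality of the aggregate rate $r(\cdot,\sev{a},C)$ on stochastic ones. Because each candidate contains $\sim_s^\equiv$, only the newly introduced pairs require fresh reasoning; in every case the work reduces to tracing how the operational rules assemble a successor state from its component pieces and invoking preservation of $\equiv$ at the appropriate step.

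For Influence, take $B = \sim_s^\equiv \cup \{(\ssev{a}\pc(\iota,r,I).P,\ssev{a}\pc(\iota,r,I).Q) \mid P \sim_s^\equiv Q\}$. The unique prefix transition carries $\sigma \equiv \tau$ to $\sigma[\iota\!\mapsto\!(r,I)] \equiv \tau[\iota\!\mapsto\!(r,I)]$ by preservation under updates, and the successor pair already lies in $\sim_s^\equiv$; the stochastic clause reduces to comparing a single rate $\act(\sev{a})$ on either side. For Choice, with $B = \sim_s^\equiv \cup \{(P+R,Q+R) \mid P \sim_s^\equiv Q\}$, the rules lift transitions verbatim from the summands without introducing any state modification of their own, so both clauses inherit directly from bisimilarity of $P$ and $Q$ and from reflexivity on $R$.

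The substantive case is Cooperation, with $B = \sim_s^\equiv \cup \{(P\sync{L}R,Q\sync{L}R) \mid P \sim_s^\equiv Q\}$. Events $\ssev{a}\notin L$ fire on one side only and leave the partner untouched, so matches are inherited as in Choice with preservation under updates handling the state. For $\ssev{a}\in L$, a synchronised move $\blang P\sync{L}R,\sigma\brang\lsexrightarrow{a}\blang P'\sync{L}R',\Gamma(\sigma,\tau_P,\tau_R)\brang$ decomposes into component moves from $P$ and $R$; given $\sigma \equiv \sigma'$, $P \sim_s^\equiv Q$ supplies a matching move $\cf{Q,\sigma'}\lsexrightarrow{a}\cf{Q',\tau_Q}$ into the correct $B$-class with $\tau_P \equiv \tau_Q$, while the structurally identical $R$-move from $\sigma'$ reaches an updated state $\tau_R'$ with $\tau_R \equiv \tau_R'$ by update-preservation, and preservation by $\Gamma$ then yields $\Gamma(\sigma,\tau_P,\tau_R) \equiv \Gamma(\sigma',\tau_Q,\tau_R')$. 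The main obstacle lies in the stochastic clause here, since the multiplicity of a composite transition combines the two component multiplicities and $r(\cdot,\sev{a},C)$ is an aggregate over all composites landing in $C$; the approach is to partition the composites by the $B$-class of the $R$-side successor and, for each such class, use $P \sim_s^\equiv Q$ to equate the $P$- and $Q$-side subsums into the appropriate $B$-classes on the $P$/$Q$ side, with the rate-consistency assumption allowing $\act(\sev{a})$ to be factored out uniformly so that the equality reduces to the multiplicity identity already provided by the hypothesis.
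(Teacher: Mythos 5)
Your proposal is correct and follows essentially the same route as the paper's proof: exhibit a candidate relation containing $\sim_s^\equiv$ for each operator, use preservation by updates for the prefix case, additivity of $r$ over summands for Choice, and preservation by $\Gamma$ together with the factorisation of the composite rate into $\act(\sev{a})$ times the product of the component multiplicity sums for synchronised Cooperation. The only differences are presentational — you name the candidate relations explicitly for every case where the paper does so only for Cooperation, and you describe the multiplicity bookkeeping as a partition by the $R$-side successor class where the paper writes the product formula directly.
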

\begin{proof}
Let $P_1 \sim_s^\equiv P_2$. 
\begin{description}
\item[Prefix with influence] We have the transition
$\cf{\ssev{a}\pc(\iota,r,I).P_1,\sigma} \lsexrightarrow{\ssev{a}}
\cf{P_1,\sigma[\iota \mapsto (r,I)]}$ and likewise the transition
$\cf{\ssev{a}\pc(\iota,r,I).P_2,\tau} \lsexrightarrow{\ssev{a}}
\cf{P_2,\tau[\iota \mapsto (r,I)]}$. Letting $\sigma \equiv \tau$,
and since $\equiv$ is preserved by updates, we know that $\sigma[\iota
\mapsto (r,I)]\equiv\tau[\iota \mapsto (r,I)]$.  Moreover, for all
$P$, $\sigma$ and $C$, $r(\cf{\sev{a}.P,\sigma},\sev{c},C) =
\act(\sev{a})$ if $\sev{a} = \sev{c}$, otherwise
$r(\cf{\sev{a}.P,\sigma},\sev{c},C) = 0$.  Hence we can conclude
that $\ssev{a}\pc(\iota,r,I).P_1 \sim_s^\equiv \ssev{a}\pc(\iota,r,I).P_2$.

\item[Prefix without influence] We know $\cf{\ev{a}.P_1,\sigma}
\lsexrightarrow{\ev{a}} \cf{P_1,\sigma}$ and $\cf{\ev{a}.P_2,\tau}
\lsexrightarrow{\ev{a}} \cf{P_2,\tau}$ with $\sigma \equiv \tau$.
Furthermore $r(\cf{\sev{a}.P,\sigma},\sev{c},C) = \act(\sev{a})$
if $\sev{a}=\sev{c}$ (otherwise $0$), for any $P$ and $\sigma$.
Since $P_1 \sim_s^\equiv P_2$, then we can conclude that $\ssev{a}.P_1
\sim_s^\equiv \ssev{a}.P_2$.

\item[Choice] First, if $\cf{P_1,\sigma} \lsexrightarrow{\ev{a}}
\cf{P'_1,\sigma'}$, then we have the transition $\cf{P_1+Q,\sigma}
\lsexrightarrow{\ev{a}} \cf{P'_1,\sigma'}$ and since $P_1 \sim_s^\equiv
P_2$, $\cf{P_2,\tau} \lsexrightarrow{\ev{a}} \cf{P'_2,\tau'}$ with
$P'_1 \sim_s^\equiv P'_2$ and $\sigma'\equiv\tau'$, and hence
$\cf{P_2+Q,\tau} \lsexrightarrow{\ev{a}} \cf{P'_2,\tau'}$ as required.
Second, if $\cf{Q,\sigma} \lsexrightarrow{\ev{a}} \cf{Q',\sigma''}$
then also $\cf{Q,\tau} \lsexrightarrow{\ev{a}} \cf{Q',\tau''}$ for
some $\tau''\equiv\sigma''$ and $\cf{P_2+Q,\tau} \lsexrightarrow{\ev{a}}
\cf{Q',\tau''}$. For stochastic events, $r(\cf{P_1+Q,\sigma},\sev{a},C)
= r(\cf{P_1,\sigma},\sev{a},C) + r(\cf{Q,\sigma},\sev{a},C)$. Since
$P_1 \sim_s^\equiv P_2$,
$r(\cf{P_1,\sigma},\sev{a},C)=r(\cf{P_2,\tau},\sev{a},C)$. Moreover
$r(\cf{Q,\sigma},\sev{a},C)=r(\cf{Q,\tau},\sev{a},C)$ and hence, the
conclusion is that
$r(\cf{P_1+Q,\sigma},\sev{a},C) = r(\cf{P_2+Q,\tau},\sev{a},C)$.

\item[Cooperation] We need to show that $B = \{ (P_1 \smsync{L} Q,
P_2 \smsync{L} Q) | P_1 \sim_s^\equiv P_2 \}$ is a system bisimulation
with respect to $\equiv$.  There are three cases to consider. First,
if $\cf{P_1,\sigma} \lsexrightarrow{\ev{a}} \cf{P'_1,\sigma'}$ with
$\ev{a} \not\in L$ then $\cf{P_1 \smsync{L} Q, \sigma}
\lsexrightarrow{\ev{a}} \cf{P'_1 \smsync{L} Q, \sigma'}$.  Since
$P_1 \sim_s^\equiv P_2$, for $\tau$ such that $\sigma\equiv\tau$,
$\cf{P_2,\tau} \lsexrightarrow{\ev{a}} \cf{P'_2,\tau'}$ with
$\sigma'\equiv\tau'$ and $P'_1 \sim_s^\equiv P'_2$, and hence
$\cf{P_2 \smsync{L} Q,\tau} \lsexrightarrow{\ev{a}} \cf{P'_2
\smsync{L} Q,\tau'}$.  Next, considering stochastic events, we need
to show that $r(\cf{P_1 \smsync{L} Q,\sigma},\sev{a},C) = r(\cf{P_2
\smsync{L} Q,\tau},\sev{a},C)$ for all equivalence classes $C \in
(\CF/B)/\equiv$.  Since $r(\cf{P_1 \smsync{L} Q,\sigma},\sev{a},C)
= r(\cf{P_1,\sigma},\sev{a},C)+r(\cf{Q,\sigma},\sev{a},C)$, we have
the result. The second case where $\ev{a} \not\in L$ and $\cf{Q,\sigma}
\lsexrightarrow{\ev{a}} \cf{Q',\sigma''}$ is proved in a similar
fashion.

Third, consider, $\ev{a} \in L$ and $\cf{P_1,\sigma}
\lsexrightarrow{\ev{a}} \cf{P'_1,\sigma'}$ and $\cf{Q,\sigma}
\lsexrightarrow{\ev{a}} \cf{Q',\sigma''}$, then we have the transition
$\cf{P_1
\smsync{L} Q, \sigma} \lsexrightarrow{a} \cf{P'_1 \smsync{L} Q',
\Gamma(\sigma,\sigma',\sigma'')}$. Since $P_1 \sim_s^\equiv P_2$,
and letting $\sigma\equiv\tau$, then $\cf{P_2,\tau}
\;\lsexrightarrow{\ev{a}}\; \cf{P'_2,\tau'}$ with $P'_1 \sim_s^\equiv
P'_2$ and $\sigma'\equiv\tau'$.  Also $\cf{Q,\tau} \lsexrightarrow{\ev{a}}
\cf{Q',\tau''}$ with $\sigma''=\tau''$.  Hence we have the transition
$\cf{P_2 \smsync{L} Q,\tau} \:\lsexrightarrow{\ev{a}} \cf{P'_2
\smsync{L} Q',\Gamma(\tau,\tau',\tau'')}$ with
$\Gamma(\sigma,\sigma',\sigma'') \equiv \Gamma(\tau,\tau',\tau'')$
as $\equiv$ is preserved by $\Gamma$.  For stochastic transitions
$\sev{a} \in L$,  we know that $r(\cf{P_1 \smsync{L} Q,\sigma},\sev{a},C)
= \act(\sev{a}) \cdot \sum
\lms\multi(\cf{P_1,\sigma}\lsexrightarrow{\sev{a}} F) \mid F \in C
\rms \cdot \sum \lms\multi(\cf{Q,\sigma}\lsexrightarrow{\sev{a}}
F) \mid F \in C \rms$, ensuring preservation of multiplicities.
Since $P_1 \sim_s^\equiv P_2$,
$r(\cf{P_1,\sigma},\sev{a},C)=r(\cf{P_2,\tau},\sev{a},C) =
\act(\sev{a})\cdot\sum \lms\multi(\cf{P_2,\tau}\lsexrightarrow{\sev{a}}
F) \mid F \in C \rms$, and hence we know $r(\cf{P_2 \smsync{L}
Q,\tau},\sev{a},C) = r(\cf{P_2 \smsync{L} Q,\sigma},\sev{a},C)$.
\end{description}
\end{proof}

\section{Equivalence of the two semantics}
\label{sec:equiv}

\begin{theorem}
Let 
$(\ConSys,\,\V,\,\IN,\,\IT,\,\Ev_c,\,\Ev_s,\,\Ac,\,\ec,\,\iv,\,\EC,\,\ID)$
be a stochastic HYPE model
then the TDSHA obtained via the operational semantics is the same as the
TDSHA obtained by the compositional mapping \cite{BortBGH:11a} 
when only reachable
modes are considered.
\end{theorem}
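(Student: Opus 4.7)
The plan is to establish an isomorphism between the two TDSHA by setting up a bijection on reachable modes and then verifying that continuous, instantaneous and stochastic transitions (the latter with multiplicities) agree under this bijection.

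First I would set up the correspondence on modes. In the operational semantics construction, modes are configurations $\cf{P,\sigma} \in \ds(\ConSys)$; in the compositional construction, modes of the product TDSHA are tuples of local modes from each subcomponent TDSHA and each controller TDSHA. I would define a map sending a reachable product mode to the pair $(P,\sigma)$ obtained by reading off (i) the syntactic residual of the controlled system from the controller components and the non-recursive structure of subcomponents, and (ii) the state $\sigma$ by collecting, for each influence $\iota$, the activity $(r,I)$ currently selected in the subcomponent responsible for $\iota$. By induction on the number of events fired from $\initial{}$, I would show this map is well-defined, injective, and surjective onto $\ds(\ConSys)$. The base case uses the fact that $\sigma_0$ is determined by $\ev{init}$ (which Proposition~5.1 of \cite{HYPE-journal} guarantees is independent of the starting $\sigma$); the inductive step uses that LTS transitions and product transitions are enabled under the same event set precisely because well-defined systems synchronise on all shared events.

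Next I would verify the four kinds of TDSHA data match under the bijection. For continuous transitions: in both constructions the continuous transitions at a mode are exactly $\{(\cdot,\vr{1}_{\iv(\iota_i)},r_i \cdot \assign{I_i})\}_i$ indexed by the influences present in $\sigma$, so this is immediate. For the initial state: both constructions take the post-$\ev{init}$ configuration with values $R_{\ev{init}}$. For instantaneous transitions: each LTS transition $\cf{P_1,\sigma_1} \mlsxrightarrow{\ev{a}} \cf{P_2,\sigma_2}$ is derivable iff every subcomponent and controller named in the syntactic tree that contains $\ev{a}$ participates in a synchronised move, which is exactly the premise of the product instantaneous transition; guards, resets and labels come from $\ec$ in both cases, and since instantaneous transitions are treated as a set (weight $1$, no multiplicity), the two coincide.

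The main obstacle will be the stochastic transitions, where multiplicities must be preserved. Here the compositional construction produces one stochastic transition in the product for each tuple of local stochastic transitions synchronising on $\sev{a}$, while the operational construction produces one stochastic transition per distinct derivation of $\cf{P_1,\sigma_1} \mlsxrightarrow{\sev{a}} \cf{P_2,\sigma_2}$ in the LTS. I would prove by structural induction on the controlled system that the number of distinct SOS derivation trees for a given $(\cf{P_1,\sigma_1},\sev{a},\cf{P_2,\sigma_2})$ equals the number of tuples of local transitions in the synchronisation product that yield the corresponding product transition: the Choice and Prefix rules contribute a single derivation each, and the Cooperation-with-synchronisation rule multiplies the counts from the two sides, mirroring the Cartesian product of local transitions in the product construction. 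The rate $\act(\sev{a})$, the guard $\true$, the reset $R_{\sev{a}}$ and the label $\sev{a}$ agree by definition in both constructions, since stochastic events carry globally specified event conditions via $\ec$. Combining these four points and restricting the product TDSHA to its reachable part yields the required equality.
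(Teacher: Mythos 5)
Your proposal is correct and follows essentially the same route as the paper's proof: both exhibit a bijection between reachable modes (configurations $\cf{P,\sigma}$ versus tuples of active influences and controller derivatives), observe that guards, resets, rates and weights agree automatically because they are defined globally via $\ec$, and establish the transition correspondence by structural induction on the cooperation structure, with the multiplicity of stochastic transitions preserved because the product of local transition counts under synchronisation mirrors the count of distinct SOS derivations. Your write-up is somewhat more explicit than the paper's sketch about the derivation-counting argument for stochastic multiplicities, but the underlying argument is the same.
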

\begin{proof}
Let $\cf{P',\sigma_1} \in \ds(P)$ and $\cf{Q',\sigma_2} \in \ds(Q)$
with $P' \thicksim_\mathbf{sm}^{\doteq} Q'$ then we can conclude
that $\sigma_1 \doteq \sigma_2$. We need to identify the ODEs
associated with each configuration. Since both of these are modes
in the respective TDSHAs, we can identify their respective multisets
of continuous transitions.

If $\sigma_1 = \{\iota_i\mapsto (r_i,I_i)~|~i=1,\ldots k\}$ then
the multiset of continuous transitions associated with $\cf{P',\sigma_1}$
is $\lms (\blang P, \sigma_1
\brang,\vr{1}_{\iv(\iota_i)},r_i.\assign{I_i}) \mid i=1,\ldots k \rms$.
Hence, for a given $V_j$, we obtain the ODE
\[
\frac{dV_j}{dt} = \sum_{i=1}^k \lms r_i\assign{I_i(\V)} \: \big| \:
   \iv(\iota_i) = V_j \rms \]
This can be written more generally as follows.
\[ P_{\sigma_1}=
  \Bigl\{ \frac{dV}{dt} = \sum \lms r\assign{I(\V)}
                        \: \big| \:
                        \iv(\iota) = V,
                        \sigma_1(\iota) = (r,I(\V))
\rms \Bigm| V \in \V \Bigr\} \]
Similarly, for $\cf{Q',\sigma_2}$ we have the following ODEs.
\[ Q_{\sigma_2}=
  \Bigl\{ \frac{dV}{dt} = \sum \lms r\assign{I(\V)}
                        \: \big| \:
                        \iv(\iota) = V,
                        \sigma_2(\iota) = (r,I(\V))
\rms \Bigm| V \in \V \Bigr\} \]
Since $\sigma_1 \doteq \sigma_2$,
for each $\iota$, whenever $\sigma_1(\iota) = (r_1,I_1(\V))$ and
$\sigma_2(\iota) = (r_2,I_1(\V))$ then $\assign{I_1(\V))}
=
\assign{I_2(\V)}$
and for each $V \in \V$, $I \in \IN$,
$\ad(\sigma_1,V,I(\V))=\ad(\sigma_2,V,I(\V))$.
Consider
\begin{eqnarray*}
\frac{dV}{dt} & = & \sum \lms r\assign{I(\V)} \: \big| \:
   \iv(\iota) = V, \sigma_1(\iota) = (r,I(\V)) \rms \\
         & = & \sum_{I(\V)} \assign{I(\V)}.\sum
\lms \:\big|\:
   \iv(\iota) = V, \sigma_1(\iota) = (r,I(\V)) \rms \\
         & = & \sum_{I(\V)}
\assign{I(\V)}.\ad(\sigma_1,V,I(\V)) 
        \: = \: \sum_{I(\V)}
\assign{I(\V)}.\ad(\sigma_2,V,I(\V)) \\
         & = & \sum_{I(\V)} \assign{I(\V)} \sum
\lms r\:\big|\:
   \iv(\iota) = V, \sigma_2(\iota) = (r,I(\V)) \rms \\
         & = & \sum \lms r\assign{I(\V)} \: \big| \:
   \iv(\iota) = V, \sigma_2(\iota) = (r,I(\V)) \rms 
\end{eqnarray*}
since for each $I(\V)$, the rates obtained from the states sum to
the same value. Therefore $P_{\sigma_1} = Q_{\sigma_2}$. 
\end{proof}

\section{Comparison of TDSHA mappings}
\label{sec:SOSmapping}

To show that the mapping which defined the semantics for stochastic
HYPE \cite{BortBGH:11a} is the same as the mapping from SOS semantics
presented here, the definition of the product of two TDHSAs is
required as well as the mapping definition. We will refer to the
mapping in \cite{BortBGH:11a} as the \emph{compositional mapping}
denoted $\TDSHA{}$, and will use \emph{SOS mapping}, denoted
$\TDSHA{\mathit{SOS}}$, for the mapping in this paper. The compositional
mapping did not use random resets but we do so here.

For the product, we require a consistency definition relating to
resets, to ensure that no resets clash by attempting to set the
same variable to two different values.  Hence, we say that two
transitions $\delta_1,\delta_2$ (either both discrete or both
stochastic) are \emph{reset-compatible} if and only if
$\e{\delta_1}\neq\e{\delta_2}$ or $\res{\delta_1}\wedge\res{\delta_2}\neq
\mathit{false}$. For random resets, this requires that any random
variable (that necessarily appears in both resets) is drawn from the same
distribution. Two TDSHA are reset-compatible if and only if all
their discrete or stochastic transitions are pairwise reset-compatible.
A similar notion is required for the initial conditions: Two TDSHA
are \emph{init-compatible} if and only if, given initial conditions
$\initial{1} = (\initmode{1},\inp{1})$ and $\initial{2} =
(\initmode{2},\inp{2})$, then $\inp{1}\wedge\inp{2}\neq \mathit{false}$.

In the definition of TDSHA product which follows, the set of
continuous transitions in a mode $\mode{}=(\mode{1},\mode{2})$
contains all continuous transitions of $\mode{1}$ and all those of
$\mode{2}$. The set of instantaneous transitions $\TD{}$ is the
union of non-synchronized instantaneous transitions $\TD{\mathit{NS}}$
and of synchronized ones $\TD{S}$, and during synchronization, a
conservative policy is applied by taking the conjunction of guards
and resets, and by taking the minimum of weights. Similarly, the
set of stochastic transitions is defined as $\TS{} =
\TS{\mathit{NS}}\cup\TS{S}$.  In the synchronization of stochastic
transitions, we use the fact that the rate is the same for all
transitions labelled by the same event, as required by the consistency
condition.

\begin{definition}\label{def:TDSHAproduct}
Let $\TDSHA{i}=\TDSHAtuple{i}$, $i=1,2$ be
two reset-compatible and init-compatible TDSHA, and let $L\subseteq
\events{1}\cap\events{2}$ be the synchronization set. The
$L$-product $\TDSHA{}  = \TDSHA{1}\tprod{L} \TDSHA{2}=
(Q,\vr{X},\TC,\TD,$ $\TS,\mathrm{init},\mathcal{E})$
is defined by
\begin{enumerate}
  \item $\modes{} = \modes{1}\times\modes{2}$;
  \item $\vrX{} = \vrX{1}\cup\vrX{2}$;
  \item $\events{} = \events{1}\cup\events{2}$;
  \item $\initial{} = (\initmode{},R_{\inp{}})$, where $\initmode{} =
  (\initmode{1},\initmode{2})$ and $R_{\inp{}} =
  R_{\inp{1}}\wedge R_{\inp{2}}$.
  \item $\TC{}
  = \lms\left((\mode{1},\mode{2}),\stoich{},\rf{}
\right)~|~\mode{1}\in\modes{1},\mode{2}\in\modes{2},(\mode{1},\stoich{},\rf{})\in\TC{1}\vee
(\mode{2},\stoich{},\rf{})\in\TC{2} \rms$.
  \item $\TD{} = \TD{NS} \cup \TD{S}$ where

$\begin{array}{rcl}  
  \TD{NS} & = &
  \{\left((\mode{1},\mode{2}),(\mode{1}',\mode{2}'),\guard{},\res{},\priority{},\e{}\right)~|~\\
   & & \hspace*{2.6cm}
(\mode{i},\mode{i}',\guard{},\res{},\priority{},\e{})\in\TD{i}\wedge\mode{j}=\mode{j}'\in\modes{j}\wedge
i\neq j \wedge\e{}\not\in
  S\}, \\
  \TD{S} &=&
\{\left((\mode{1},\mode{2}),(\mode{1}',\mode{2}'),\guard{1}\wedge\guard{2},\res{1}\wedge\res{2},\min\{\priority{1},\priority{2}\},\e{}\right)~|~
\\
   & & \multicolumn{1}{r}{
  (\mode{1},\mode{1}',\guard{1},\res{1},\priority{1},\e{})\!\in\!\TD{1}\wedge
  (\mode{2},\mode{2}',\guard{2},\res{2},\priority{2},\e{})\!\in\!\TD{2}
   \wedge\e{}\!\in\!S\}.}
\end{array}$

  \item $\TS{} = \TS{NS} \cup \TS{S}$ where

  $\begin{array}{rcl}
\TS{NS} & = &
  \lms\left((\mode{1},\mode{2}),(\mode{1}',\mode{2}'),\guard{},\res{},\rf{},\e{}\right)~|~\\
   & &\hspace{2.8cm}
(\mode{i},\mode{i}',\guard{},\res{},\rf{},\e{})\in\TS{i}\wedge\mode{j}=\mode{j}'\in\modes{j}\wedge
i\neq j\wedge\e{}\not\in
  S\rms, \\
\TS{S} & = &
  \lms\left((\mode{1},\mode{2}),(\mode{1}',\mode{2}'),\guard{1}\wedge\guard{2},\res{1}\wedge\res{2},\rf{},\e{}\right)~|~\\
  & & \multicolumn{1}{r}{
(\mode{1},\mode{1}',\guard{1},\res{1},\rf{},\e{})\in\TS{1}\wedge
  (\mode{2},\mode{2}',\guard{2},\res{2},\rf{},\e{})\in\TS{2}\wedge\e{}\in
  S\rms.}
  \end{array}$
\end{enumerate}
\end{definition}

\subsection{Compositional mapping from HYPE to TDSHA}

The mapping $\TDSHA{}$ that is now defined, works compositionally,
by associating a TDSHA with each single subcomponent and with each
piece of the controller, then taking their synchronized product
according to the synchronization sets of the stochastic HYPE system.  Guards,
rates, and resets of discrete edges will be incorporated in the
TDSHA of the controller, while continuous transitions will be
extracted from the uncontrolled system.  Consider a stochastic HYPE model
$(\ConSys,\V,\IN,\IT,\Ev_d,\Ev_s,\Ac,\ec,\iv,\EC\!,\ID)$ with
$\ConSys ::= \Sigma \smash{\sync{L}} \ev{init}.\Con$.

To define the TDSHA of the uncontrolled system $\Sigma$, consider
a subcomponent $S = \sum_{i=1}^k \ssev{a}_i\pc\alpha_i.S +
\ev{init}\pc\alpha.S$.  Each element of $\in(S)$ generates a mode
in the TDSHA of $S$.  Moreover, in each such mode, the only continuous
transition will be the one that is derived from this influence. As
for discrete edges, since response to all events is always enabled,
there will be an outgoing transition for each event appearing in
$S$ in each mode of the associated TDSHA. The target state of the
transition will be the mode corresponding to the influence following
the event.  Resets and guards will be set to $true$, as event
conditions are associated with the controller. Rates of transitions
derived from stochastic events $\sev{a}\in \Ev_s$  will be set to
$\act(\sev{a})$, as required by the consistency condition of TDSHA.
Finally, weights will be set to 1, while the initial mode will be
deduced from the $\ev{init}$ event. Formally, this is defined as
follows.

\begin{definition}\label{def:TDSHAsubcompoenent}
Let $S(\V)\!=\!\sum_{i=1}^k \ssev{a}_i\pc\alpha_i.S(\V)\!+\!
\ev{init}\pc\alpha.S(\V)$ be a subcomponent of the stochastic HYPE model
$(\ConSys,\V,\IN,\IT,$ $\Ev_d,\Ev_s,\Ac,$ $\ec,\iv,\EC,\ID)$. 
Then $\TDSHA{}(S) = \TDSHAtuple{}$, the TDHSA associated with $S$ is defined
by
\begin{enumerate}
  \item $\modes{} = \{\mode{\alpha}~|~\alpha\in\is(S)\}$; $\vrX{} =
  \V$; $\events{} = \Ev_d\cup \Ev_s$;
  \item $\initial{} = (\mode{\alpha},true)$, where $S =
  \ev{init}\pc\alpha.S + S'$;
  \item $\TC{} = \lms(\mode{\alpha},\vr{1}_{\iv(\iname{i}_S)},r\cdot\eval{I})~|~\alpha=(\iname{i}_S,r,I)\in\is(S)
  \rms$, 
  \item $\TD{} =
\{(\mode{\alpha_1},\mode{\alpha_2},1,true,true,\ev{a})~|~\ev{a}\in\es(S)\cap\Ev_d\wedge\alpha_1\in\is(S)\wedge
S=\ev{a}\pc\alpha_2.S + S'\}$
  \item $\TS{} =
\lms(\mode{\alpha_1},\mode{\alpha_2},true,true,\act(\sev{a}),\sev{a})~|~\sev{a}\in\es(S)\cap\Ev_s\wedge\alpha_1\in\is(S)\wedge
S=\ev{a}\pc\alpha_2.S + S'\rms$
\end{enumerate}
\end{definition}

Once a TDSHA is generated for each subcomponents, the TDSHA of the
full uncontrolled system can be built by applying the product
construction of TDSHA.

\begin{definition}\label{def:TDSHAuncontrolled}
If $\Sigma\!\rmdef\!S_1(\V) \syncstar \ldots \syncstar
S_s(\V)$
then $\TDSHA{}(\Sigma) \!=\!
\TDSHA{}(S_1(\V))\tprod{*} \ldots \tprod{*}\TDSHA{}(S_s(\V))$.

\end{definition}
Dealing with the controller is simpler, as controllers are finite
state automata which impose causality on the happening of events.
Event conditions are assigned to edges of TDSHA associated with
controllers. All events will be properly dealt with through this
construction, as they all appear in the controller since the stochastic HYPE
model is well-defined.

First, consider a sequential controller $M = \sum_{i} \ssev{a}_i.M_i$.
The derivative set of $M$ is defined recursively by $ds(M) = \{M\}\cup
\bigcup_i ds(M_i)$.

\begin{definition}\label{def:TDSHAseqContr} Let
$(\ConSys,\V,\IN,\IT,\Ev_d,\Ev_s,\Ac,\ec,\iv,\EC,\ID)$ be a stochastic HYPE
model with sequential controller $M$.
Then $\TDSHA{}(M) = \TDSHAtuple{}$, 
the TDSHA associated with $M$, is defined by
\begin{enumerate}
  \item $\modes{} = \{\mode{M'}~|~M'\in ds(M)\}$; $\vrX{} = \V$; $\events{} = \Ev_d\cup \Ev_s$;
  \item $\initial{} = (\mode{M},R_{\ev{init}})$, where
  $R_{\ev{init}}$ is the reset associated with the $\ev{init}$
  event.
  \item $\TC{} = \emptyset$;
  \item $\TD{} = \{(\mode{M_1},\mode{M_2},1,\act(\ev{a}),R_{\ev{a}},\ev{a})
  ~|~M_1 = \ev{a}.M_2,\ M_1,M_2\in ds(M),\ \ev{a}\in\Ev_d,\
  \ec(\ev{a}) = (\act(\ev{a}),$ $\rs(\ev{a})) \}$;
  \item  $\TS{} = \lms(\mode{M_1},\mode{M_2},true,R_{\sev{a}},\act(\sev{a}),\sev{a})
  ~|~M_1 = \sev{a}.M_2,\ M_1,M_2\in ds(M),\ \sev{a}\in\Ev_s,\
  \ec(\sev{a}) = (\act(\sev{a}),$ $\rs(\sev{a})) \rms$, where
  $\act(\sev{a}):\bbR^{|\V|}\rightarrow\bbR^+$ is the rate of
  the transition;
\end{enumerate}

\end{definition}

\begin{definition}\label{def:TDSHAcontroller}
Let $Con = Con_1\sync{L}\Con_2$ be a controller. The TDSHA of $Con$
is defined recursively as $\TDSHA{}(Con) =
\TDSHA{}(Con_1)\tprod{L}\TDSHA{}(Con_2)$.
\end{definition}

The product construction of Definitions~\ref{def:TDSHAuncontrolled}
and~\ref{def:TDSHAcontroller} are defined because the factor TDSHA
are reset-compatible and init-compatible. This is trivial both for
the uncontrolled system (all resets are $true$) and for the controller
(resets for the same event are equal). Furthermore, stochastic
transitions have consistent rates, as their rate depends only on
the labelling event.

Once the TDSHA of the controller and the uncontrolled system are
constructed, we simply have to take their product.

\begin{definition}\label{def:TDSHAofHYPE}
Let $(\ConSys,\V,\IN,\IT,\Ev_c,\Ev_s,\Ac,\ec,\iv,\EC,\ID)$ be a
stochastic HYPE model, with controlled system
$\ConSys = \Sigma\sync{L} \ev{init}.\Con$. The
TDSHA associated with $\ConSys$ is
$$\TDSHA{}(\ConSys) = \TDSHA{}(\Sigma)\tprod{L}\TDSHA{}(\Con).$$
\end{definition}

The construction generates TDSHAs  with many \emph{unreachable
states} \cite{BortBGH:11a}. This is a consequence of the fact that
sequentiality and causality on actions is imposed just on the final
step, when the controller is synchronized with the uncontrolled
system. Once the TDSHA is constructed, however, it can be pruned
by removing unreachable states.  In order to limit combinatorial
explosion, one can prune TDSHA's at each intermediate stage. A
formal definition of this policy, however, would have made the
mapping from stochastic HYPE to TDSHA much more complex.

\begin{theorem}
Let 
$(\ConSys,\,\V,\,\IN,\,\IT,\,\Ev_c,\,\Ev_s,\,\Ac,\,\ec,\,\iv,\,\EC,\,\ID)$ 
be a stochastic HYPE model then \linebreak
$\TDSHA{}(\ConSys) = \TDSHA{\mathit{SOS}}(\ConSys)$ when only reachable
states are considered.

\end{theorem}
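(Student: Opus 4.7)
The plan is to establish a bijection between the reachable modes of the two TDSHA and then verify that the continuous, instantaneous, and stochastic transitions coincide under this bijection. The natural correspondence sends a reachable SOS mode $\blang P,\sigma\brang$ to the product mode obtained by reading off: (i) from $\sigma$, the chosen activity within each subcomponent's TDSHA (since in the compositional mapping each influence gives one mode per subcomponent, and $\sigma(\iota)=(r,I)$ pinpoints exactly one such mode); and (ii) from $P$, the derivative of each sequential controller component sitting in the cooperation tree. I will first show this assignment is well defined on reachable configurations by induction on the number of events fired from $\initial{}$, using well-definedness to guarantee a one-to-one correspondence between subcomponents and influences.

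Next I would verify the four transition sets match. For continuous transitions, Definition~\ref{def:shypetotdsha} reads the flows directly off $\sigma$ as $(\blang P,\sigma\brang,\vr{1}_{\iv(\iota_i)},r_i\assign{I_i})$, while Definition~\ref{def:TDSHAsubcompoenent} attaches one flow per mode of each subcomponent's TDSHA and then the product in Definition~\ref{def:TDSHAproduct} collects them; under the bijection these multisets are identical. For instantaneous transitions, I would show that each SOS derivation $\blang P_1,\sigma_1\brang \mlsxrightarrow{\ev{a}}\blang P_2,\sigma_2\brang$ is produced exactly by the cascade of \textbf{Cooperation with synchronisation}, \textbf{Choice}, and \textbf{Prefix} rules acting on all subcomponents whose events include $\ev{a}$ (since in a well-defined model every shared event is synchronised on); this matches precisely the combined product edge assembled in $\TD{S}$ by Definition~\ref{def:TDSHAproduct}, where the controller contributes the pair $(\act(\ev{a}),R_{\ev{a}})$ (Definition~\ref{def:TDSHAseqContr}), the uncontrolled system contributes $(\true,\true)$ (Definition~\ref{def:TDSHAsubcompoenent}), and the conjunctions of guards and resets collapse to $(\act(\ev{a}),R_{\ev{a}})$. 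The target mode matches because the $\Gamma$ function in the SOS rule enforces exactly the same update on the influence map as the componentwise mode change in the product.

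For stochastic transitions the argument is analogous, but the subtlety that I expect to be the main obstacle is \emph{multiplicity}. In the SOS mapping, multiplicity of a stochastic transition comes from distinct derivation trees in the LTS (e.g.\ two occurrences of the same event via \textbf{Choice} in different subcomponents, or from different synchronisation branches); in the compositional mapping, multiplicity arises from the multiset union and product in $\TS{NS}$ and $\TS{S}$. I would prove by induction on the structure of the uncontrolled system, and then on the structure of the controller, that the number of derivation trees for $\blang P,\sigma\brang\mlsxrightarrow{\sev{a}}\blang P',\sigma'\brang$ equals the multiplicity of the corresponding edge $((\mode{1},\ldots,\mode{n}),(\mode{1}',\ldots,\mode{n}'),\true,R_{\sev{a}},\act(\sev{a}),\sev{a})$ in the product TDSHA. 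The consistency condition on rates (same event $\Rightarrow$ same rate) together with well-definedness (which guarantees synchronisation on all shared events) are what make this counting work out cleanly.

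Finally, I would check that initial conditions agree: both constructions pick the initial mode reached after firing $\ev{init}$ with reset $R_{\ev{init}}$, so $\initial{}$ coincides. Restriction to reachable modes is essential only on the compositional side, where the product generates many modes that cannot be reached from $\initial{}$; the SOS construction builds $Q=\ds(\ConSys)$ by definition, so after pruning the two TDSHA agree mode-by-mode and transition-by-transition, establishing equality.
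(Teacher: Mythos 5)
Your proposal is correct and takes essentially the same route as the paper's own proof: the paper likewise defines a bijection $\rho$ from SOS configurations $\blang P,\sigma\brang$ to product modes read off from the active influences in $\sigma$ and the controller derivatives in $P$, then argues by structural induction on the cooperation structure that transitions (with guards, resets and rates fixed globally by the event, and multiplicities preserved for stochastic transitions via the multi-graph product) correspond, restricting to reachable modes at the end. Your additional remarks on the collapse of conjoined guards/resets and on initial conditions only make explicit what the paper treats as immediate.
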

\begin{proof}
In order to prove the theorem, we will exhibit a graph isomophism
between the discrete graph of the two TDSHA $\TDSHA{}(\ConSys)$ and
$\TDSHA{\mathit{SOS}}(\ConSys)$. The additional information labelling
edges of the graph (i.e. guards, resets, rates and priorities) are
automatically the same due to the fact they are defined globally.
Priorities, in particular, are always equal to 1 (and the $\min$
in the product of TDSHA preserves this).

First of all, consider $\TDSHA{}(\ConSys)$ and observe that by
Definitions 22, 23, and 26, each mode of the automaton contains the
set of active influences, which are in bijection with the set of
continuous transitions of $\TDSHA{}(\ConSys)$. Furthermore, by
Definitions 24, 25, and 26, it also contains the current state of
each sequential component of the controller. Hence, we can indicate
a mode $q$ by the tuple $( (\iota_1,r_{1,j_1},I_{1,j_1}),\ldots,(\iota_k,
r_{k,j_k},I_{k,j_k}),Con_{1,i_1},\ldots,Con_{h,i_h})$. On the other
hand, a mode of $\TDSHA{\mathit{SOS}}(\ConSys)$ is of the form
$\cf{\Sigma\syncstar Con,\sigma}$, with $Con =
Con_{1,i_1}\syncstar\ldots\syncstar Con_{h,i_h}$ and where the state
is $\sigma = \{ (\iota_1,r_{1,j_1},I_{1,j_1}),$ $\ldots,(\iota_k,
r_{k,j_k},I_{k,j_k}) \}$.  Therefore, we can define  the function
$\rho$ mapping each $\cf{\Sigma\syncstar Con,\sigma}$ to the tuple $(
(\iota_1,r_{1,j_1},I_{1,j_1}),\ldots,(\iota_k,
r_{k,j_k},I_{k,j_k}),Con_{1,i_1},\ldots,Con_{h,i_h})$.
This function is well-defined, as the derivative of the controller
plus the state uniquely identify the configuration, and it is easily
seen to be a bijection.

In order to show that $\rho$  is a graph isomorphism, we need to
show that if  $\cf{P_1,\sigma_1} \:\smash{\lts{\ssev{a}}}\:
\cf{P_2,\sigma_2}$ is a transition of the LTS, then there is a
discrete or stochastic transition (depending on $\ssev{a}$) in
$\TDSHA{}(\ConSys)$ of the form $(\rho(\cf{P_1,\sigma_1}
),\rho(\cf{P_2,\sigma_2}),\cdot,\cdot,\cdot,\ssev{a})$, with matching
multiplicity.

This can be easily seen by structural induction on agents, focussing
on synchronisation and taking subcomponents or sequential controllers
as base cases.  Inspecting Definitions 22 and 24, it is easy to see
that the previous property holds for the base cases, as those
definitions construct the TDSHA by considering all transitions and
all states of the LTS.

As for synchronization, we just need to notice that the product
construction for TDSHA perfectly matches the SOS rules in terms of
updated components and updated states, hence a simple structural
induction argument will do. Multiplicity of stochastic transitions
is also preserved as the product construction acts on multi-graphs
for what concerns stochastic events.  Hence, we have proved that
$\rho$ is a graph isomorphism. The fact that reachable states are the
same in both graphs then follows by a simple induction on the
distance in such graphs from the initial mode.  
\end{proof}

\section{Bisimulation for TDSHA}
\label{sec:bisimforTDSHA}

We have a number of choices when we consider bisimulations over
TDSHA in light of definitions for PDMP \cite{StruSS:05a}. We can
consider a single discrete jump as a transition and match or ignore
labels, or we can consider a sequence of discrete jumps, ignoring
labels or considering each jump to be labelled with $\ev{\text{$\tau$}}$
except for one labelled with $\ev{a} \neq \ev{\text{$\tau$}}$.  The
first two options can be viewed as strong forms of bisimulation and
the last two as weak forms of bisimulation. One reason for ignoring
labels is then we can compare any equivalence we define with those
for PDMPs which have no labels. However, the focus of this paper
is stochastic HYPE where labels are used, and hence we focus on
those with labels.

We start by giving some definitions that we require for strong
bisimulation. We need a notion of measurable relation which will
allow the definition of function $\match_B$ that provides a bijection
between two quotient spaces and can be used in the definition of
bisimulation.

\subsection{Measurable relations}\label{sec:measurableRelations}
\label{measrel}

In order to define bisimulations for TDSHA, we need to introduce a
notion of measurable relation, taken from~\cite{StruSS:05a}. In the
following, we let $X,Y$ be separable metric spaces and $\calX,\calY$
be the corresponding Borel sigma-algebras, so that $(X,\calX)$ and
$(Y,\calY)$ are two Borel measurable spaces.

We consider a relation $B\subseteq X\times Y$, and we assume that
the projections on the two components coincide with the whole spaces,
i.e. $\{x\in X~|~\exists y\in Y,(x,y)\in B\} = X$ and $\{y\in
Y~|~\exists x\in X,(x,y)\in B\} = Y$. Then we define two equivalence
relations, one on $X$ and one on $Y$. The equivalence relation $B_X$
on $X$ is defined as the transitive closure of the relation
$\{(x_1,x_2)~|~\exists y\in Y,(x_1,y),(x_2,y)\in B\}$.  Similarly
for $B_Y$.

A straightforward property of the equivalence relations induced by
$B$ would be so that the two quotient spaces $X/B = X/B_X$ and $Y/B = Y/B_Y$
are in bijection. In fact, the map $\match_B: X/B\rightarrow Y/B$
such that $\match_B([x]) = [y]$ if and only if $(x,y)\in B$, is a well-defined
bijection.

In the following, we denote with $\pi_X$ the canonical projection
of $X$ onto $X/B$, defined by $\pi_X(x) = [x]$. Similarly for $Y$.

Another property of the equivalence relations induced by $B$ is
that $X/B$ and $Y/B$ inherit the sigma-algebra structure from $X$
and $Y$. In fact, it is straightforward to check that the collection
$\calX/B$ of subsets of  $X/B$, containing the sets $\{[x]\in
A~|~A\in\calX\}$, is a sigma-algebra.

\begin{definition}\label{def:measurableRelation}
The relation $B\subseteq X\times Y$ is \emph{measurable} if and only
if, for each $A\in\calX/B$, it holds that $\match_B(A)\in\calY/B$,
and vice versa.
\end{definition}

We further need the notion of equivalent probability measures on
$X$ and $Y$, with respect to a measurable relation $B$. Essentially,
two probability measures will be equivalent if and only if they will induce
the same probability distribution on the quotient sets $X/B$ and
$Y/B$.

\begin{definition}\label{def:equivalentPD}
Let $P_X$ be a probability measure on $(X,\calX)$ and $P_Y$ be a
probability measure on $(Y,\calY)$. $P_X$ and $P_Y$ are equivalent
with respect to the measurable relation $B$ if and only if, for each
$A\in\calX/B$, it holds that $$P_X(\pi_X^{-1}(A)) =
P_Y(\pi_Y^{-1}(\match_B(A))).$$
\end{definition}

\subsection{Bisimulation}

To define TDSHA bisimulation, definitions are required to probailities
of actions. The two definitions below define transitions that involve
actions from a set $A$, and calculate probabilities for these
transitions.  The set $A$ is used to determine which labels are
matched and for matching of single labels, $\{a\}$ can be used.  We
can also define non-singleton subsets of $A$ but we defer these to
further work.

\begin{definition}
Given a TDSHA $\TDSHA{}=\TDSHAtuple{}$, let
$$\TS{}((\mode{},\vr{x}),A) =
\lms\eta\in\TS~|~\exit{\eta}=\mode{},\e{\eta}\in
A,\guard{\eta}(\vr{x})=true\rms$$ be 
the \emph{set of stochastic
transitions with labels in $A \subseteq \events{}$ active in 
$(\mode{},\vr{x})$}.\\
Furthermore, let 
$$\TD{}((\mode{},\vr{x}),A) =
\lms\delta\in\TD~|~\exit{\delta}=\mode{},\e{\delta}\in A,
\guard{\delta}(\vr{x})=true\rms$$ 
be the \emph{set of instantaneous
transitions with labels in $A \subseteq \events{}$ active in 
$(\mode{},\vr{x})$}.
\end{definition}

Let $\lambda(q,\vr{x}) = \sum \lms \rf{\eta}(\vr{x}) \mid \eta \in
\TS{}((\mode{},\vr{x}),\events{}) \rms$ and let
$\priority{}(q,\vr{x}) = \sum \lms \priority{\delta}(\vr{x}) \mid
\delta\in\TD{}((q,\vr{x}),\events{}) \rms$.

\begin{definition}
The (1-step) \emph{probability of a stochastic transition with a
label in $A$ from $(\mode{},\vr{x})$ to a set $C$} is defined as
\[P_{1s}^{\TS{}}((\mode{},\vr{x}),A,C) = \sum 
\lms
\mathit{Pr}\{(\enter{\eta},\res{\eta}(\vr{x},\mathbf{W})) \in C\} \cdot
\rf{\eta}(\vr{x})/\lambda(\mode{},\vr{x}) 
\mid \eta \in
\TS{}((\mode{},\vr{x}),A) 
\rms\] 
for $\lambda(\mode{},\vr{x})\neq 0$ and 0 otherwise.

Similarly, the (1-step) \emph{probability of an instantaneous
transition a label in $A$ from $(\mode{},\vr{x})$ to a set $C$} is
defined as 
\[P_{1s}^{\TD{}}((\mode{},\vr{x}),A,C) = \sum \lms
\mathit{Pr}\{(\enter{\delta},\res{\eta}(\vr{x},\mathbf{W})) \in C\} \cdot
\priority{\delta}/\omega(\mode{},\vr{x}) \mid \delta \in
\TD{}((\mode{},\vr{x}),A) 
\rms\] for $\omega(\mode{},\vr{x})\neq 0$, and 0 otherwise.
\end{definition}

To improve the readability of the following definitions, we define
the predicate  $$G(\mode{},\vr{x}) = \bigvee \{ g_\delta(\vr{x})
\mid \delta \in \TD{}, \exit{\delta}=\mode{}\},$$ which is true
when at least one guard of an instantaneous transition is true.
Let $\phi(t,\vr{X})$ denote the solution of the ODEs of the TDHSA
taking into account the initial values of variables.

Next, we define a bisimulation that matches on individual labels.

\begin{definition}\label{def:TDSHAlabeledBisimulation}
Let $\TDSHA{i}=\TDSHAtuple{i}$, $i=1,2$ be two TDSHAs, and let the
relation $B \subseteq (\modes{1} \times \mathbb{R}^{n_1}) \times
(\modes{2} \times \mathbb{R}^{n_2})$ be a measurable relation. $B$
is a \emph{labelled TDHSA bisimulation} for $\TDSHA{1}$ and
$\TDSHA{2}$ whenever for all
$((\mode{1},\vr{x}),(\mode{2},\vr{x})) \in B$,
\begin{enumerate}
\item Assuming two smooth output functions $\out_i: \mathbb{R}^{n_i}
\rightarrow \mathbb{R}^{m}$, for $i=1,2$ with $m \leq n_i$ then
$\out_1(\vr{x_1}) = \out_2(\vr{x_2})$
\smallskip
\item $\lambda(\mode{1},\vr{x_1}) = \lambda(\mode{2},\vr{x_2})$
\smallskip
\item $G_1(\mode{1},\vr{x_1}) = G_2(\mode{2},\vr{x_2})$
\smallskip
\item For all $0 \leq t \leq t^*$,
$((\mode{1},\phi(t,\vr{x_1})),(\mode{2},\phi(t,\vr{x_2})) \in B$
where $t^*$ is the smallest value such that \linebreak
$G_1(\mode{1},\phi(t,\vr{x_1})) = G_2(\mode{2}\phi(t,\vr{x_2}))
= \mathit{true}$
\smallskip
\begin{spacing}{1.2}
\item For all $C \in (\modes{1} \times \mathbb{R}^{n_1})/B$, and for all
$a \in \events{1} \cup \events{2}$, \newline
$P^{\TS{}}_{1s}((\mode{1},\vr{x_1}),\{a\},C) =
P^{\TS{}}_{1s}((\mode{2},\vr{x_2}),\{a\},\match_B(C))$ and
\newline $P^{\TD{}}_{1s}((\mode{1},\vr{x_1}),\{a\},C) =
P^{\TD{}}_{1s}((\mode{2},\vr{x_2}),\{a\},\match_B(C))$.
\end{spacing}

\end{enumerate}
\end{definition}

\begin{definition}
$\TDSHA{1}$ and $\TDSHA{2}$ are \emph{labelled TDSHA bisimilar},
written $\TDSHA{1} \sim_T^l \TDSHA{2}$, whenever there exists $B$ a
TDSHA bisimulation for $\TDSHA{1}$ and $\TDSHA{2}$.  \end{definition}

We next consider the relationship between bisimulation over stochastic
HYPE models and TDSHAs.

\begin{theorem}
\label{thm:bisim}
Let 
$(P_i,\V,\IN,\IT,\Ev_d,\Ev_s,\Ac,\ec,\iv,\EC,\ID)$ for $i=1,2$, be two
stochastic HYPE models
whose TDHSAs are $\TDSHA{i}$, if $P_1 \thicksim_\mathbf{sm}^{\doteq} P_2$
then $\TDSHA{1} \sim_T^l \TDSHA{2}$.
\end{theorem}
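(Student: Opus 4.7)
The plan is to lift the given stochastic system bisimulation $B$ on controlled systems to a measurable relation $B'$ on the product of the two TDSHA state spaces. Specifically, I would set
$$B' = \bigl\{\,((q_1,\vr{x}),(q_2,\vr{x})) \,\bigm|\, q_1 = \cf{P_1',\sigma_1},\; q_2 = \cf{P_2',\sigma_2},\; (P_1',P_2') \in B,\; \sigma_1 \doteq \sigma_2\,\bigr\},$$
using a common continuous vector $\vr{x} \in \bbR^{|\V|}$ on both sides (recall $\vrX{1} = \vrX{2} = \V$). Measurability of $B'$ in the sense of Definition~\ref{def:measurableRelation} is routine: the discrete equivalence is purely combinatorial (determined by $B$ refined by $\doteq$ on states) while the continuous equivalence is strict equality, so the Borel structure of $\bbR^{|\V|}$ descends to the quotients.

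I would then verify the five clauses of Definition~\ref{def:TDSHAlabeledBisimulation} in turn. Condition~1 holds trivially by choosing the identity as the output function, since both sides share the same variable vector. Condition~3 follows because the guard $\act(\ev{a})$ attached to an instantaneous edge in the TDSHA depends only on $\vr{x}$ and is fixed globally per event $\ev{a}$; combined with the instantaneous clause of $\sim_s^{\doteq}$, which guarantees that an $\ev{a}$-transition exists on one side iff it does on the other, the predicate $G$ agrees on both sides. Condition~2 is obtained by summing the rate-matching clause of $\sim_s^{\doteq}$ over all $\sev{a}\in\Ev_s$ with $C = \calC$, reconstructing $\lambda$ on each side. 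Condition~4 relies on Theorem~\ref{thm:doteqODES}: within $B'$-related modes the ODE systems coincide, so trajectories from a common $\vr{x}$ remain pointwise equal; since the discrete component is fixed during continuous evolution, both $(P_1',P_2')\in B$ and $\sigma_1 \doteq \sigma_2$ persist along the flow, keeping $B'$-membership invariant.

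The hardest case will be Condition~5. For stochastic transitions the argument is direct: the numerator of $P^{\TS{}}_{1s}((q_i,\vr{x}),\{\sev{a}\},C)$ rewrites to $r(\cf{P_i',\sigma_i},\sev{a},C)$, since the rate $\act(\sev{a})$ and reset distribution $R_{\sev{a}}$ are event-global and the multiset construction in $\TDSHA{\mathit{SOS}}$ preserves multiplicity; the denominator is $\lambda$; both match by $\sim_s^{\doteq}$ together with Condition~2. For instantaneous transitions the mapping collapses LTS multiplicities to single edges of weight~$1$, so one must show that for each class $C$ the ratio of distinct $\ev{a}$-successor modes in $C$ to the total number of enabled instantaneous edges out of the mode agrees across the two sides. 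This is the main obstacle: the existence clauses of $\sim_s^{\doteq}$ immediately provide surjections between the successor sets modulo $B'$, but upgrading these to bijections requires additional structural input. I expect to obtain this by exploiting that for each instantaneous event the reset is globally fixed (so the successor state $\sigma'$ is determined by $\sigma$ and $\ev{a}$) and the derivative of the controller is syntactically determined, forcing the counts to coincide once the quotient by $B'$ is taken. With this in hand, all five conditions are verified and $B'$ witnesses $\TDSHA{1} \sim_T^l \TDSHA{2}$.
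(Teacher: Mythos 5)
Your proposal follows essentially the same route as the paper: the same lifted relation $\bigl\{((\cf{Q_1,\sigma_1},\vr{x}),(\cf{Q_2,\sigma_2},\vr{x})) \mid Q_1 \thicksim_\mathbf{sm}^{\doteq} Q_2,\ \sigma_1 \doteq \sigma_2\bigr\}$ with identity output functions, the same summation of the rate function $r$ to obtain condition~2, the same appeal to Theorem~\ref{thm:doteqODES} for condition~4, and the same treatment of condition~5. The step you flag as the main obstacle (equality of the counts of distinct instantaneous $\ev{a}$-successors per class) is exactly where the paper's own proof is also briefest --- it asserts that mutual matching of transitions forces the same number of distinct derivatives --- so your sketch is at the level of detail of the published proof sketch and is correct.
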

\begin{proof} 
The stochastic HYPE models $P_i$ can be transformed as described
in Definition~\ref{def:shypetotdsha} to TDSHA $\TDSHA{i} =
(\ds(P_i),\V,\TC{i},\TD{i},\TS{i},(\cf{R_i,\sigma_i},v_i),\Ev_d
\cup \Ev_s)$. Let the equivalence relation $B$ be defined by
\[B = \bigl\{
\bigl((\cf{Q_1,\sigma_1},\vr{x}), (\cf{Q_2,\sigma_2},\vr{x})\bigr)
\mid Q_1 \thicksim_\mathbf{sm}^{\doteq} Q_2, \sigma_1 \doteq \sigma_2
\bigr\}.\]  To satisfy the first condition, let $\out_i$ be the
identity function.

Next, note that $\lambda(\cf{Q_i,\sigma_i},\vr{x}) = \sum \{
r(\cf{Q_i,\sigma_i},\sev{a},C) | \sev{a} \in \Ev_s, C \in
\CF/(\thicksim_\mathbf{sm}^{\doteq}) \}$. Since $Q_1
\thicksim_\mathbf{sm}^{\doteq} Q_2$ and $\sigma_1 \doteq \sigma_2$,
$r(\cf{Q_1,\sigma_1},\sev{a},C) = r(\cf{Q_2,\sigma_2},\sev{a},C)$.
Hence it is straightforward to show that
$\lambda(\cf{Q_1,\sigma_1},\vr{x_1}) =
\lambda(\cf{Q_2,\sigma_2},\vr{x_2})$.

For $Q_1 \thicksim_\mathbf{sm}^{\doteq} Q_2$ and $\sigma_1 \doteq
\sigma_2$, any transition that can be performed by $\cf{Q_1,\sigma_1}$
can be matched by $\cf{Q_2,\sigma_2}$ hence the same events must
be available in both configurations. $G(\cf{Q_1,\sigma_1},\vr{x})
= \bigvee \{ \act(\ev{a})(\vr{x}) \mid \cf{Q_1,\sigma_1}
\lsexrightarrow{\ev{a}} \} = \bigvee \{ \act(\ev{a})(\vr{x}) \mid
\cf{Q_2,\sigma_2} \lsexrightarrow{\ev{a}} \} =
G(\cf{Q_2,\sigma_2},\vr{x})$.

By Theorem~\ref{thm:doteqODES}, $(Q_1)_{\sigma_1} = (Q_2)_{\sigma_2}$,
for $Q_1 \thicksim_\mathbf{sm}^{\doteq} Q_2$ and $\sigma_1 \doteq
\sigma_2$ hence we can conclude that the pair
$((\cf{Q_1,\sigma_1},\phi(t,\vr{x})),(\cf{Q_2,\sigma_2},\phi(t,\vr{x})))
\in B$ where $\phi(t,\vr{x})$ is the solution of the ODEs given by
$(Q_1)_{\sigma_1}$ and for all $0 \leq t \leq t^*$ where $t^*$ is
the smallest value that $G(\cf{Q_1,\sigma_1},\vr{x})=
G(\cf{Q_2,\sigma_2},\vr{x})=\true$.

Finally, we need to show that for all $C \in (\ds(P_1) \times
\mathbb{R}^{n})/B$, and for all
$a \in \events{}$, 
\begin{itemize}
\item
$P^{\TS{}}_{1s}((\cf{Q_1,\sigma_1},\vr{x}),\{\sev{a}\},C) =
P^{\TS{}}_{1s}((\cf{Q_2,\sigma_2},\vr{x}),\{\sev{a}\},\match_B(C))$ 
\item $P^{\TD{}}_{1s}((\cf{Q_1,\sigma_1},\vr{x}),\{\ev{a}\},C) =
P^{\TD{}}_{1s}((\cf{Q_2,\sigma_2},\vr{x}),\{\ev{a}\},\match_B(C))$
\end{itemize}
Since $B$ is the identity relation over $\mathbb{R}^n$, each equivalence
class $C$ is of the form $[\cf{Q'_1,\sigma'_1}] \times \{\vr{x}\}$ and
$\match_B(C)$ is of the form $[\cf{Q'_2,\sigma'_2}] \times \{\vr{x}\}$,
where $Q'_1 \thicksim_\mathbf{sm}^{\doteq} Q'_2$.
The first item follows from the
fact that $r(\cf{Q_1,\sigma_1},\sev{a},[\cf{Q'_1,\sigma'_1}]) =
r(\cf{Q_2,\sigma_2},\sev{a},[\cf{Q'_2,\sigma'_2}])$ and reset functions
depend only on $\sev{a}$.  The second is a consequence of the fact
that $P^{\TD{}}_{1s}((\cf{Q_i,\sigma_i},\vr{x}),\{\ev{a}\},C)$ is
proportional to the number of distinct derivatives in $C$ that
$\cf{Q_i,\sigma_i}$ has after an $\ev{a}$ event. Since each
$\cf{Q_i,\sigma_i}$ must be able to match the transitions of the
other, they have the same number of distinct derivatives.  Hence
we can conclude that $B$ is a labelled TDSHA bisimulation.  \end{proof}

The converse of this theorem does not hold since $\doteq$ is defined
for a specific variable, whereas the bisimulation that is constructed
for the TDSHA can sum across multiple variables.
Consider the following simple counterexample with two one-mode
stochastic HYPE systems, with two continuous variables $X$ and $Y$, two influences
($\iota_X$ acting on $X$ and $\iota_Y$ acting on $Y$), and no events.
\smallskip

$\begin{array}{rclcrcl}
A_1 & \rmdef & \ev{init}\pc(\iota_X,a,\const) & &
B_1 & \rmdef & \ev{init}\pc(\iota_Y,b,\const) \\
A_2 & \rmdef & \ev{init}\pc(\iota_X,a+b,\const) & &
B_2 & \rmdef & \ev{init}\pc(\iota_Y,0,\const) 
\end{array}$

\smallskip
\noindent
The respective controlled systems are $P_i = A_i \syncstar B_i$
corresponding to the ODE systems $\frac{d}{dt}(X,Y) = (a,b)$ and
$\frac{d}{dt}(X,Y) = (a+b,0)$.  Looking at the TDSHA, it is easy
to see that $\TDSHA{P_1}\sim_T^\ell \TDSHA{P_2}$, as the equivalence
relation $B = \{((x,y),(x+y,0))\}$ is a TDSHA bisimulation (using
$\out_1(X,Y) = X+Y$ and $\out_2(X,Y) = X$ as output functions,
conditions 1 and 4 follow easily, while the others are trivially
true as there is no discrete jump). However, it does not hold that
$P_1 \thicksim_\mathbf{sm}^{\doteq} P_2$, as $\doteq$ requires the
ODEs in each matching mode to be the same.

\subsection{TDHSA bisimulation applied to the example}

We know consider how TDHSA bisimulation can be used for the example.
There are different ways to implement the timing of the system.
We can remove the $\Timer_i$ subcomponents from within $\Sys$ and
add the following new timer component
$\Timer \: \rmdef \: \ev{init}\pc(t,1,\const).\Timer$ with $\iv(t)
= T$,  $T$ a new variable (and without influences $t_i$ but keeping
variables $T_i$). 
Various event conditions must be modified as follows
\begin{eqnarray*}
\ec(\ev{init}) & = & (true,\rsdefa{P}{P_0} \:\wedge\: \rsdefa{T_1}{0}
\:\wedge\: \rsdefa{T_2}{0} \:\wedge\: \rsdefa{T}{0} \:\wedge\:
\rsdefa{B}{B_0} \:\wedge\: \rsdefa{M}{0}) \\
\ec(\ev{assem}_i) & = & (T \geq T'_i+\assemtime_i, \rsdefa{B}{B\!+\!m_i}) 
\end{eqnarray*}
Denote this new system by $\Assem_{T} = Sys_T
\syncstar \ev{init}.\Con$
where 
\begin{eqnarray*}
\Sys_{T}  & \rmdef &  (\Feed_1
\syncstar \Feed_2 \syncstar Feed_3) \syncstar \Inspect \syncstar \\
& & Machine_1(W_1) \syncstar \Machine_2(W_2) \syncstar \Timer. 
\end{eqnarray*}

We can show that $\TDSHA{}(\Assem_T) \sim^l_T \TDSHA{}(\Assem)$. In
order to do this, first observe that the LTSs of the two stochastic
HYPE models are isomorphic. This holds because in each configuration
$\cf{P,\sigma}$ the state $\sigma$ is determined by the local state
of the controlled system $P$, specifically by the local state of
the controller in $P$.  Hence, the map $\rho:\cf{\Sys \syncstar
\Con,\sigma_1} \mapsto \cf{\Sys_T \syncstar \Con,\sigma_T}$, where
$\sigma_1$ and $\sigma_T$ depend on $\Con$, is a bijection.  Given
a configuration $\cf{P,\sigma}$, we write  $\AProc_i$ in $P$ if and
only if $P = \Sys_1 \syncstar (AProc_i\syncstar\Con')$.

\begin{figure}[t]
\begin{center}
\begin{tabular}{c}
\includegraphics[width=8.5cm]{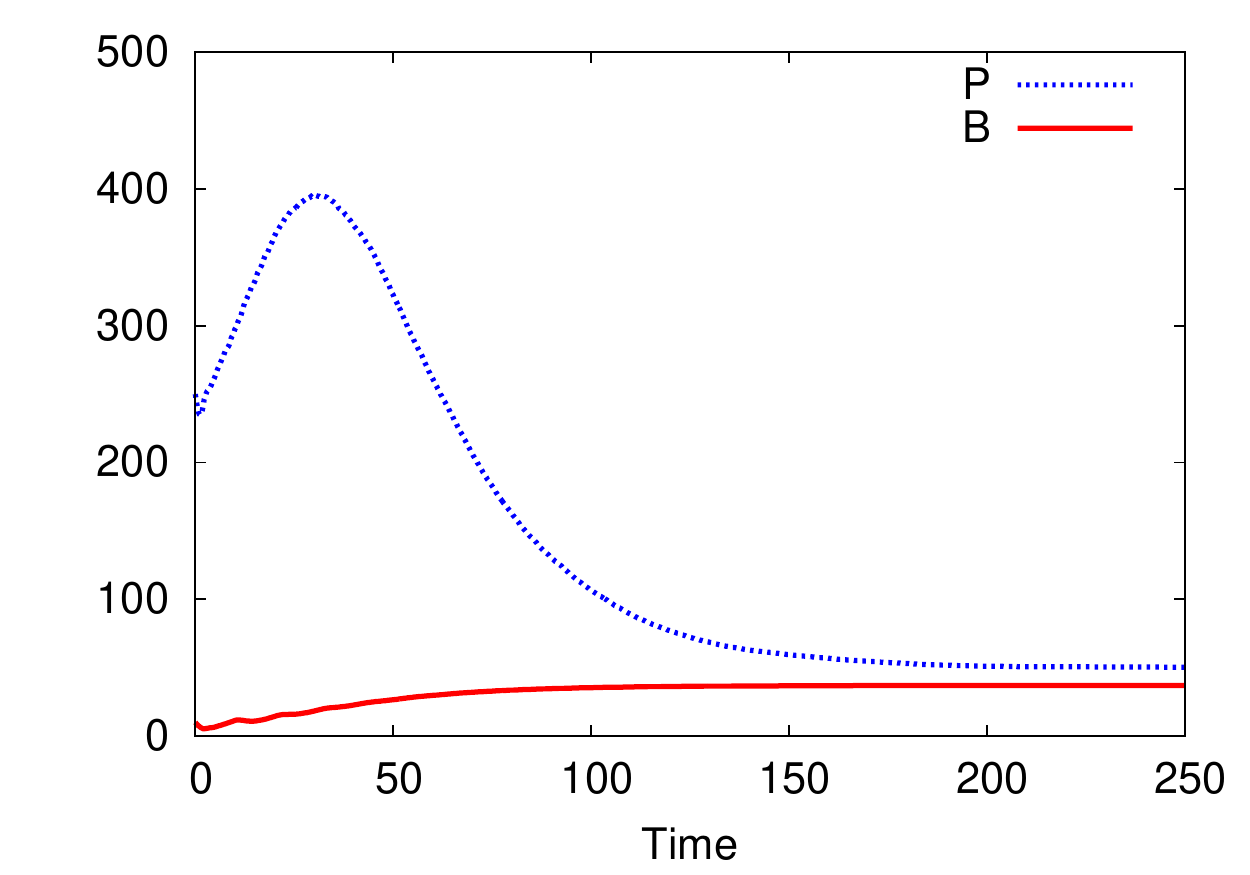}
\end{tabular}
\end{center}
\caption{Average values for $\Assem_T$ over 10000 simulations using the 
same parameters as in Figure~\ref{fig:parrun}.}
\label{fig:SimConT}
\end{figure}

When we map these stochastic HYPE models to the corresponding TDSHA,
we therefore obtain two automata with the same discrete skeleton
and the same event set, but with different variables, different
ODEs within modes, and different guards and resets for some events.
To prove that they are bisimilar, we need to exhibit a measurable
relation $B$ satisfying the conditions of Definition
\ref{def:TDSHAlabeledBisimulation}.  Let $(\cf{P,\sigma},\vr{x},s_1,s_2)\in
Q\times\bbR^6$ be a state of TDSHA $\TDSHA{}(\Assem)$, where $s_i$
is the value of timer $T_i$, and $(\rho(\cf{P,\sigma}),\vr{x},t,t_1,t_2)\in
Q\times\bbR^7$ be a state of TDSHA $\TDSHA{}(\Assem_T)$, where $t$
is the value of $T$, and $t_i$ is the value of variable $T_i$. Now,
if $\AProc_i$ in $P$ (and only in this case), then it is easy to
check that the value of $T_i$ in $\TDSHA{}(\Assem)$ has to be equal
to $T-T_i$ in $\TDSHA{}(\Assem_T)$, as both expressions measure the
time elapsed since the firing of event $\ev{remove}_i$. This suggests
the following relation
which is easily seen to be measurable.
\[ B = \{((\cf{P,\sigma},\vr{x},s_1,s_2),(\rho(\cf{P,\sigma}),\vr{x},t,t_1,t_2))~|~ \AProc_i\ \mathrm{in}\ P \Rightarrow t-t_i = s_i\}, \]
In order for $B$ to be a TDSHA bisimulation, we need to ignore timer
values while comparing states. This is obtained by taking the $\out$
functions to be the projections over the remaining variables:
$\out_1(\vr{x},s_1,s_2) = \vr{x}$ and $\out_T(\vr{x},t,t_1,t_2) =
\vr{x}$. Considering Definition \ref{def:TDSHAlabeledBisimulation},
condition 1 follows from the fact that the vector fields, restricted to
non-timer variables, coincide. Condition 2 is trivial, as only
instantaneous events have been modified, while the condition 3 on
guards is a consequence of the definition of $B$ in the states where
timer $i$ is active. In particular, the activation time $t^*$
coincides in both models, and so condition 4 follows. Finally,
condition 5 stems from the isomorphism of LTS and the fact that the
variables of $T_i$ are both reset after event $\ev{remove}_i$.

By contrast, the two HYPE models $\Assem$ and $\Assem_T$
are trivially not system bisimilar. This is easily seen by inspecting
Definition \ref{def:systemBisim}, which requires variable sets and
event condition to be the same in both models.

\end{appendix}
\end{document}